\newtheorem{proposition}{Proposition}
\newtheorem{remark}{Remark}
\newtheorem{example-continued}{Example}[section]
\DeclareMathOperator*{\argmin}{arg\,min}
\author{
  Simon Freyaldenhoven\\
  \textit{Federal Reserve Bank of Philadelphia}
  \and
  Christian Hansen\\ 
  \textit{University of Chicago\thanks{We thank Thorsten Drautzburg, Lutz Killian, Christian Leuz, Paul Mohnen, Matthew Notowidigdo, Mikkel Plagborg-M\o{}ller, Jon Roth, Jesse Shapiro, as well as audiences at the Federal Reserve Bank of Philadelphia, the University of Bonn, Georgetown University, Maastricht University, and NY Camp Econometrics for comments. The views expressed herein are those of the authors and do not necessarily reflect the views of the Federal Reserve Bank of Philadelphia or the Federal Reserve System. Emails: \href{mailto:simon.freyaldenhoven@phil.frb.org}{simon.freyaldenhoven@phil.frb.org}, \href{mailto:chansen1@chicagobooth.edu}{chansen1@chicagobooth.edu}}}
}
\title{(Visualizing) Plausible Treatment Effect Paths}
\date{\today}
\begin{document}

\maketitle

\vspace{-36pt}

\begin{abstract}
\vspace{-8pt}
\noindent
We consider point estimation and inference for the treatment effect path of a policy. Examples include dynamic treatment effects in microeconomics, impulse response functions in macroeconomics, and event study paths in finance.
We present two sets of plausible bounds to quantify and visualize the uncertainty associated with this object. 
Both plausible bounds are often substantially tighter than traditional confidence intervals, and can provide useful insights even when traditional (uniform) confidence bands appear uninformative. Our bounds can also lead to markedly different conclusions when there is significant correlation in the estimates, reflecting the fact that traditional confidence bands can be ineffective at visualizing the impact of such correlation.
Our first set of bounds covers the average (or overall) effect rather than the entire treatment path.
Our second set of bounds imposes data-driven smoothness restrictions on the treatment path. Post-selection Inference (\cite{berk2013valid}) provides formal coverage guarantees for these bounds. The chosen restrictions also imply novel point estimates that perform well across our simulations. 
\end{abstract}

JEL Codes: C12, C13

\textsc{Keywords}: dynamic treatment effects, post-selection inference, uniform inference

\thispagestyle{empty}

\newpage
\setcounter{page}{1}

\section{Introduction}\label{sec:idea}

We are interested in the treatment effect path of a policy at discrete horizons $h = 1,...,H$. Examples include dynamic treatment effects in microeconomics, impulse response functions in macroeconomics, and event study paths in finance.
We write $\beta=\{\beta_h\}_{h={1}}^{H}$ for the vector that collects this dynamic treatment effect path up to the fixed maximum horizon of interest $H$.
We assume access to point estimates of the parameters $\beta_h$, denoted by $\hat{\beta}_h$, that correspond to the cumulative effect of the policy at horizon $h=1, \ldots H$. Throughout, we assume the vector that collects the estimated dynamic treatment effect path, $\hat\beta$, satisfies $\hat\beta \sim N(\beta,V_{\beta})$ and that we have access to the covariance matrix $V_{\beta}$. Leading examples to obtain such estimates include distributed lag models, local projections, and event studies.\footnote{We abstract away from approximation issues, but note that standard asymptotic approximations within these settings along with access to consistent asymptotic variance estimators motivate this setup.}
We consider both point estimation and uncertainty quantification, though our focus will be on the latter. In particular, we introduce two approaches to visualize the uncertainty about the treatment path, which we call \textit{cumulative} and \textit{restricted} plausible bounds. Both bounds are often substantially tighter than traditional confidence intervals, and can provide useful insights even when traditional (uniform) confidence bands appear uninformative.

The standard approach in economics to quantify and visualize the uncertainty associated with parameter estimates is to construct confidence regions. 
Intuitively, a confidence region visualizes to the reader what values of the parameter, in this case $\beta$, are ``plausible'' based on the observed data. The idea being that values inside this region appear plausible, while values outside of the region do not. 
The two predominant confidence regions in practice are pointwise and sup-t confidence regions (e.g. \cite{callaway2021}; \cite{jorda2023local}; \cite{boxell2024}). 
A third alternative is the Wald confidence region $CR^{Wald}$. This region simply collects all parameter values $b$ that are not rejected by a standard Wald test of the null hypothesis that $\beta = b$ at level $\alpha$. 
While a confidence region constructed from pointwise confidence intervals does not achieve correct coverage for the vector $\beta$, both sup-t and Wald confidence regions achieve valid coverage: $\mathbb{P}(\beta \in CR^{Wald})= \mathbb{P}(\beta \in CR^{sup-t}) = (1-\alpha)$.\footnote{We discuss these regions, and their construction, in more detail in Appendix \ref{app-sec:crs}. For further discussion of uniform confidence bands, and, in particular, the merits of sup-t confidence bands, also see \cite{freyberger2018} and \cite{olea:pm}.}

\begin{figure}[tb]
		\centering
		\includegraphics[width=.5\linewidth]{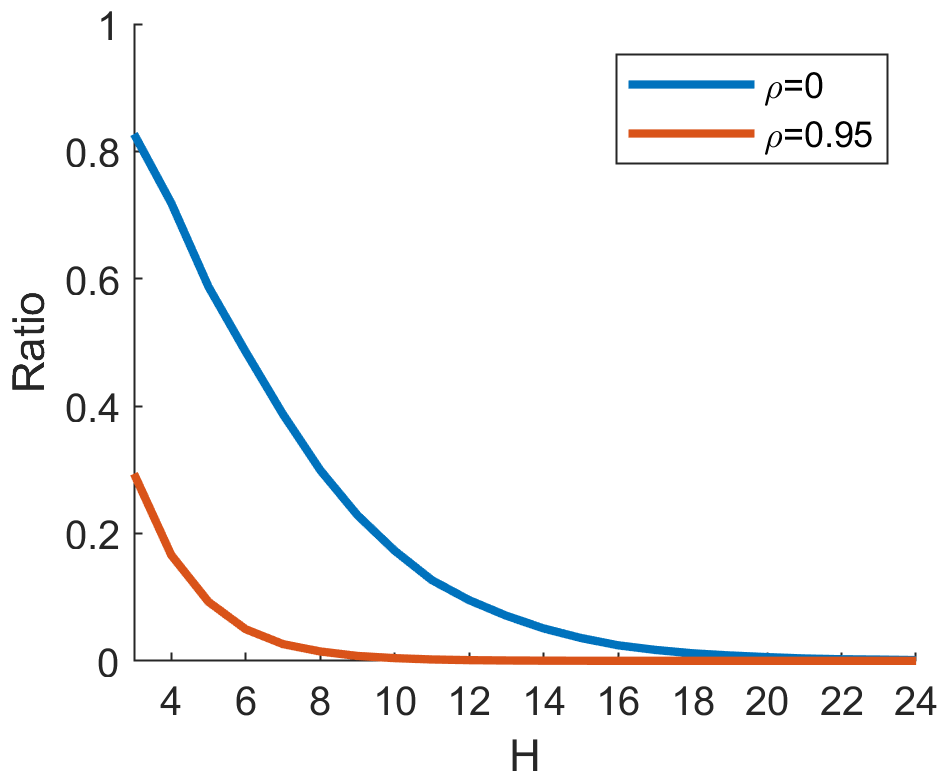}
		\caption{Volume of Wald confidence region relative to sup-t confidence region as a function of $H=\dim(\beta)$. $V_\beta$ is a symmetric Toeplitz matrix with entries $v_{ij}=\rho^{|i-j|}$.}
		\label{fig:volume}
\end{figure}

Sup-t and Wald confidence regions both come with some advantages and disadvantages.
Since the Wald region is an ellipsoid, a disadvantage of the Wald confidence region is that it becomes infeasible to visualize in higher dimensions (i.e. when $H > 3$).
The sup-t confidence region has the advantage of being easy to visualize. However, the volume of the sup-t confidence region quickly explodes relative to the volume of the Wald region. We illustrate this difference in confidence region volume in Figure \ref{fig:volume}, which plots the volume of the Wald confidence region relative to the volume of the sup-t confidence region as a function of the dimension $H$ for two exemplary covariance matrices $V_{\beta}$.  We see that the volume of the sup-t region tends to be orders of magnitude larger than the volume of the Wald region for the typical horizon that is depicted in event studies and impulse responses.\footnote{In what follows, we refer to the visualizations of $\hat{\beta}$ simply as treatment effect plots.} When $V_{\beta}$ is the identity matrix ($\rho=0$), the relative volume of the Wald region is less than 10\% and around 0.1\% of the volume of the sup-t region for $H=12$ and $H=24$ respectively. These numbers are generally even smaller if the entries in $\hat{\beta}$ have non-zero correlation: When $V_{\beta}$ is a symmetric Toeplitz matrix with entries $v_{ij}=0.95^{|i-j|}$, this ratio drops to 3.5\% and 0.0001\% for $H=12$ and $H=24$ respectively. One immediate consequence is that, for even moderate horizons $H$, the overwhelming majority of paths inside the sup-t bands 
would be rejected by a simple joint hypothesis test. This property seems unappealing to us and serves as a first indication that sup-t confidence bands may not always be appropriate for visualizing what dynamic treatment effect paths are plausible.

\begin{figure}[tb]
	\begin{subfigure}[t]{.49\textwidth}
		\centering
		\includegraphics[width=\linewidth]{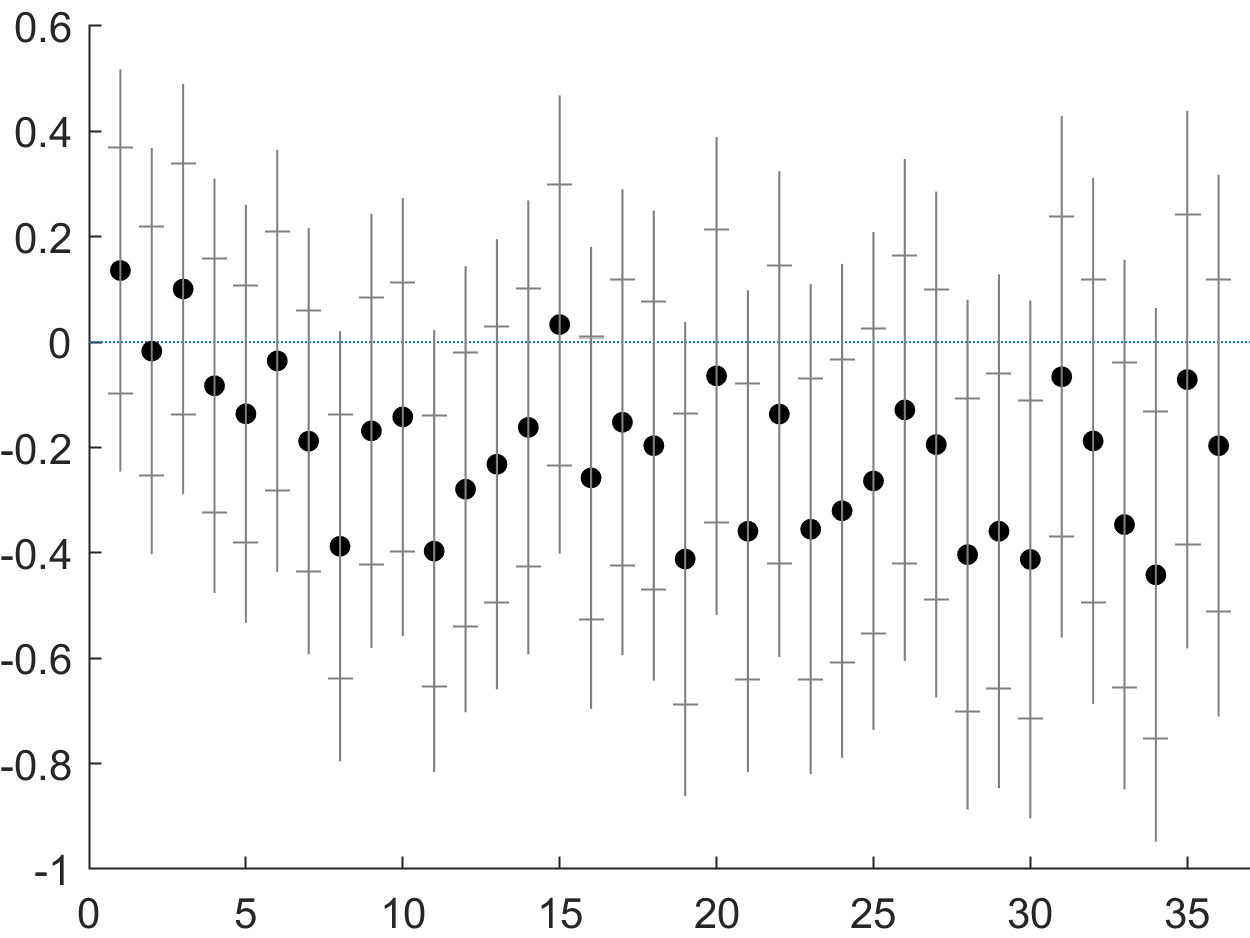}
		\caption{Simulated data with $Cov(\hat{\beta}_i,\hat{\beta}_j)=0$ for $i \ne j$. 
		}
		\label{fig:example_existing}
	\end{subfigure}\hfill
	\begin{subfigure}[t]{.49\linewidth}
		\centering
		\includegraphics[width=\linewidth]{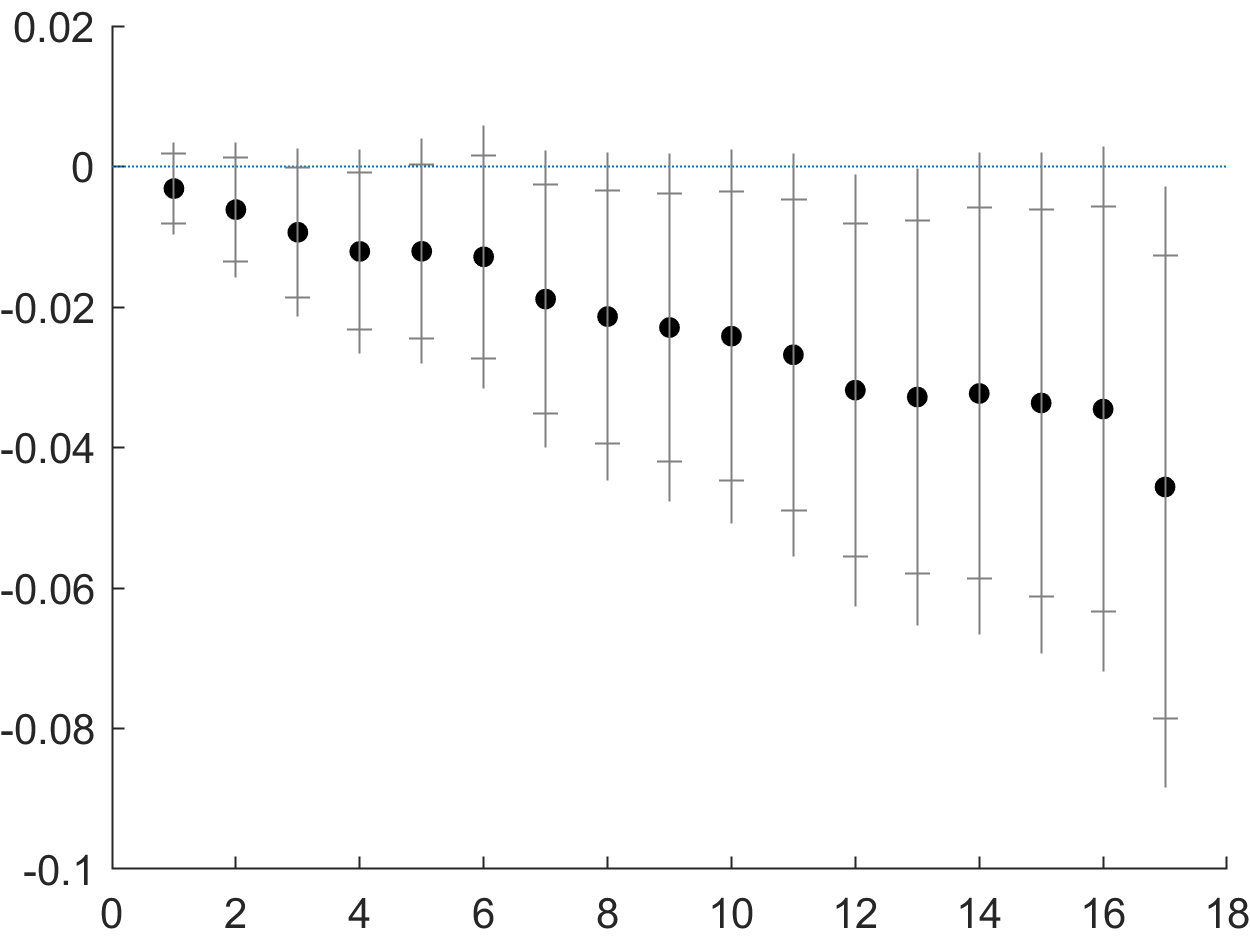}
		\caption{Replicated from \cite{bosch2014} (Figure 4b).}
		\label{fig:roth_example_existing}
	\end{subfigure}
	\caption{Two exemplary treatment effect plots, including point estimates, and pointwise and sup-t confidence intervals.}
	\label{fig:illustration}
\end{figure}

To illustrate this further, Figure \ref{fig:illustration} depicts two exemplary treatment effect plots.  The object of interest is the treatment path of a policy over the depicted horizon. We have access to jointly normal estimates $\{\hat{\beta}_h\}_{h=1}^{H}$, with observed point estimates $\hat{\beta}$ given by the black dots. Both panels further include pointwise 95 percent confidence intervals (inner confidence set as indicated by the dashes) and uniform 95 percent sup-t confidence bands (outer confidence set). While the pointwise confidence intervals only permit testing of pre-selected hypotheses for individual coefficients $\beta_h$, the sup-t bands contain the entire true path $\beta$ in 95 percent of realized samples.
Figure \ref{fig:example_existing} depicts a hypothetical example with zero correlation between the estimated coefficients.\footnote{We give more detail on the underlying DGP in Section \ref{sec:sim}.} 
Figure \ref{fig:roth_example_existing} is based on the same estimates as Figure 4b in \cite{bosch2014}. In this example, all off-diagonal entries in $V_{\beta}$ are positive, and the average correlation between adjacent coefficients is 0.95.

The sup-t region in Figure \ref{fig:example_existing} includes treatment paths that imply an overall positive effect (paths with $\sum_{h=1}^H \beta_h > 0$) and treatment paths with very different shapes. In fact, $\beta=0$ falls inside the sup-t bands, suggesting that the null of ``no treatment effect" is plausible. However, a joint test of the null hypothesis that $\beta=0$ yields a p-value of $1.54 \times 10^{-9}$. 
In contrast, $\beta=0$ falls outside the sup-t bands in Figure \ref{fig:roth_example_existing}, suggesting that the null of ``no treatment effect" is not plausible. However, a joint test of the null hypothesis that $\beta=0$ yields a p-value of $0.33$.
These discrepancies between the easy-to-visualize sup-t region and the results of simple joint hypothesis tests again suggest to us that the sup-t confidence region may not always be providing an empirically effective visualization of what treatment effect paths are plausible.

\begin{figure}[tbhp!]
	\begin{subfigure}[t]{.45\textwidth}
		\centering
		\includegraphics[width=\linewidth]{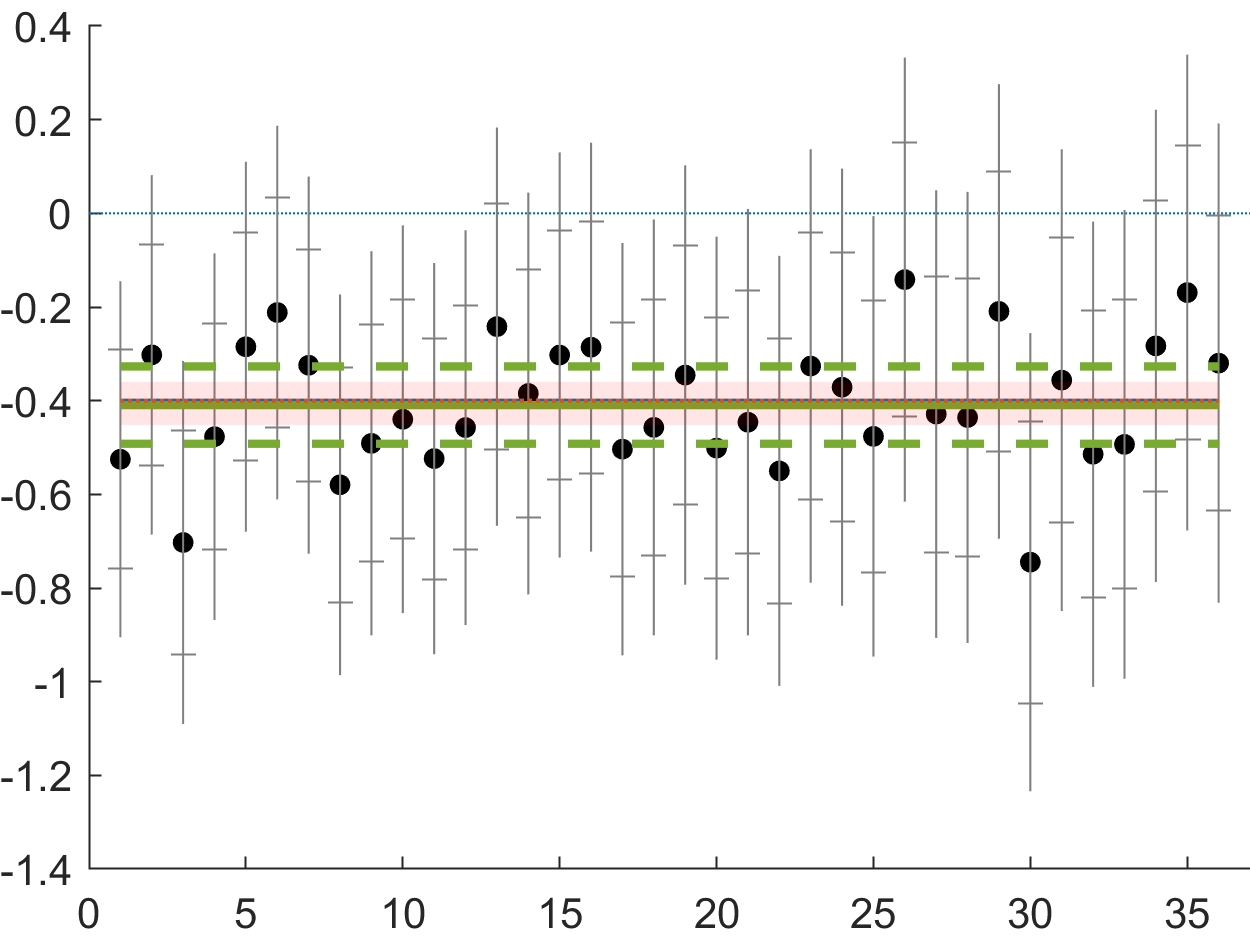}
		\caption{treatment path constant}
		\label{fig:new_viz_constant}
	\end{subfigure}\hfill
	\begin{subfigure}[t]{.45\linewidth}
		\centering
		\includegraphics[width=\linewidth]{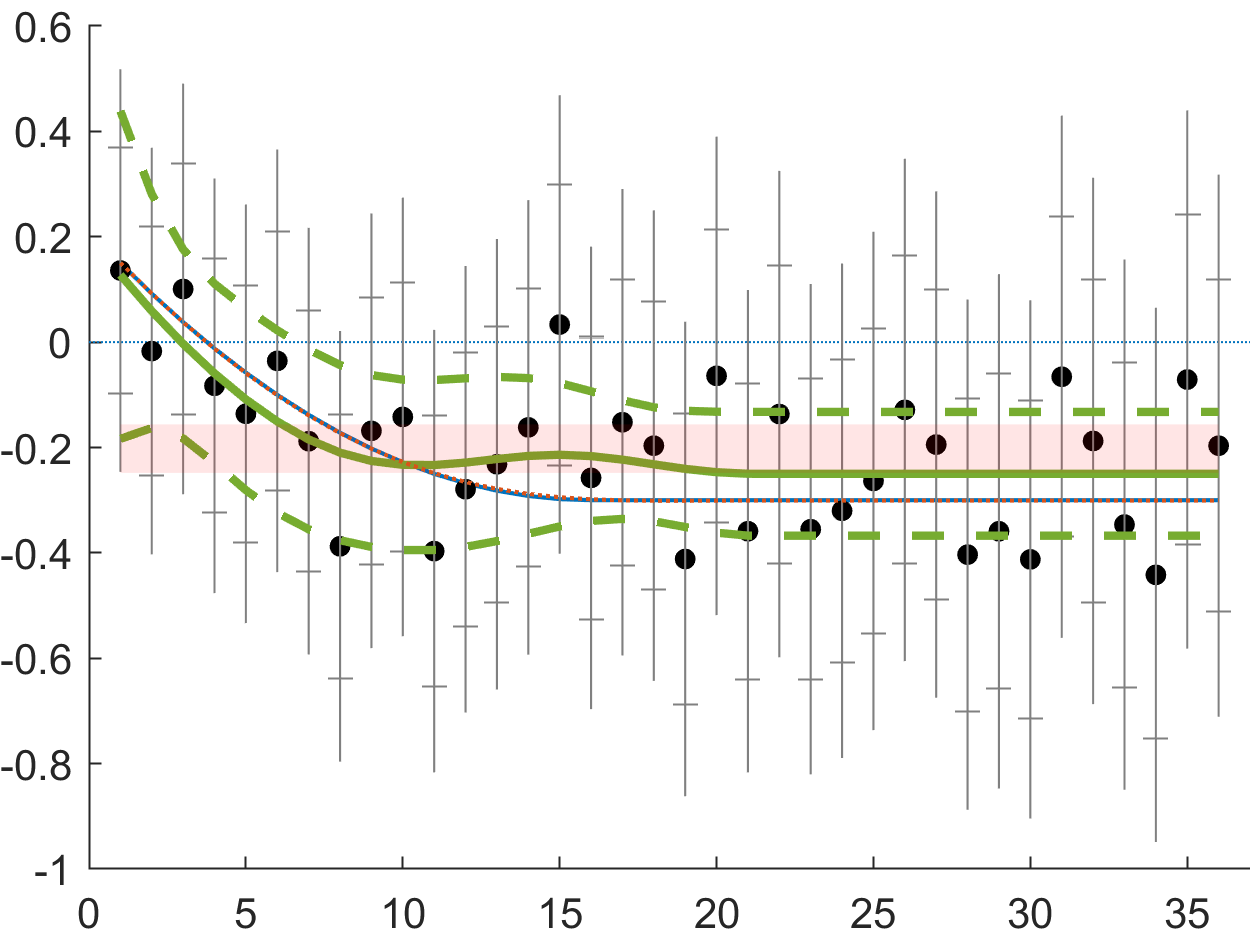}
		\caption{treatment path smooth, eventually flat}
		\label{fig:new_viz_quadratic}
	\end{subfigure}
	
	\begin{subfigure}[t]{.45\textwidth}
		\centering
		\includegraphics[width=\linewidth]{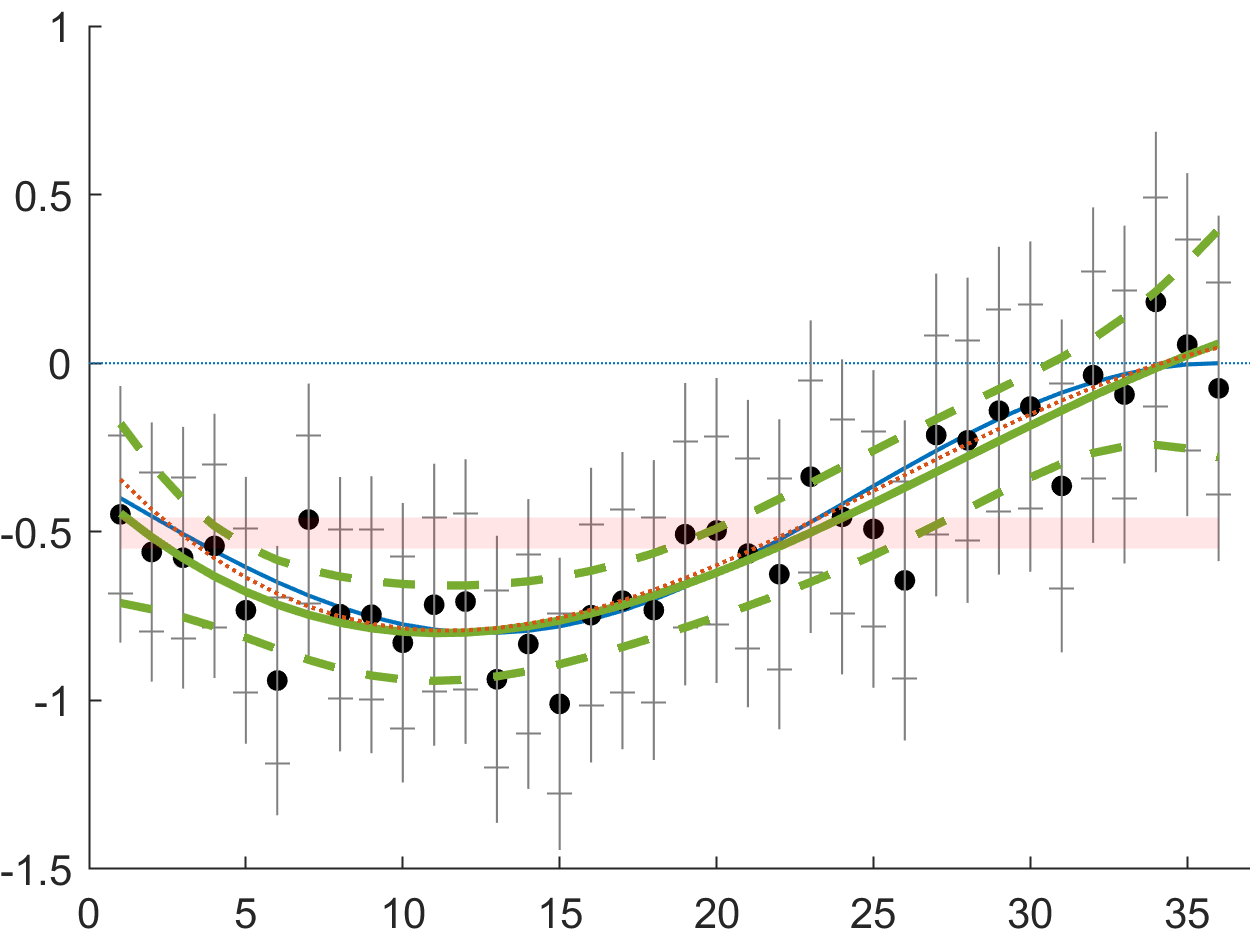}
		\caption{treatment path hump-shaped}
		\label{fig:new_viz_no_flat}
	\end{subfigure}\hfill
	\begin{subfigure}[t]{.45\linewidth}
		\centering
		\includegraphics[width=\linewidth]{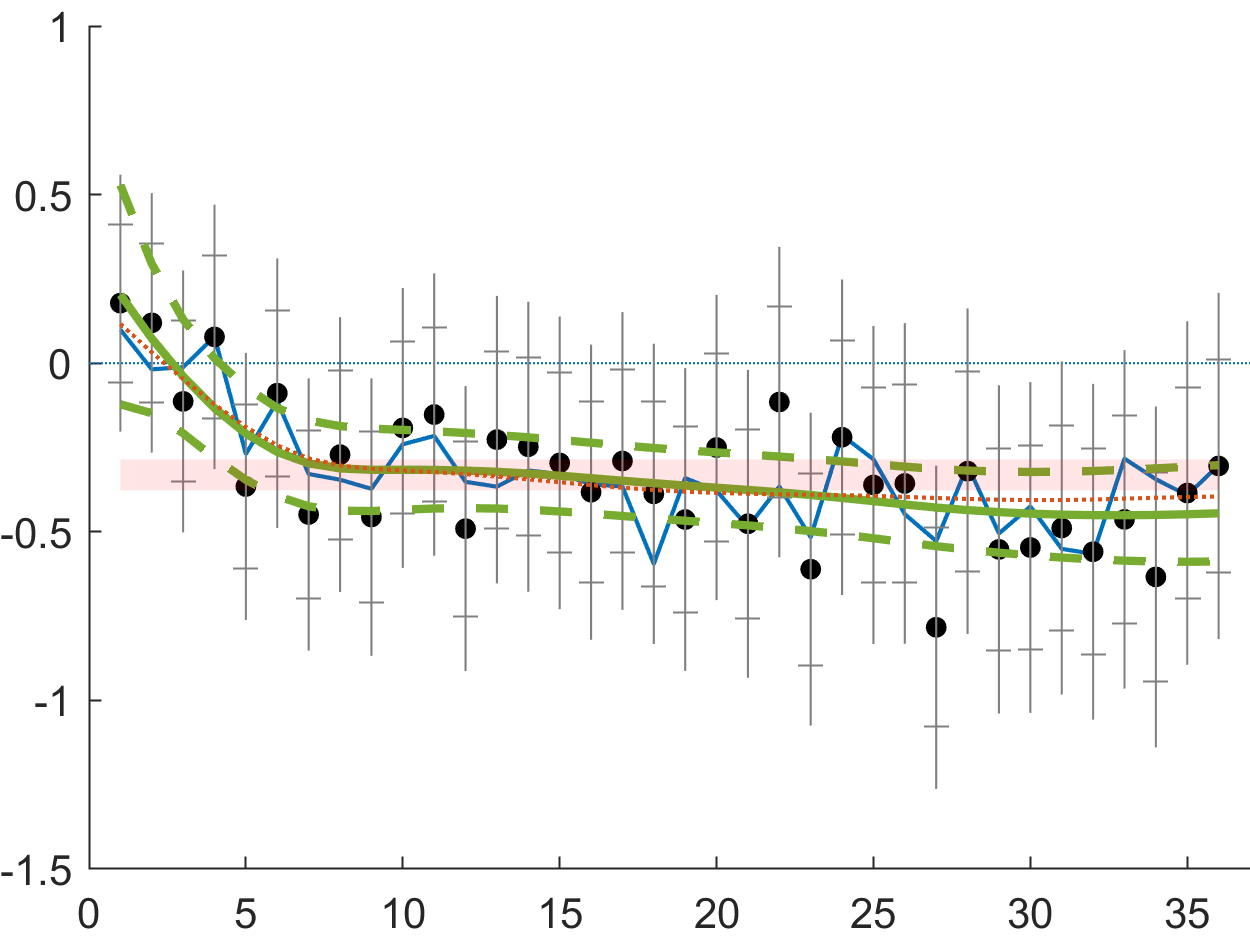}
		\caption{treatment path wiggly}
		\label{fig:new_viz_wiggly}
	\end{subfigure}

		\begin{subfigure}[t]{.45\textwidth}
		\centering
		\includegraphics[width=\linewidth]{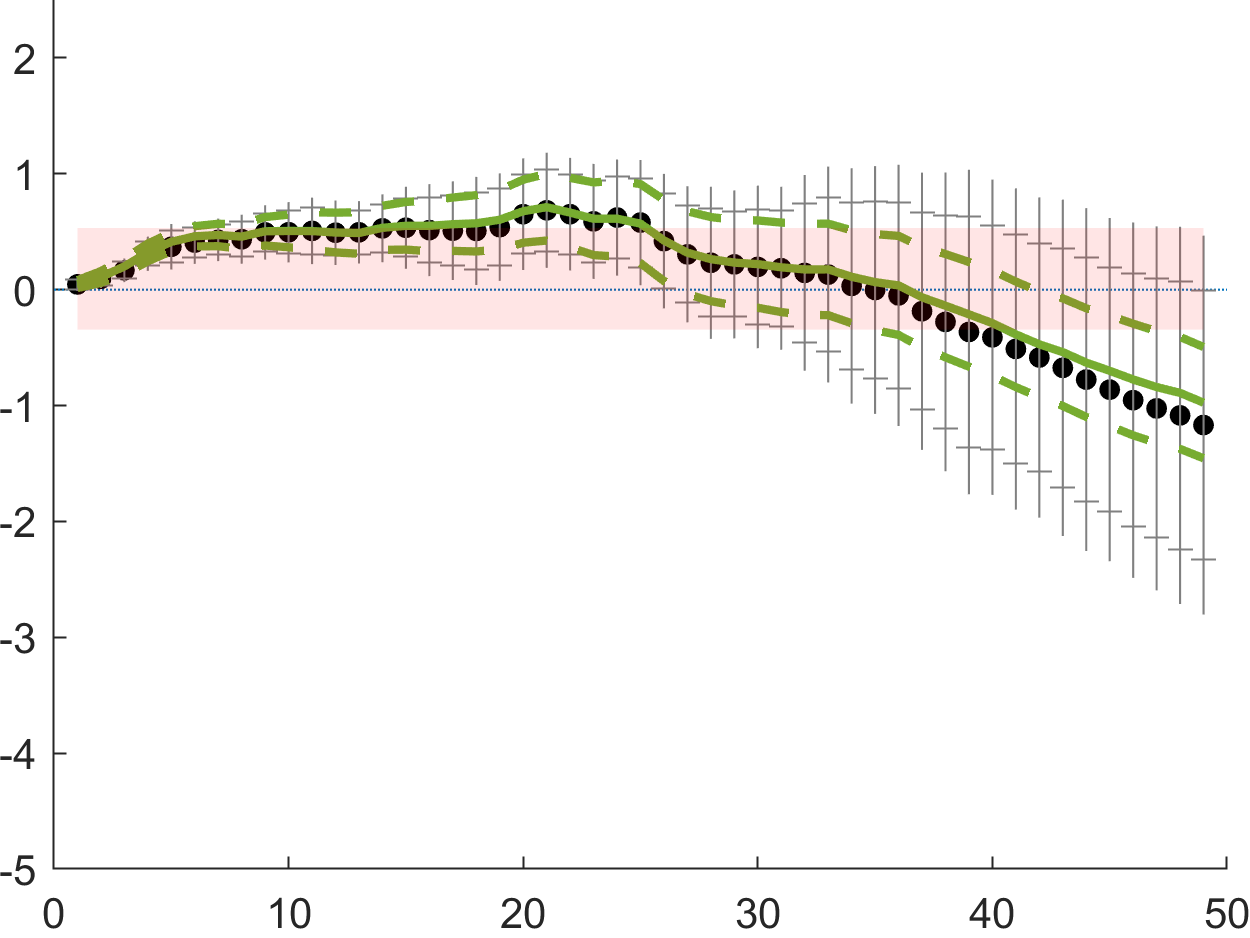}
		\caption{Replicated from \cite{nakamura2018identification} (Figure A.2)}
		\label{fig:new_viz_nakamura}
	\end{subfigure}\hfill
	\begin{subfigure}[t]{.45\linewidth}
		\centering
		\includegraphics[width=\linewidth]{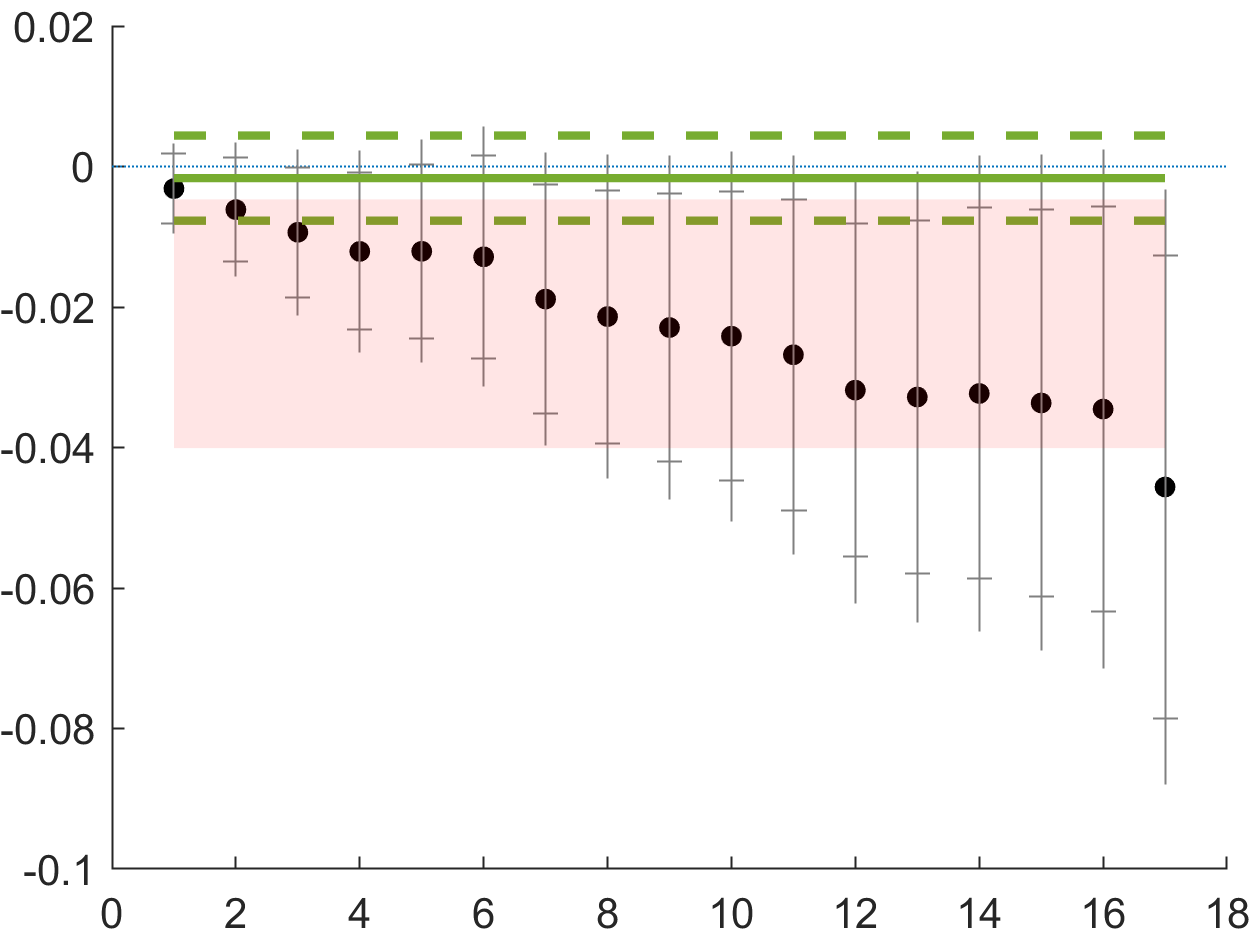}
		\caption{Replicated from \cite{bosch2014} (Figure 4b)}
		\label{fig:new_viz_bosch}
	\end{subfigure}
	\caption{Exemplary treatment effect plots including our proposals. Our proposed visualization includes two additional objects. The shaded red areas provides the \emph{cumulative plausible bounds}. The dashed green lines provide the \emph{restricted plausible bounds}, and the thick solid green line provides the corresponding restricted estimates. In the simulated panels a)-d), we further include the true treatment path (thin blue line) and its \emph{surrogate} (dotted red line).}
	\label{fig:new_viz}
\end{figure}

In Figure \ref{fig:new_viz}, we therefore introduce two alternative ways to visualize plausible treatment effect paths. 
Panels \ref{fig:new_viz_constant}-\ref{fig:new_viz_wiggly} are based on different simulated data generating processes, while panels \ref{fig:new_viz_nakamura} and \ref{fig:new_viz_bosch} are based on published figures in Macroenomics (\cite{nakamura2018identification}) and Applied Micro (\cite{bosch2014}). Elements from the ``standard'' treatment effect plot are provided by the black dots which give point estimates of the treatment effect at each horizon and by the inner and outer confidence intervals corresponding respectively to the usual pointwise and sup-t confidence intervals.\footnote{Figure \ref{fig:new_viz_quadratic} is based on the same estimates $\hat{\beta}$ as Figure \ref{fig:example_existing}. Figure \ref{fig:new_viz_bosch} is based on the same estimates $\hat{\beta}$ as Figure \ref{fig:roth_example_existing}.} In addition, panels \ref{fig:new_viz_constant}-\ref{fig:new_viz_wiggly} include a solid blue line to represent the true treatment effect path and an additional dotted red line, which we explain below. Finally, each plot includes two new features: (i) the shaded red area, and (ii) the dashed and solid green lines. Importantly, these new features shift the goal posts relative to the Wald and sup-t bounds, and the inferential target of our bounds is \emph{not} the true treatment path.

The shaded red area represents our proposed 95\% \emph{cumulative plausible bounds}. We construct these bounds so that the average treatment effect across the depicted horizons will be within these bounds for 95\% of all realizations of the data. 
For example, in Figure \ref{fig:new_viz_quadratic}, these bounds suggest that the average effect of the policy over the 36 periods depicted is between (-0.248, -0.156), and thus that the overall effect of the policy over the 36 periods is strictly negative and inside the window (-8.93, -5.62). In contrast to the standard sup-t region, these bounds suggest that a treatment path  with no overall effect of the policy is not plausible. These cumulative plausible bounds have an alternative interpretation in terms of the overall treatment effect path. Specifically, the cumulative plausible bounds are such that the treatment path $\beta$ (depicted as the solid blue line) will \emph{on average} be within these bounds for 95\% of all realizations of the data.

The dashed green lines represent our proposed 95\% \emph{restricted plausible bounds}, which are centered around \emph{restricted estimates} provided by the solid green line. These restricted estimates and bounds are motivated by envisioning a researcher who is interested in understanding key features of the treatment effect path but is not concerned with necessarily covering the entire true path at every horizon. However, we also imagine the researcher as being \emph{ex ante} unsure about what the important features are and wanting to use the data to help select a restricted model for summarizing the treatment effect path. 

More concretely, we construct the restricted estimates and plausible bounds by using a statistical model selection procedure to select an approximating model from within a pre-specified universe of candidates. We consider a default set of models motivated by a preference for smooth dynamics that eventually die out induced by shrinking first and third differences of $\hat\beta$, though we note that the procedure could be applied with any finite, pre-specified universe of models. The restricted estimates are then simply the point estimates of the treatment path based on the selected model. We construct the restricted plausible bounds to provide uniform (95\%) coverage accounting for data-dependent model selection by applying Berk et al.'s [\citeyear{berk2013valid}] Post-Selection Inference (PoSI) to our setting. 

Looking at the restricted estimates and restricted plausible bounds in each panel paints a starkly different picture compared to the sup-t intervals. In all cases, the restricted plausible bounds are relatively narrow and seemingly quite informative about the broad features of the treatment effect paths. Figure \ref{fig:new_viz_bosch} stands out and merits further discussion. In this instance, our model selection procedure selects a constant treatment effects model. Our restricted estimates then coincide with the MLE estimate of a constant treatment effects model.\footnote{That is, a model with $\hat\beta \sim N(\beta,V_{\beta})$, where $\beta$ is constant across $h$.} Remarkably, this estimate is $-0.0017$, which is outside of the convex hull of the individual estimates $\hat{\beta}_h$.\footnote{Intuitively, this behavior results from the strong positive correlation in the estimates combined with more precise estimates in early periods. We note that this strong positive correlation in the estimates cannot be inferred from the traditional plot.} Further, a Wald test of the null hypothesis that $\beta_h=-0.0017$ for $h=1, \ldots, H$ gives a p-value of 0.36. That is, a traditional joint hypothesis test suggests there is relatively little evidence against this hypothesis, which thus appears relatively ``plausible,'' contrary to what a visual inspection of the traditional treatment effect plot might suggest. This difference underscores that traditional treatment effect plots may be ineffective at visualizing the impact of the off-diagonal entries in $V_{\beta}$ (we illustrate this further in Appendix \ref{app-sec:crs}). In contrast, the entire covariance matrix $V_{\beta}$ is reflected in our restricted plausible bounds. Accounting for the covariance structure can lead to interestingly different results, improving the informativeness of these plots.

Finally, we reiterate that the inferential target of the restricted plausible bounds in the population is \emph{not} the true treatment path. Rather, the restricted plausible bounds provide uniform coverage of a \emph{surrogate} path given by the approximation that would be obtained by applying the selected model to the true effect path. We depict the selected surrogate for each of the simulated scenarios in Figure \ref{fig:new_viz} with a dotted red line. In Figure \ref{fig:new_viz_constant} and Figure \ref{fig:new_viz_quadratic}, this surrogate is indistinguishable from the true treatment path. In Figure \ref{fig:new_viz_no_flat} and Figure \ref{fig:new_viz_wiggly}, the surrogate differs from the true treatment path but visually captures what seem to be key features of the overall treatment path. Indeed, we suspect many empirical researchers, if given the true treatment path from Figure \ref{fig:new_viz_wiggly}, would actually be more interested in the smooth approximation provided by the surrogate in this case. That is, we view the fact that the restricted plausible bounds cover a data-dependent approximation to the population treatment effect path as a potentially appealing feature. 

In summary, we propose augmenting standard event-study plots with two additional elements: a shaded red region (the cumulative plausible bounds) and dashed and solid green lines (the restricted plausible bounds and estimates, respectively). Together with the usual pointwise and sup-t intervals, these visualizations offer a more comprehensive view of plausible effect paths, each serving distinct inferential purposes. Sup-t bands provide a simple, assumption-free summary of plausible paths. Pointwise intervals target effects at specific horizons. Cumulative bounds inform average treatment effects, while restricted bounds and estimates capture approximations of the effect path obtained using data-driven smoothing.

Our paper connects to several strands in the literature. We obtain our restricted plausible bounds and estimates by considering a finite set of candidate models for $\beta$. This approach is thus closely related to work that considers parametric models and approximations to $\beta$. Such approaches have been studied going back at least to \cite{almon1965}, who imposes a parametric model on distributed lag coefficients. More recently, \cite{barnichon2018functional} propose approximating impulse responses with a set of basis functions, and \cite{barnichon2019} propose to shrink impulse response estimates towards polynomials. While related, we differ from these approaches by focusing on inference for a data-dependent surrogate effect path; see, for example, \cite{genovese2008adaptive} for a general discussion of inference on surrogates.

We operationalize our restricted plausible bounds by using data-dependent selection from a universe of candidate models with different fixed degrees of shrinkage over first and third differences. This model universe is closely related to the structure employed in \cite{shiller1973} which takes a fully Bayesian approach to estimating a distributed lag model under a normal prior on the $d^{\textrm{th}}$ order difference of $\beta$. Our restricted estimates are thus akin to point estimates that could be obtained by taking an empirical Bayes approach within the framework of \cite{shiller1973}. From the empirical Bayes perspective, one could then potentially adapt \cite{armstrong2022robust} to the present context to obtain interval estimates.\footnote{Also see the SmIRF estimator of \cite{plagborg2016essays} for a related approach that includes confidence sets with guaranteed frequentist coverage.}
In contrast, our restricted plausible bounds provide frequentist coverage for the population value of the selected surrogate path. 

To maintain coverage guarantees for the selected surrogate, accounting for data dependent model selection, we use a version of post-selection inference (PoSI) confidence intervals (\cite{berk2013valid}). Given that our inferential target is the population value of the selected model, we note that one could adopt other approaches from the literature on selective inference; see, e.g., \cite{taylor2015statistical} and \cite{kuchibhotla2022post} for excellent reviews.

There are a variety of other approaches to quantifying and visualizing uncertainty about treatment effect paths available in the literature.  For example, \cite{sims1999error} argues that conventional pointwise bands common in the literature should be supplemented with measures of shape uncertainty, and proposes such measures. \cite{jorda2009simultaneous} suggests a method to construct simultaneous confidence regions for impulse responses given propagation trajectories. \cite{freyberger2018inference} propose a uniformly valid inference method for an unknown function or parameter vector satisfying certain shape restrictions. More generally, inference for the treatment effect path is tightly tied to more general nonparametric inference problems; see, e.g., \cite{chenadaptive2024} for an interesting recent example that explicitly allows for use of a data-dependent sieve dimension. The ``shotgun plot'' of \cite{inoue2016}, which depicts a random sample of $B$ impulse responses contained in the joint Wald confidence set, provides an alternative approach to visualizing plausible treatment effect paths. We believe our proposal to provide simple additional visual elements to the usual treatment effect plot provides a useful complement to this existing literature.

\section{Cumulative Plausible Bounds}

We first present a simple visual feature, the \emph{cumulative plausible bounds}, that can be added to a standard treatment effect plot. This visualization does not impose any functional form or smoothness assumptions on the underlying treatment path, but, in terms of the full treatment effect path, it also does not achieve uniform coverage. Rather, the cumulative plausible bounds use a weaker notion of ``cumulative coverage": The true treatment path will \textit{on average} be within the cumulative plausible bounds in ($1-\alpha$)\% of all realizations of the data for a given significance level $\alpha$. 
That is, by providing valid inference for the average effect over the horizon $H$, our cumulative plausible bounds provide a simple visual element that conveys uncertainty about the average treatment effect. 

These cumulative plausible bounds are simply visualizations of the dynamic treatment path corresponding to the largest and smallest sum of all treatment effects up to horizon $H$ not rejected by a standard hypothesis test. They can be interpreted as boundary paths that would be consistent with the upper and lower limits of a confidence interval for the overall effect of the policy over $H$ periods.
Formally, let 
\begin{align}\label{eq:wald}
	u^{1-\alpha} = \max \sum_{h=1}^H \beta^*_h \qquad \text{s.t.} \ (\beta^*-\hat{\beta})'V_{\beta}^{-1}(\beta^*-\hat{\beta}) = \kappa^{(1-\alpha)},
\end{align}
where $\kappa^{(1-\alpha)}$ denotes the inverse of the chi-square cdf with one degree of freedom at chosen significance level $(1-\alpha)$. We further define $l^{1-\alpha}$ analogously, replacing the $\max$ in \eqref{eq:wald} with $\min$.
Since both $u^{1-\alpha}$ and $l^{1-\alpha}$, corresponding to the upper and lower limit of the overall effect are scalars, there are infinitely many treatment paths that correspond to these bounds on the overall treatment effect. To visualize the bounds, we use
$(U,L)^{1-\alpha}=\{U^{1-\alpha}_h, L^{1-\alpha}_h\}_{h=1}^H$, where $U^{1-\alpha}_h = \frac{u^{1-\alpha}}{H}$ and $L^{1-\alpha}_h = \frac{l^{1-\alpha}}{H}$.\footnote{Instead of using bounds $(U,L)^{1-\alpha}$ that are constant across $h$, one could alternatively depict bounds that reflect the shape of the unrestricted estimates.} We choose this visualization as the interval $(U,L)^{1-\alpha}$ is a $(1-\alpha)$\% Wald confidence interval for the average effect of the policy over the horizon $H$, $\frac{1}{H} \sum_{h=1}^H \beta_h$. 

The following trivial proposition clarifies how coverage of these cumulative plausible bounds relates to coverage of the treatment path. 

\begin{proposition}\label{prop:wald}
	The true treatment path $\beta$ will on average (over H) be within the cumulative plausible bounds for ($1-\alpha$) of all realizations: 
	\begin{align}
		\mathbb{P}\left(\frac{\sum_{h=1}^H L^{1-\alpha}_h}{H}< \frac{\sum_{h=1}^H \beta_h}{H}  < \frac{\sum_{h=1}^H U^{1-\alpha}_h}{H} \right) = (1-\alpha).
	\end{align}
\end{proposition}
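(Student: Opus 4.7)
The plan is to reduce the proposition to the standard coverage statement for a one-dimensional Wald confidence interval for the scalar parameter $1'\beta = \sum_{h=1}^H \beta_h$. Since $U_h^{1-\alpha}=u^{1-\alpha}/H$ and $L_h^{1-\alpha}=l^{1-\alpha}/H$ are constants across $h$, we have $\sum_h U_h^{1-\alpha}/H = u^{1-\alpha}/H$ and similarly for $L$, so the event in the proposition is equivalent to $l^{1-\alpha} < 1'\beta < u^{1-\alpha}$. It therefore suffices to show that $(l^{1-\alpha}, u^{1-\alpha})$ coincides with the usual $(1-\alpha)$ Wald interval for $1'\beta$.

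The first step is to solve the optimization in \eqref{eq:wald} in closed form. Change variables to $\delta = \beta^* - \hat\beta$ so that
\begin{equation*}
u^{1-\alpha} = 1'\hat\beta + \max_{\delta}\, 1'\delta \quad \text{s.t.}\quad \delta' V_\beta^{-1} \delta = \kappa^{(1-\alpha)}.
\end{equation*}
Using Lagrange multipliers (or Cauchy--Schwarz in the inner product induced by $V_\beta^{-1}$), the maximum is attained at $\delta^\star \propto V_\beta 1$ and evaluates to $\sqrt{\kappa^{(1-\alpha)} \cdot 1'V_\beta 1}$. Hence
\begin{equation*}
u^{1-\alpha} = 1'\hat\beta + \sqrt{\kappa^{(1-\alpha)}\,1'V_\beta 1}, \qquad l^{1-\alpha} = 1'\hat\beta - \sqrt{\kappa^{(1-\alpha)}\,1'V_\beta 1}.
\end{equation*}

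The second step is to invoke the distributional assumption $\hat\beta \sim N(\beta, V_\beta)$, which implies $1'\hat\beta - 1'\beta \sim N(0, 1'V_\beta 1)$ and therefore $(1'\hat\beta - 1'\beta)^2 / (1'V_\beta 1) \sim \chi^2_1$. The event $\{l^{1-\alpha} < 1'\beta < u^{1-\alpha}\}$ is exactly $\{(1'\hat\beta - 1'\beta)^2 / (1'V_\beta 1) < \kappa^{(1-\alpha)}\}$, which has probability $1-\alpha$ by the defining quantile property of $\kappa^{(1-\alpha)}$. Dividing both sides by $H$ inside the event gives the form stated in the proposition.

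There is no real obstacle here; the only mild subtlety is verifying that the optimization in \eqref{eq:wald} indeed delivers the symmetric Wald interval for the linear combination $1'\beta$, which follows from the Lagrangian first-order conditions and the fact that $V_\beta$ is positive definite. I would also note that the inequalities can be taken as strict (versus weak) without affecting the probability, since $1'\hat\beta - 1'\beta$ has a continuous distribution.
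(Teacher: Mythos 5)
Your proof is correct and follows the same route as the paper's: reduce the statement to coverage of the scalar Wald interval for $\sum_h \beta_h$, which holds by the defining quantile property of $\kappa^{(1-\alpha)}$. The only difference is that you explicitly solve the constrained optimization in \eqref{eq:wald} to verify that $u^{1-\alpha}$ and $l^{1-\alpha}$ are the symmetric Wald endpoints $1'\hat\beta \pm \sqrt{\kappa^{(1-\alpha)}\,1'V_\beta 1}$, a step the paper's proof leaves implicit under the phrase ``by construction.''
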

\begin{proof}
	By construction, a Wald test on the cumulative treatment effect with significance level $(1-\alpha)$ will reject 
	\begin{enumerate}
		\item[a)] any treatment path $\tilde \beta_h$ with $\sum_{h=1}^H \tilde \beta_h > \sum_{h=1}^H U^{1-\alpha}_h$,
		\item[b)] any treatment path $\tilde \beta_h$ with $\sum_{h=1}^H \tilde \beta_h < \sum_{h=1}^H L^{1-\alpha}_h$.
	\end{enumerate}
	Since a Wald test for the cumulative effect has correct size, the result follows immediately.
\end{proof}

Proposition \ref{prop:wald} states that, for a given significance level $\alpha$, the true treatment path will \emph{on average} be within our bounds for ($1-\alpha$)\% of all realizations. This follows immediately from the fact that any path that is not, on average, inside the cumulative plausible bounds implies an overall treatment effect over $H$ periods that is rejected by the corresponding hypothesis test.

\section{Restricted Plausible Bounds}

The second idea we pursue is to present confidence regions that cover approximations of the true effect path that have ``reasonable shapes." We term these confidence regions \emph{restricted plausible bounds}. Here, we define ``reasonable shapes'' by pre-specifying a universe of models. We then use data-dependent model selection to choose a good representation for $\hat{\beta}$ from among this set. Intuitively, this approach is related to directly imposing a functional form restriction as is often done in empirical work, for example by
\begin{itemize}
\item specifying a parametric model for $\beta$, e.g. imposing a constant treatment effects model ($\beta_{h}=\beta_{h'} \ \forall h, h'$),
\item aggregating the underlying dataset over time (e.g. monthly to quarterly), which effectively restricts $\beta$ to ``step functions,''
\item estimating an impulse response function (IRF) via a vector auto regression (VAR), which restricts the IRF to functional forms compatible with the chosen VAR (cf. the discussions in \cite{li2024local} and \cite{olea2024double}).
\end{itemize}

One key feature of our approach is that we do not rely on a fixed functional form restriction or make use of some other implicit or ad hoc device to choose a restricted model. Rather, we select a model, and then take model selection explicitly into account when constructing confidence bounds. That is, we propose a model selection procedure that is explicit, transparent, and will allow us to maintain formal coverage guarantees instead of implicitly using the data to select a restricted model, which leads to invalid inference. 

Before we formally define our proposal, we introduce some necessary notation. We first borrow from the nonparametric statistics literature to introduce the notion of a ``surrogate" (cf. \cite{genovese2008adaptive}).  A surrogate path $\beta_M$ is close to, but potentially simpler than, $\beta$. We note that the surrogate path is a population object that approximates $\beta$, the true treatment path.
For example, we may define a constant treatment effects surrogate of $\beta$ as $\beta_s=\argmin_{b} \ (\beta - b)'(\beta - b) \ \text{ s.t. } \Delta b =0$. If the surrogate model $M$ is fixed a priori (and not itself a function of the data), inference for $\beta_M$ is straightforward, though we stress that any inferential statements in this case will be about $\beta_M$ and not $\beta$.\footnote{Targeting a simple surrogate function is akin to the standard approach in economics of estimating linear models even when the conditional expectation function is not believed to be linear. One can think of the linear model as a ``surrogate model'' capturing the best linear predictor. Inference will then be about the linear surrogate, and not the ``truth.''}  However, failing to take into account that the data is used to select the surrogate creates a problem for inference (e.g. \cite{leeb2005} or \cite{roth2022pretest}). In our setting the surrogate is explicitly a function of the data (or more precisely, of the unrestricted estimates $\hat{\beta}$), and we may thus write $\beta_{M(\hat{\beta})}$ to denote a data dependent surrogate path. In a first step, we use the data to select the surrogate model. In a second step, we then create a uniformly valid confidence region for the selected surrogate path, taking into account that the choice of surrogate is also random (i.e. a function of the data).

Given that we are doing model selection from a specified universe of models, a key choice is the specific model universe we consider. We consider a model universe motivated by the following economic intuition:
\begin{enumerate}
\item The dynamics of the treatment effect die off eventually. That is, after $K$ periods, the treatment effect is constant. We treat $K$ as unknown and allow $K$ to be as large as $H$, thus allowing dynamics across the entire depicted horizon.
\item The dynamic treatment path is ``smooth,'' where we measure smoothness using the third differences of the treatment path.
\end{enumerate}
In practice, we use shrinkage over first and third differences of $\hat\beta$ to implement 1. and 2.

Formally, we assume that the estimates of the treatment path $\hat{\beta}$ are jointly normal with $\hat{\beta} \sim N(\beta,V_{\beta})$, where $V_{\beta}= \sigma^2 V$, $\sigma^2=\frac{1}{H}\sum_{h=1}^H V_{\beta}(h,h)$, and $V$ is positive-definite. Taking $\hat{\beta}$ as input, we define the following object:
\begin{equation}\label{eq:obj}
\begin{aligned}
  \tilde\beta(\lambda_1,\lambda_2,K) &= \arg\min_{b} \ Q(b,\lambda_1,\lambda_2,K) \\
   &= \arg\min_{b} \ \underbrace{(\hat{\beta} - b)'V^{-1}(\hat{\beta} - b)}_{\text{distance from }\hat{\beta}} + \lambda_1 \underbrace{b'D_1'W_1(K)D_1 b}_{\substack{\text{penalty on first difference} \\ \text{after horizon K}}} +
   \lambda_2 \underbrace{b'D_3'W_3 D_3 b,}_{\substack{\text{penalty on} \\ \text{third difference}}}
\end{aligned}
\end{equation}
where 
\begin{itemize}
	\item $D_1$ and $D_3$ are the $(H \times H-1)$ and $(H \times H-3)$ first and third difference operators, 
	\item $\eta = (\lambda_1$, $\lambda_2$, $K$) are tuning parameters, 
	\item $W_1(K)$ and $W_3$ are weighting matrices where $W_1(K)$ only places weight on first differences for horizon $K \leq H$ and beyond (See Appendix \ref{app-sec:algorithm} for further details.).
\end{itemize}

Solving \eqref{eq:obj} provides a closed form solution\footnote{For intuition, note that the problem in \eqref{eq:obj} is closely related to the following constrained optimization with tuning parameters $c_1$, $c_2$, and $K$, explicitly bounding the first and third difference:
	\begin{align*}
		\hat\beta(c_1,c_2,K)  &= \arg\min_{b} \ \underbrace{(\hat{\beta} - b)'V^{-1}(\hat{\beta} - b)}_{\text{distance from }\hat{\beta}} \\
		&\text{such that} \quad \underbrace{b'D_1'W_1(K)D_1 b \le c_1}_{\substack{\text{small first difference}, \\ \text{after horizon K}}} &\text{and} \qquad \underbrace{b'D_3'W_2 D_3 b \le c_2}_{\text{small third difference}}. 
	\end{align*}
	However, this formulation does not have a closed form solution and is computationally more challenging, making it less appealing in practice.} 
for $\tilde\beta(\lambda_1,\lambda_2,K) := \tilde\beta(M)$ given by
\begin{align*}
  \tilde\beta(M) &= \left(V^{-1} + \lambda_1 D_1'W_1(K)D_1 + \lambda_2 D_3'W_3 D_3\right)^{-1} V^{-1}\hat{\beta} \\
  &= P(M) \hat{\beta}.
\end{align*}
For fixed $M = (\lambda_1, \lambda_2, K)$, it  immediately follows that 
\begin{align}\label{eq:normal}
	\tilde\beta(M) - P(M)\beta \sim N(0, V_{M}),
\end{align}
where $V_{M}=P(M) V_{\beta}P(M)'$.
Here, $P(M)\beta= \beta_M$ defines a particular surrogate path for $\beta$. Intuitively, $P(M)\beta$ corresponds to a ``projection'' of the true treatment path $\beta$ into a lower dimensional space. 
Given \eqref{eq:normal}, it would be straightforward to construct a confidence region for $\{\beta_{M,h}\}_{h=1}^H$, where $\beta_{M,h}$ denotes the $h^{\text{th}}$ entry in vector $\beta_{M}$, for a given, fixed value of the tuning parameters. However, knowing \textit{ex ante} what values to use for $\lambda_1$, $\lambda_2$, and $K$ seems challenging. We thus use model selection to choose $\lambda_1$, $\lambda_2$, and $K$ --- or, equivalently, to choose the surrogate model $M$. 

Specifically, we use the estimated $\hat\beta$ and an object akin to an information criterion to select the surrogate $M$. 
First, note that we can construct the ``residuals" $\hat\beta - \tilde\beta(M) = \hat\beta - P(M)\hat\beta = (I-P(M))\hat \beta$.
We use this residual formulation to define an analog of model degrees of freedom given by df$(M)$ = trace$(P(M))$.
We then select a model that minimizes a BIC analog over $\mathcal{M}$ where $\mathcal{M}$ denotes the universe of values for $M=(\lambda_1,\lambda_2,K)$: $$
\hat{M} = \argmin_{M \in \mathcal{M}} (\hat{\beta} - \tilde\beta(M))'V_{\beta}^{-1}(\hat{\beta} - \tilde\beta(M)) + \log(H) \text{df}(M).$$

We tie the researcher's hands by pre-specifying $\mathcal{M}$, the universe of models considered. In our implementation, $\mathcal{M}$ includes surrogate models corresponding to a constant, linear, quadratic and cubic treatment effect path (with one, two, three, and four degrees of freedom respectively), as well as an unrestricted model corresponding to the unrestricted estimates $\hat{\beta}$ (with $H$ degrees of freedom). $\mathcal{M}$ further includes surrogate models corresponding to surrogate paths of the form $P(M)\beta=\beta_M$ using a grid over $(\lambda_1,\lambda_2,K)$. We discuss our implementation in more detail in Appendix \ref{app-sec:algorithm} and visualize the model universe $\mathcal{M}$ for four exemplary treatment paths $\beta$ in Online Appendix Figure \ref{app-fig:model_universe}, but note that $\mathcal{M}$ does not depend on $\hat\beta$ or $\sigma$. 

\begin{remark}
	One could use other model universes and shrinkage methods. Examples of alternative approaches include \cite{barnichon2018functional} and \cite{barnichon2019}. We have chosen a class that we believe will be a reasonable representation of beliefs in many applications. Our approach with this model class is also particularly easy computationally, which allows us to nest a large universe of models. Likewise, one could select the surrogate model $M$ using methods other than minimizing our BIC analog. We found that the specific structure and estimation we employ performed well across our simulations and believe it provides a good default. 
\end{remark}

Given the selected surrogate $\hat{M}$, we define the \emph{restricted estimates} as $\tilde\beta(\hat{M})$. However, we cannot directly apply \eqref{eq:normal} to obtain a valid confidence region for the population value of the surrogate path $\beta_{\hat{M}}=P(\hat{M})\beta$ because $\hat{M}$ was selected by looking at the data, $\hat\beta$. Thus, in a second step, we use Valid Post-Selection Inference (\cite{berk2013valid}), which explicitly accounts for data-dependent (and thus random) model selection, to construct a uniformly valid confidence region for $\beta_{\hat{M}}$.  
These confidence intervals, our \emph{restricted plausible bounds}, are rectangular regions of the form $CR^{POSI} = \{\ell_h(X), u_h(X)\}_{h=1}^H$ for $[\ell_h(X), u_h(X)] = [\tilde{\beta}(\hat{M})_h \pm C^{\alpha} V_{\hat{M}}^{1/2}(h,h)]$ where $\tilde{\beta}(\hat{M})_h$ denotes the restricted estimate of the effect at horizon $h$ and $V_{\hat{M}}^{1/2}(h,h)$ is the square root of the $h^{\text{th}}$ diagonal entry of $V_{\hat{M}}$. 
To ensure uniform validity we use the ``PoSI constant'' of \cite{berk2013valid} as $C^\alpha$, defined as the minimal value that satisfies
 \begin{align*}
\mathbb{P}\left( \max_{M\in \mathcal{M}} \max_{h} |t_{h \cdot M}| \le C^\alpha \right) \ge (1-\alpha),
 \end{align*}
where $t_{h \cdot M} =  V_{M}^{-1/2}(h,h) \xi_h$, and $\xi_h$ is the $h^{th}$ element of multivariate normal vector $\xi$ with mean $\mathbf{0}_H$ and variance $V_M$.
Importantly, $C^{\alpha}$ depends on $\mathcal{M}$, the universe of models considered, but not on the model selection procedure.

The following proposition is a direct application of \cite{berk2013valid}.
\begin{proposition}\label{prop:surrogate}
For any treatment path $\beta$, we obtain valid coverage for its surrogate $\beta_{\hat{M}}$:
\begin{align}
\mathbb{P}[\beta_{\hat{M}} \in CR^{POSI}] \ge 1-\alpha. 
\end{align}
\end{proposition}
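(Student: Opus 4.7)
The plan is to apply the standard PoSI screening argument: since $C^{\alpha}$ is constructed to give simultaneous control across the entire finite universe $\mathcal{M}$, the coverage guarantee is automatic under any selection rule $\hat{M} = \hat{M}(\hat{\beta})$ that takes values in $\mathcal{M}$.

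First, I would fix $M \in \mathcal{M}$ and note that by \eqref{eq:normal}, $\tilde{\beta}(M) - P(M)\beta = P(M)(\hat\beta - \beta) \sim N(0, V_M)$. Consequently, the studentized deviations
\begin{equation*}
    t_{h \cdot M} = V_M^{-1/2}(h,h)\bigl(\tilde{\beta}(M)_h - \beta_{M,h}\bigr)
\end{equation*}
are each marginally standard normal. This matches the definition of $t_{h\cdot M}$ given just before the proposition (with $\xi = \tilde{\beta}(M) - P(M)\beta$), so the distribution used to pin down $C^{\alpha}$ is the correct one.

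Second, I would unpack what it means for $\beta_{\hat{M}}$ to lie in $CR^{POSI}$. By the definition of the interval endpoints, $\beta_{\hat{M}} \in CR^{POSI}$ if and only if $|\tilde{\beta}(\hat{M})_h - \beta_{\hat{M},h}| \le C^{\alpha} V_{\hat{M}}^{1/2}(h,h)$ for every $h$, i.e.\ $\max_h |t_{h \cdot \hat{M}}| \le C^{\alpha}$. The key observation is then the trivial bound
\begin{equation*}
    \max_{h} |t_{h \cdot \hat{M}}| \le \max_{M \in \mathcal{M}} \max_{h} |t_{h \cdot M}|,
\end{equation*}
which holds pointwise on the sample space because $\hat{M} \in \mathcal{M}$ with probability one. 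This inequality is what insulates coverage from the particular rule used to pick $\hat{M}$ --- no conditioning or measurability subtleties are needed beyond $\mathcal{M}$ being finite (or, more generally, the simultaneous-max event being well defined).

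Finally, combining the two displays yields
\begin{equation*}
    \mathbb{P}\bigl[\beta_{\hat{M}} \in CR^{POSI}\bigr] = \mathbb{P}\Bigl(\max_{h} |t_{h \cdot \hat{M}}| \le C^{\alpha}\Bigr) \ge \mathbb{P}\Bigl(\max_{M \in \mathcal{M}} \max_{h} |t_{h \cdot M}| \le C^{\alpha}\Bigr) \ge 1 - \alpha,
\end{equation*}
where the last inequality is the defining property of the PoSI constant. There is no genuine obstacle here: the content of the proposition is entirely in the construction of $C^{\alpha}$ via the simultaneous maximum over $\mathcal{M}$, and the ``proof'' is simply verifying that the screening inequality applies because $\hat{M}$ is guaranteed to land in $\mathcal{M}$. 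I would close by remarking (following Berk et al.) that the bound is agnostic to the selection procedure, which is why the BIC-based rule defining $\hat{M}$ never enters the argument.
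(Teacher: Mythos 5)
Your proof is correct and is essentially the paper's proof with the black box opened: the paper simply cites the guarantee of \cite{berk2013valid}, and the screening inequality $\max_h |t_{h\cdot \hat M}| \le \max_{M\in\mathcal{M}}\max_h |t_{h\cdot M}|$ that you spell out (together with the identification $\xi = P(M)(\hat\beta-\beta)$, which makes the joint coupling across models explicit) is exactly the argument behind that guarantee. If anything, your marginal version is the more faithful rendition of what PoSI delivers --- the conditional form $\mathbb{P}(\beta_M \in CR^{POSI} \mid \hat M = M) \ge 1-\alpha$ quoted in the paper's proof is a stronger statement than the screening bound itself provides, whereas the unconditional conclusion you derive is precisely what the proposition asserts.
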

\begin{proof}
This follows immediately from the guarantees in \cite{berk2013valid}: 
\[
	\mathbb{P}(\beta_M \in CR^{POSI} | \hat{M}=M) \ge 1-\alpha.
\]
\end{proof}
Proposition \ref{prop:surrogate} guarantees that our restricted plausible bounds cover the selected surrogate to the truth in at least $(1-\alpha)$\% of sample realizations. 

\begin{remark}\label{rem:truth}
An immediate consequence of Proposition \ref{prop:surrogate} is that $\mathbb{P}(\beta \in CR^{POSI}) \ge 1-\alpha$ if $\mathbb{P}\left(\beta_{\hat{M}}=\beta_M=\beta\right)=1$. That is, in cases where model selection is effectively non-random and the selected surrogate path coincides with $\beta$, the restricted plausible bounds will also provide valid coverage for the true treatment path. Given the form of our BIC type objective for selecting $\hat{M}$ and that the unrestricted estimates are always included in our default model universe, one could provide conditions for $\mathbb{P}\left(\beta_{\hat{M}}=\beta_M=\beta\right)=1$ under a sequence of models where $\sigma^2 \rightarrow 0$ and surrogate paths were well-separated -- e.g. where $\|\beta - \beta_M\| \geq \delta > 0$ for all candidate models $M$ such that $\beta_M \neq \beta$. While technically possible, we i) question the utility of this perspective in offering a useful finite sample approximation and ii) view the surrogate as an economically interesting summary of the treatment path in itself.
\end{remark}

\begin{remark}
Throughout, we work with the unrestricted estimates $\hat{\beta}$. Our motivation is to augment visualizations of treatment effect paths within the setting where $\hat\beta \sim N(\beta,V_{\beta})$ provides a reasonable approximation. We further note that the interpretation of treatment effect plots as currently displayed with point estimates and pointwise confidence intervals essentially relies on exactly this approximation. An alternative would be to estimate the restricted models directly on the data. In settings where $\hat\beta \sim N(\beta,V_{\beta})$ provides a good approximation, we suspect that such an approach will yield qualitatively similar results. It may be interesting to explore directly estimating restricted models in settings where the approximation $\hat\beta \sim N(\beta,V_{\beta})$ is questionable.
\end{remark}

\section{Numerical Results}\label{sec:sim}

In this section, we illustrate the properties of our restricted plausible estimates and bounds as well as our cumulative plausible bounds in simulation experiments with treatment paths generated to resemble treatment path dynamics that practitioners may encounter. We illustrate these treatment paths in Figure \ref{fig:DGPs}. We consider a constant treatment effect path (cf. Figure \ref{fig:DGPs_constant}); a treatment path that smoothly declines before flattening out after 17 periods (cf. Figure \ref{fig:DGPs_quadratic});  a hump-shaped treatment path with dynamics that continue for the entire $H$ periods (cf. Figure \ref{fig:DGPs_noflat}); and a ``wiggly'' treatment path (cf. Figure \ref{fig:DGPs_wiggly}). We describe the exact DGP for each of the four panels in more detail in Online Appendix Table \ref{app-tab:DGPs}.\footnote{Figure \ref{fig:new_viz} provides one example realization from each of these DGPs.} The object of interest is the treatment path over a 36-month horizon. We have access to jointly normal estimates $\{\hat{\beta}_h\}_{h=1}^{36}$. 
For each of these four treatment paths, we then draw 1,000 realizations of $\hat{\beta} \sim N(\beta,V_{\beta})$.\footnote{In the figures that follow, $V_{\beta}$ is diagonal with its entries specified in Online Appendix \ref{app-sec:add_details}. We repeat our exercise with more general covariance matrices in Online Appendix \ref{app-sec:general_cov}.}
\begin{figure}[tb!]
\centering
\begin{subfigure}[h]{0.49\textwidth}
\includegraphics[width=\linewidth]{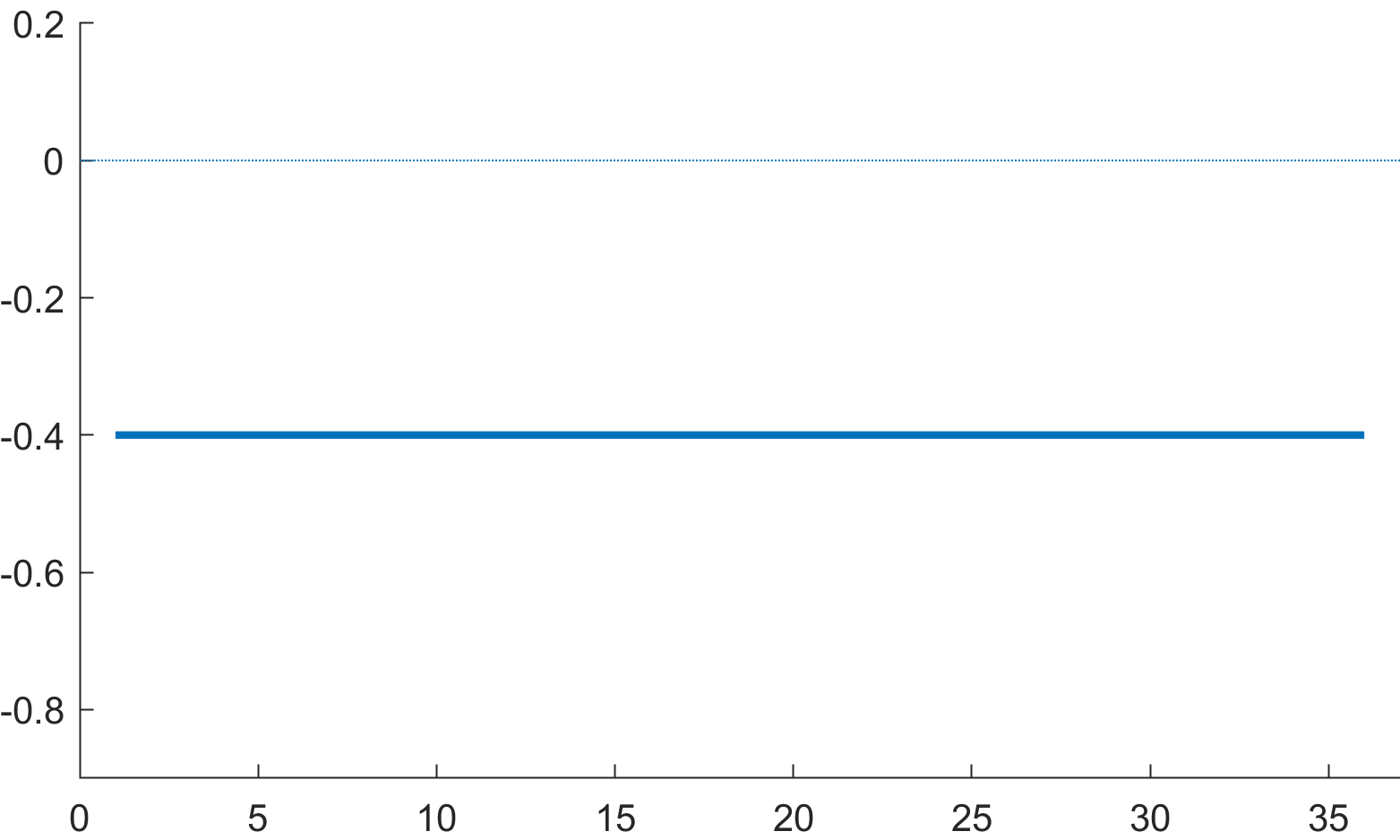}
\caption{Constant Treatment Effect}
 \label{fig:DGPs_constant}
\end{subfigure}
\hfill
\begin{subfigure}[h]{0.49\textwidth}
\includegraphics[width=\linewidth]{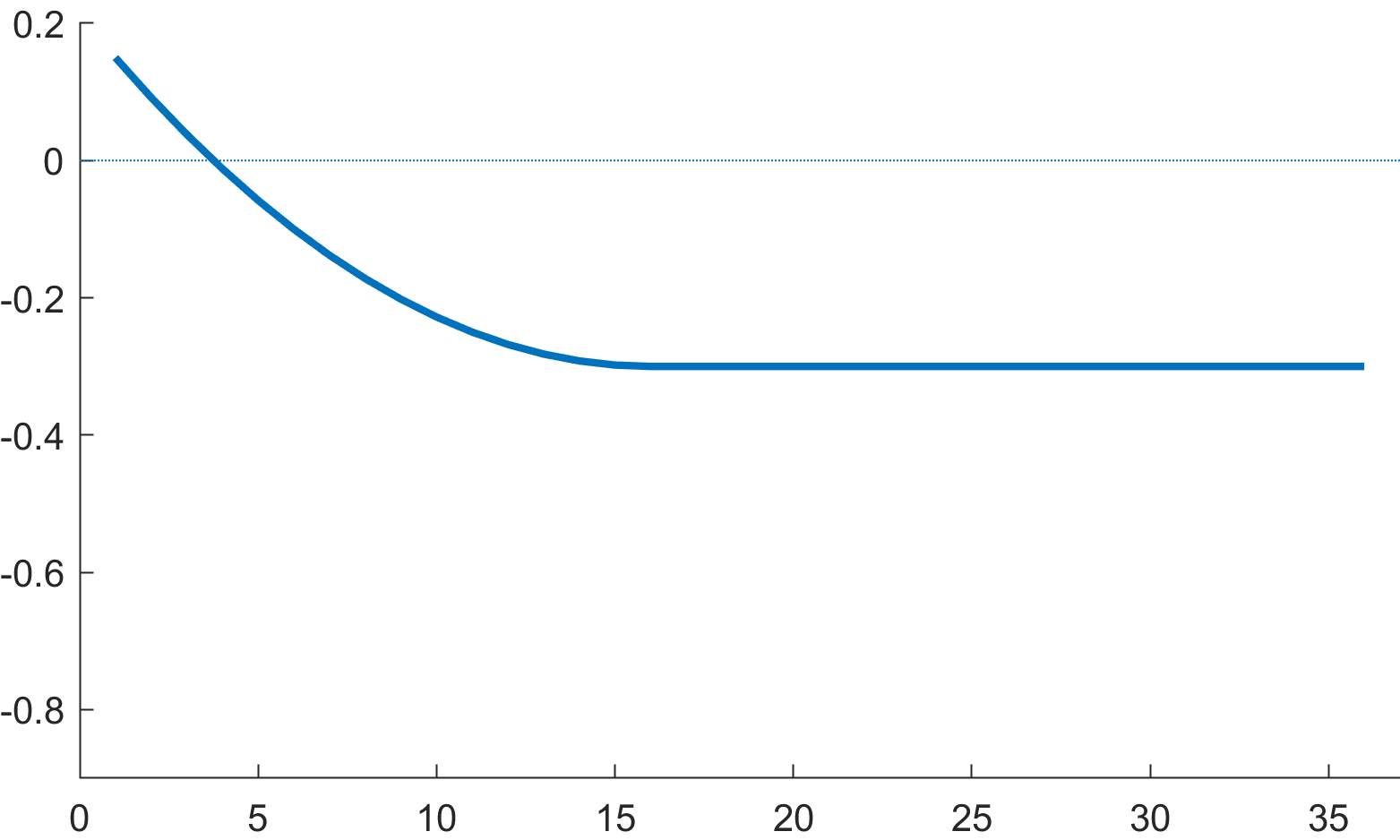}
\caption{Smooth, eventually flat}
 \label{fig:DGPs_quadratic}
\end{subfigure}

\begin{subfigure}[h]{0.49\textwidth}
\includegraphics[width=\linewidth]{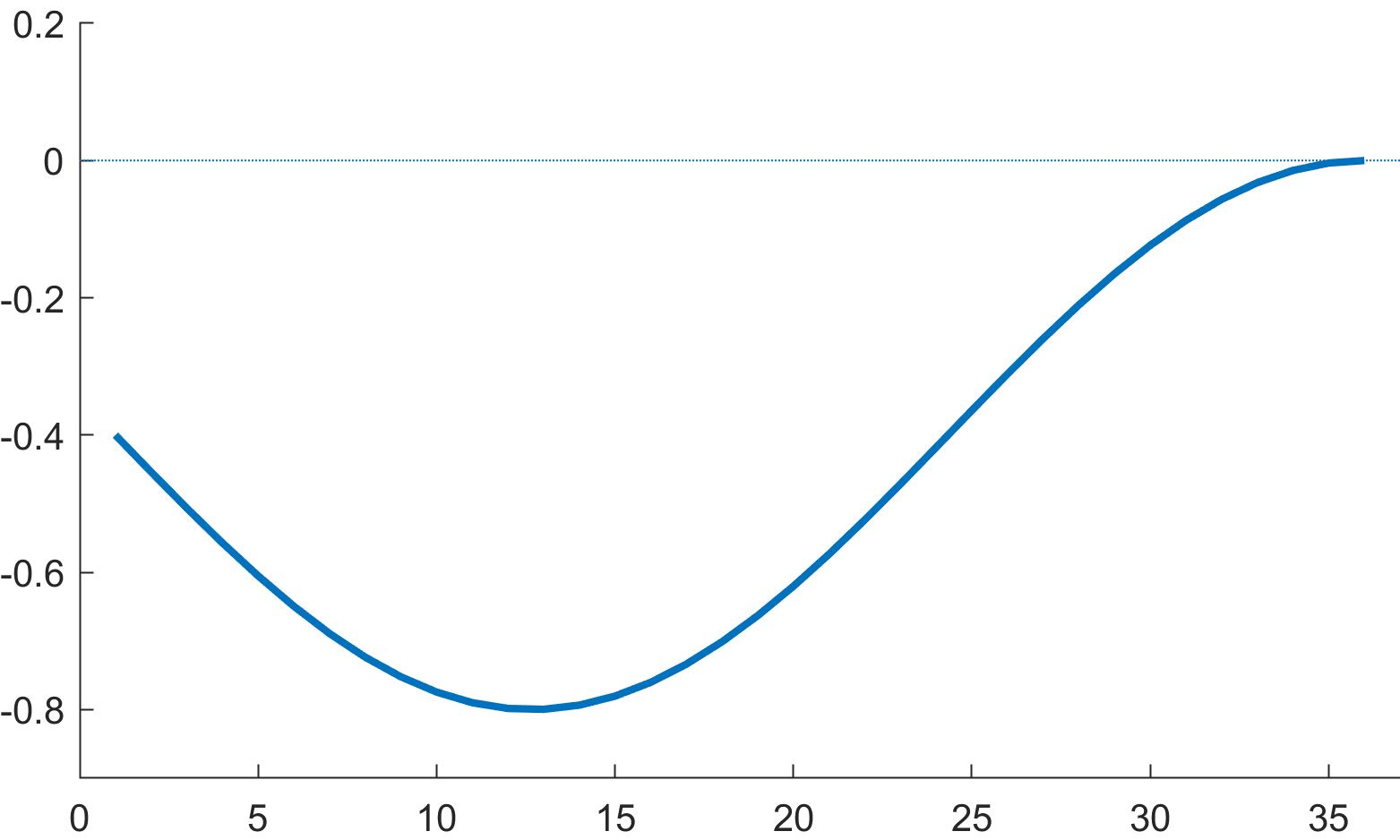}
\caption{Hump-shaped}
 \label{fig:DGPs_noflat}
\end{subfigure}
\hfill
\begin{subfigure}[h]{0.49\textwidth}
\includegraphics[width=\linewidth]{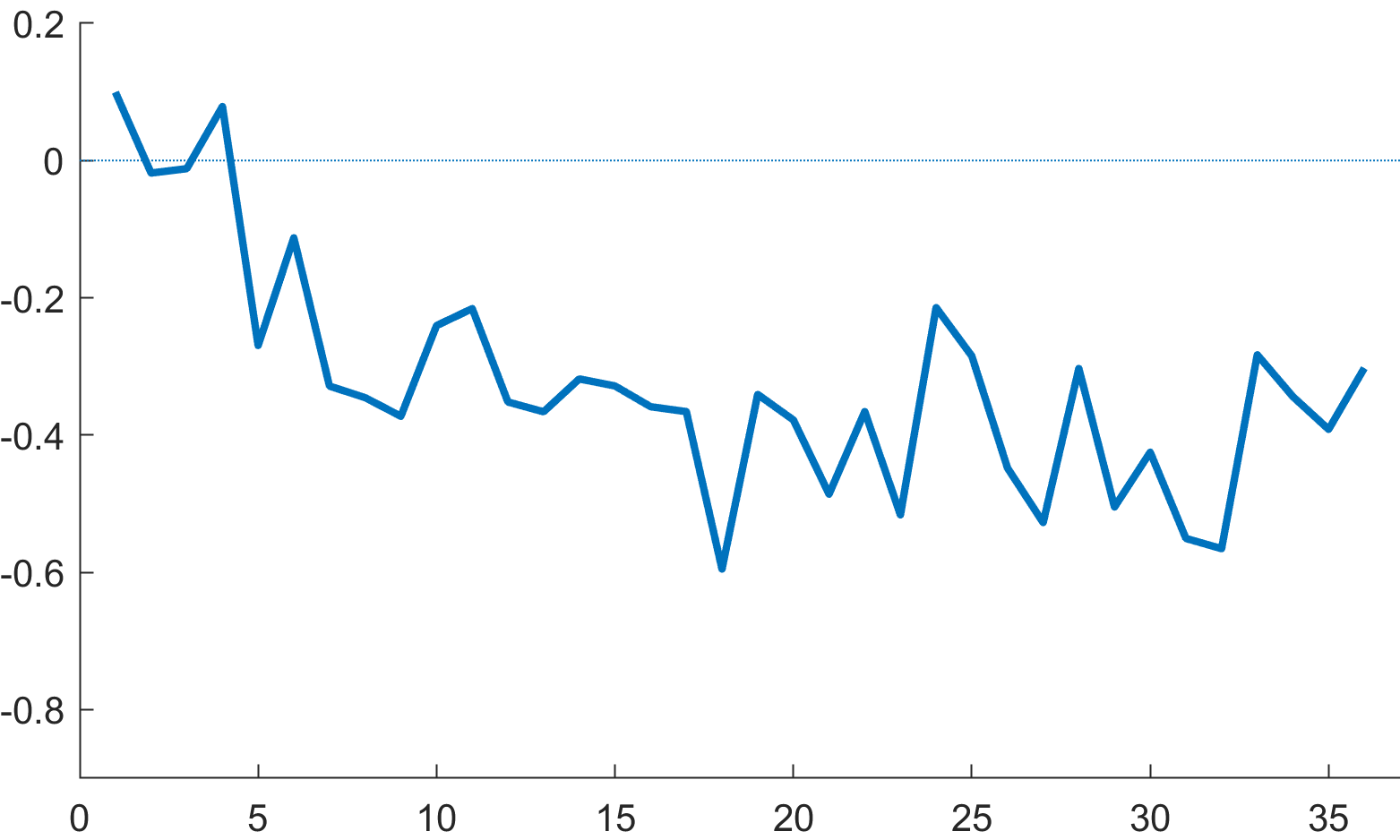}
\caption{Wiggly}
 \label{fig:DGPs_wiggly}
\end{subfigure}
\caption{Four exemplary treatment effect paths $\beta$.}
 \label{fig:DGPs}
\end{figure}

We first compare the point estimation properties of the unrestricted estimates $\hat{\beta}$ with our restricted estimates $\tilde{\beta}(\hat{M})$ for each of these four scenarios.  In particular, Figure \ref{fig:MSEs} depicts the ratio in mean-squared error, $MSE_{\tilde{\beta}(\hat{M})}/MSE_{\hat{\beta}}$, as a function of $\sigma^2$, which scales the covariance matrix of the estimates, $V_\beta$ (see Online Appendix \ref{app-sec:add_details} for more detail).  
\begin{figure}[tb!]
\centering
\begin{subfigure}[h]{0.49\textwidth}
\includegraphics[width=\linewidth]{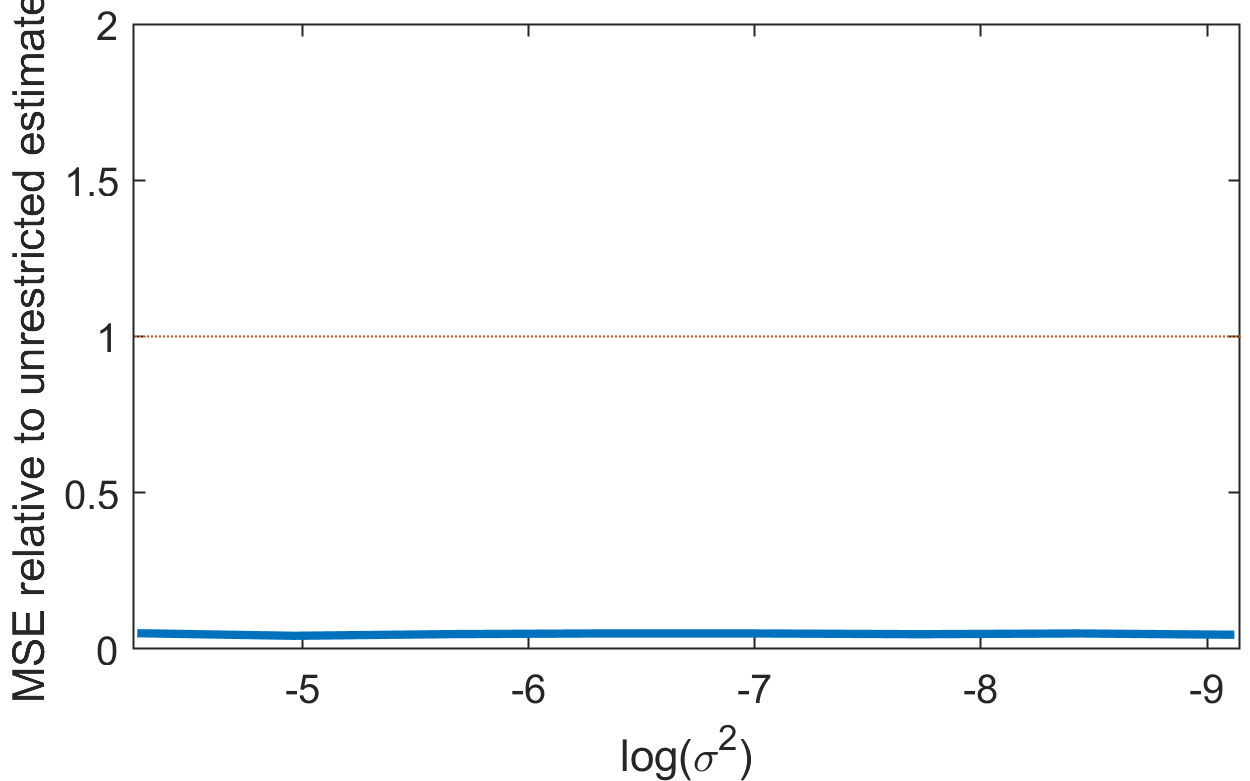}
\caption{Constant Treatment Effect}
 \label{fig:mse_constant}
\end{subfigure}
\hfill
\begin{subfigure}[h]{0.49\textwidth}
\includegraphics[width=\linewidth]{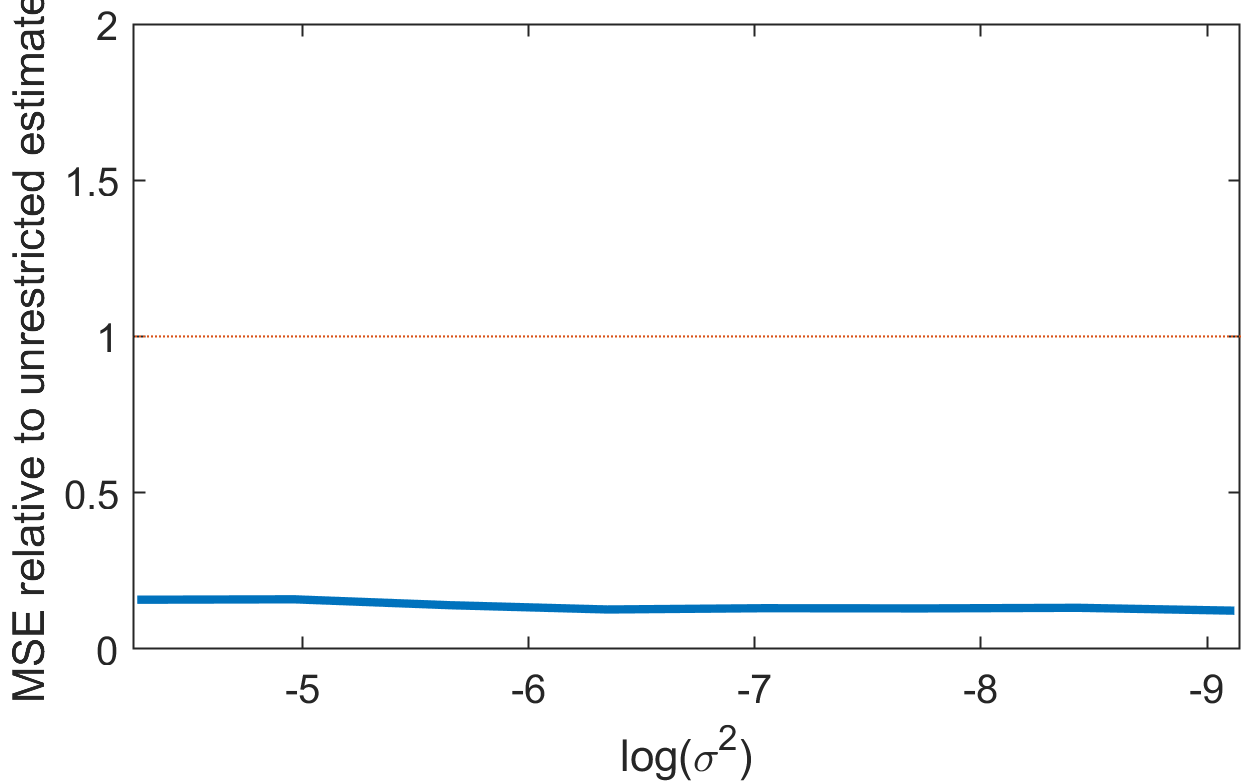}
\caption{Smooth, eventually flat}
 \label{fig:mse_quadratic}
\end{subfigure}
\begin{subfigure}[h]{0.49\textwidth}
\includegraphics[width=\linewidth]{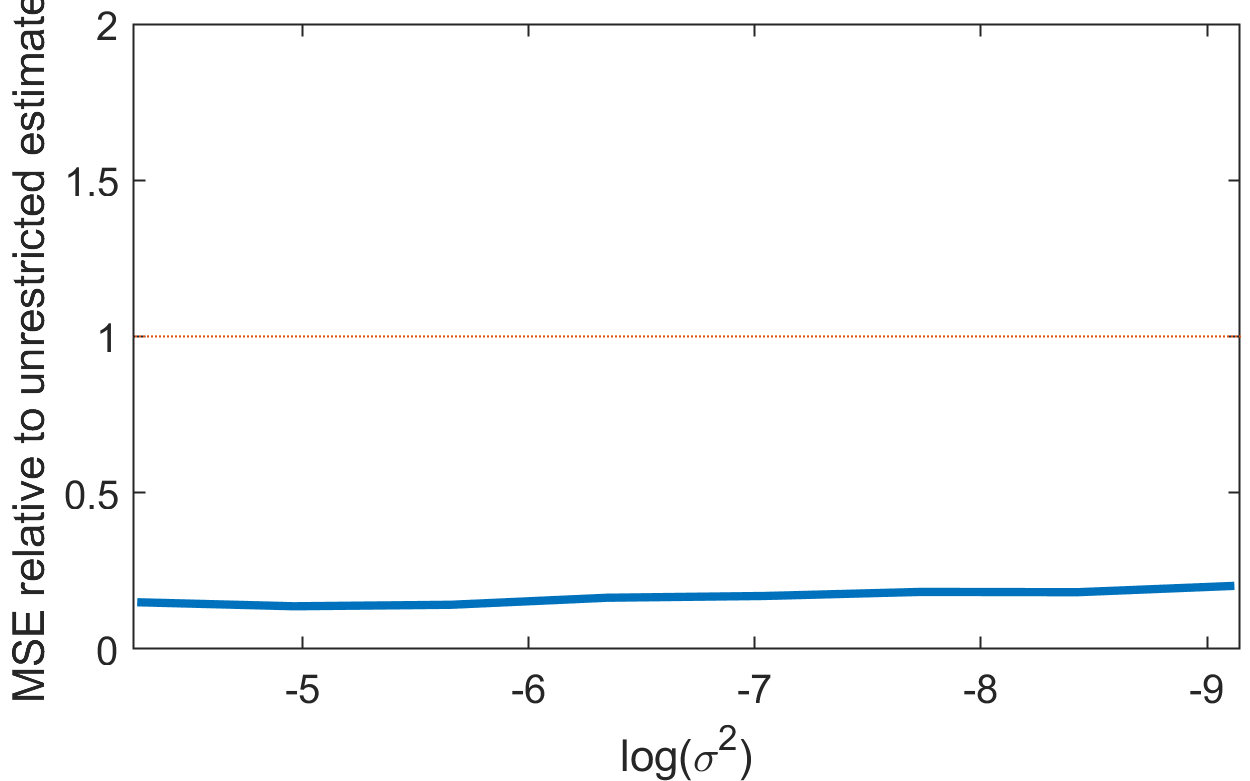}
\caption{Hump-shaped}
 \label{fig:mse_noflat}
\end{subfigure}
\hfill
\begin{subfigure}[h]{0.49\textwidth}
\includegraphics[width=\linewidth]{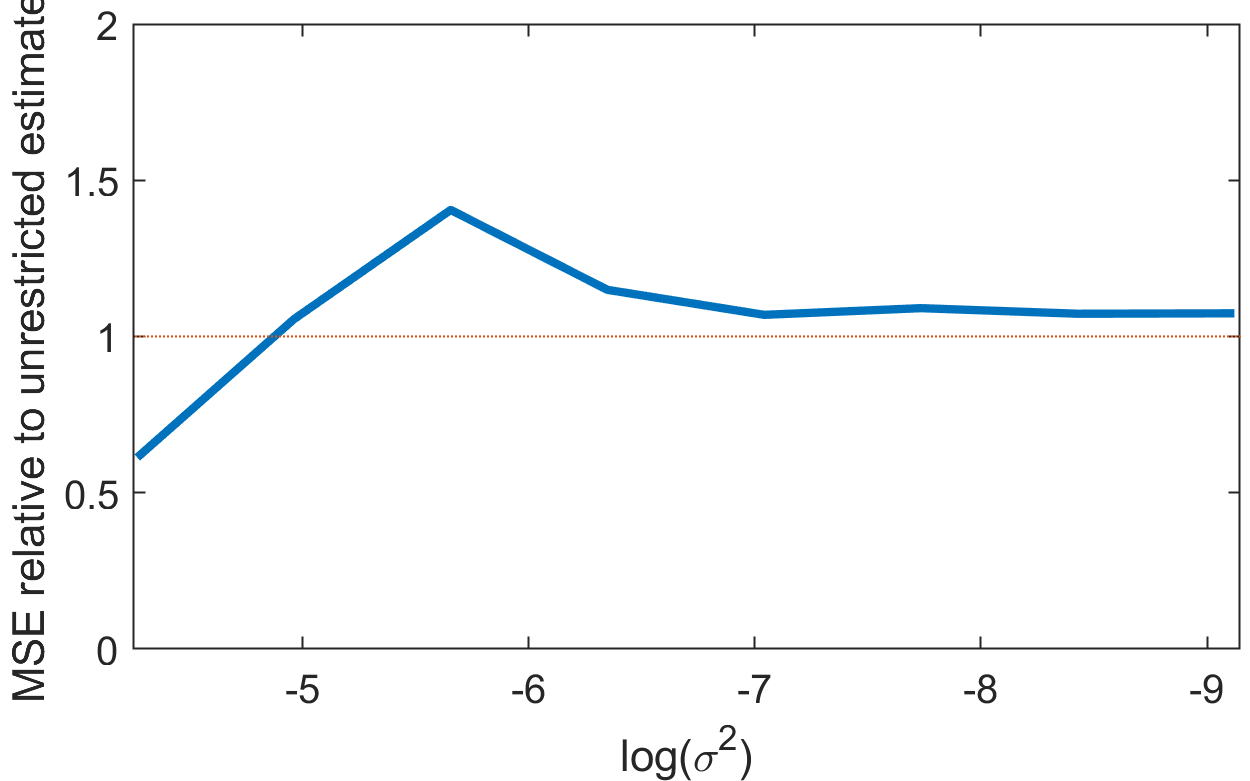}
\caption{Wiggly}
 \label{fig:mse_wiggly}
\end{subfigure}
\caption{Relative performance of restricted and unrestricted estimators. Depicted is the ratio in mean-squared error between restricted and unrestricted estimates, $\frac{MSE(\tilde{\beta}(\hat{M}))}{MSE(\hat{\beta})}$, as a function of the amount of noise $\sigma^2$ in the initial estimates $\hat{\beta}$.}
 \label{fig:MSEs}
\end{figure}
The largest value of $\sigma^2$ in Figure \ref{fig:MSEs} (corresponding to the left most point) thus represents relatively noisy estimates.\footnote{Figure \ref{fig:new_viz} was created from a single realization of the the left most point in Figure \ref{fig:MSEs}. To give the reader a sense of the scale of the x-axes, Online Appendix Figure \ref{app-fig:new_viz_largen} also illustrates a single realization of the right most point in Figure \ref{fig:MSEs}.}
We conclude that in most cases our restricted estimator has excellent point estimation properties when the target is the true treatment path $\beta$. In the three panels that have a smooth treatment path (Figures \ref{fig:mse_constant}-\ref{fig:mse_noflat}), the MSE of our restricted estimate is a full order of magnitude lower compared to the unrestricted estimate. In Figure \ref{fig:mse_wiggly}, our restricted estimate has a lower MSE when estimates are very noisy, a higher MSE for intermediate sizes of $\sigma^2$, and a similar MSE when $\sigma^2$ is small. However, we suspect that a lower-dimensional summary of $\beta$, as provided by the surrogate path, may in fact be the policy relevant object in cases where the true treatment path exhibits complicated dynamics as in this panel (cf. Figure \ref{fig:DGPs_wiggly}).

We next turn our attention to inference and first look at coverage. We again consider the four previous DGPs and vary the amount of noise in the estimation of $\beta$ by varying $\sigma^2$. The result is depicted in Figure \ref{fig:coverage}, where we set $\alpha=0.05$.
\begin{figure}[tb!]
\centering
\begin{subfigure}[h]{0.49\textwidth}
\includegraphics[width=\linewidth]{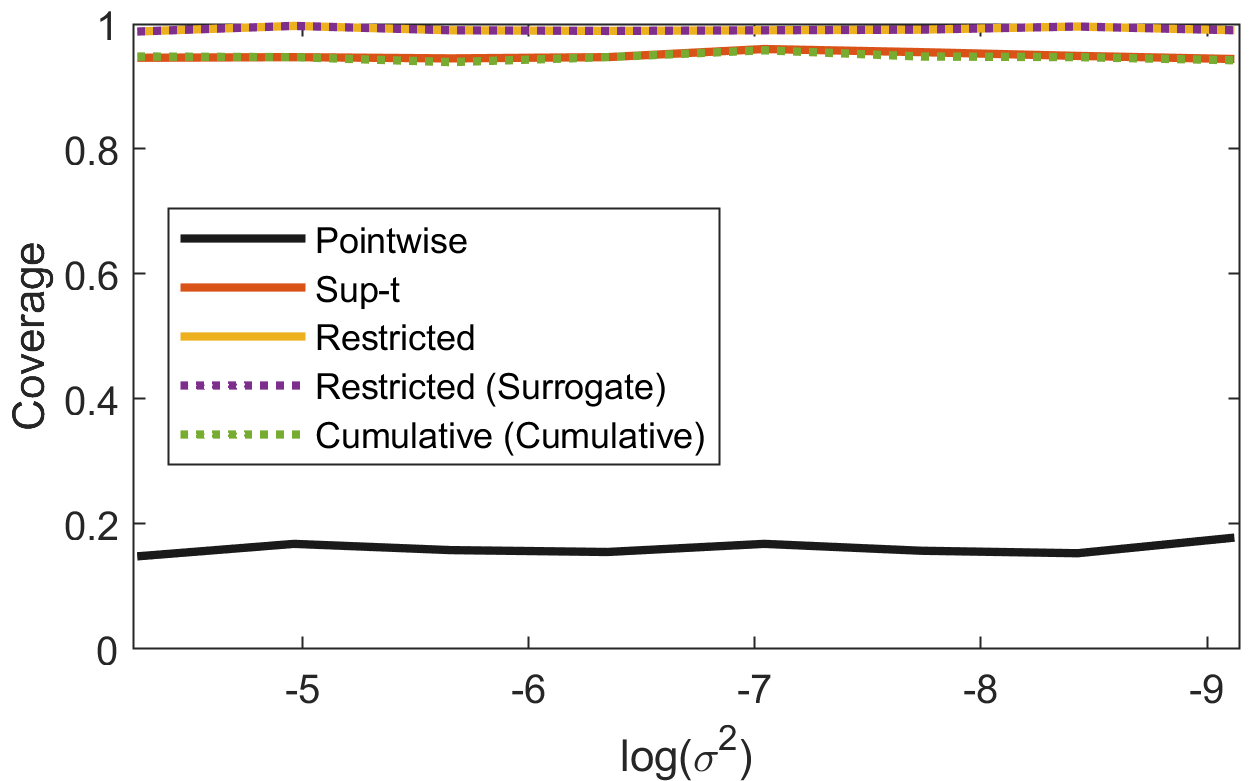}
\caption{Constant Treatment Effect}
 \label{fig:coverage_constant}
\end{subfigure}
\hfill
\begin{subfigure}[h]{0.49\textwidth}
\includegraphics[width=\linewidth]{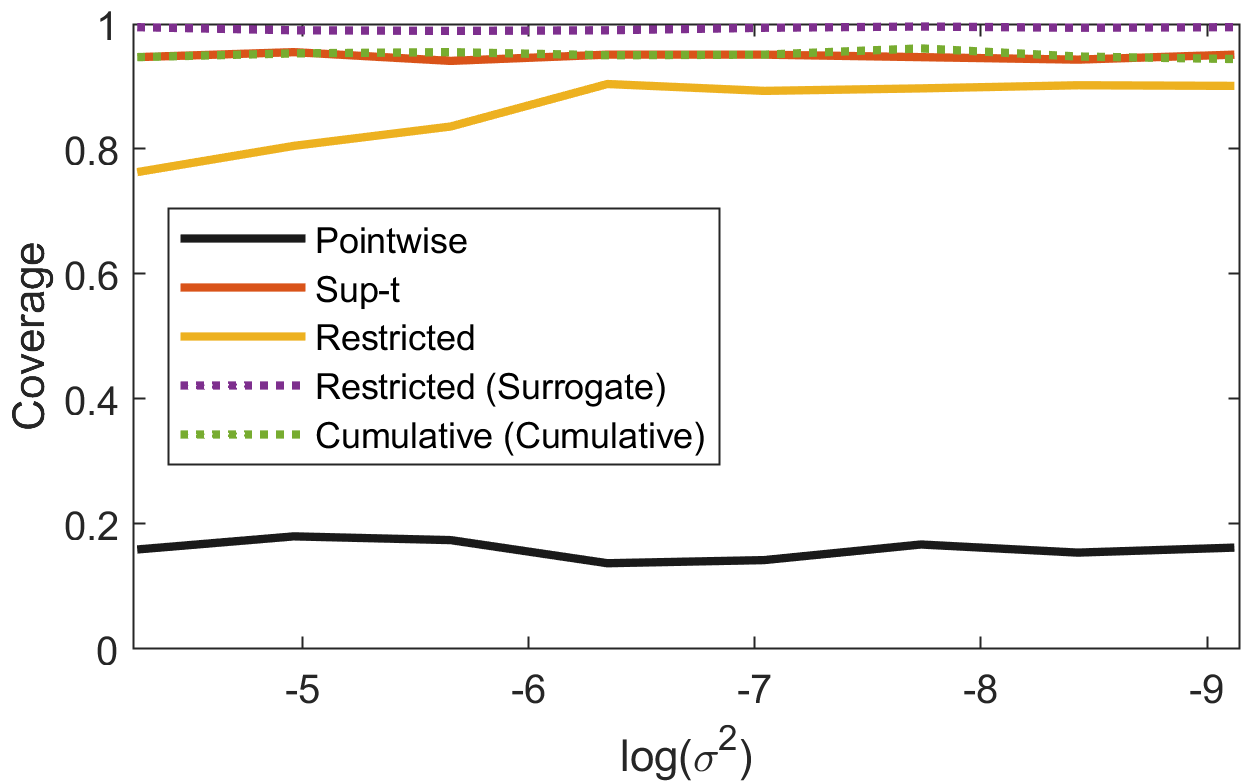}
\caption{Smooth, eventually flat}
 \label{fig:coverage_quadratic}
\end{subfigure}
\begin{subfigure}[h]{0.49\textwidth}
\includegraphics[width=\linewidth]{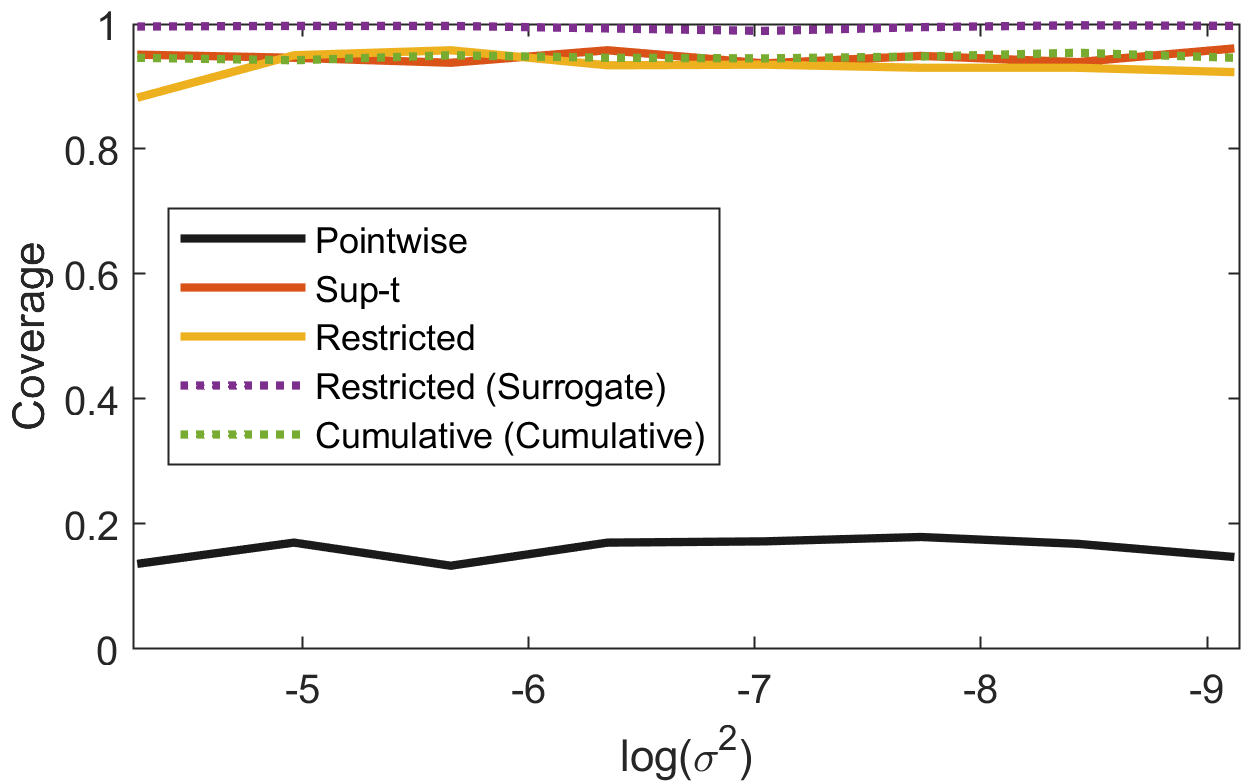}
\caption{Hump-shaped}
 \label{fig:coverage_noflat}
\end{subfigure}
\hfill
\begin{subfigure}[h]{0.49\textwidth}
\includegraphics[width=\linewidth]{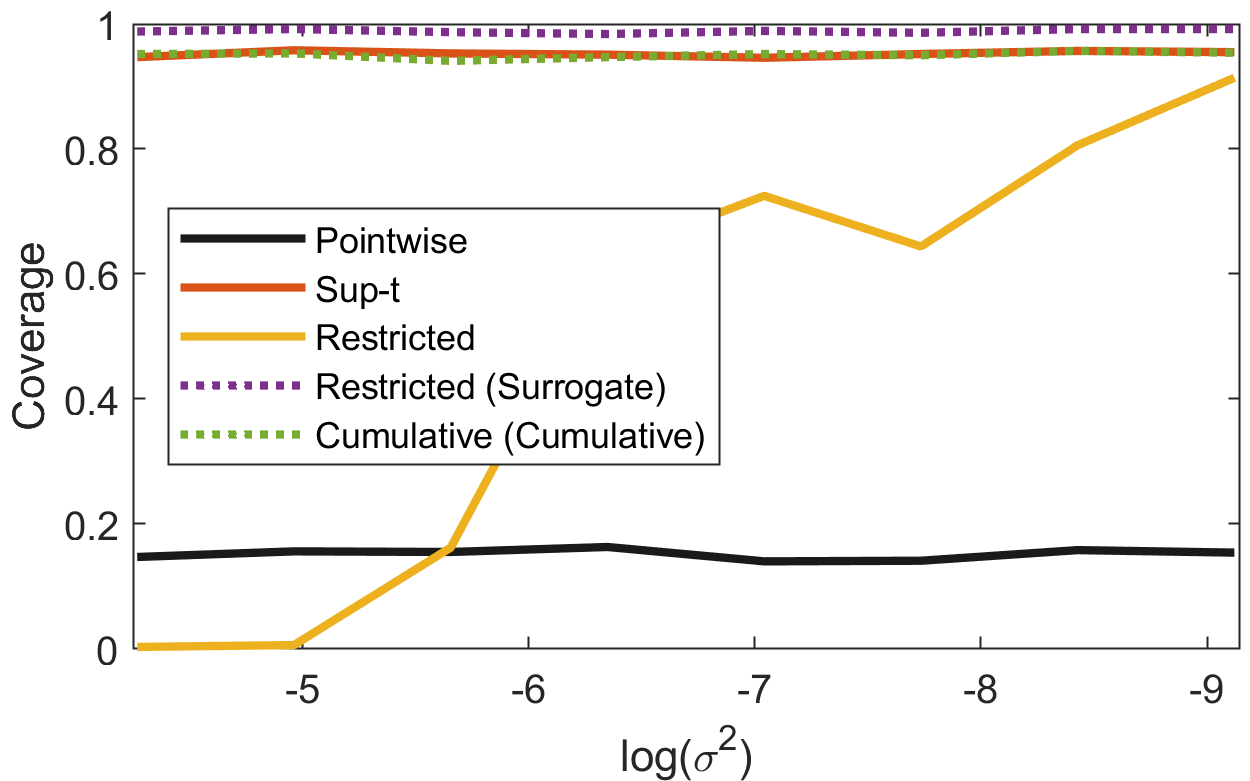}
\caption{Wiggly}
 \label{fig:coverage_wiggly}
\end{subfigure}
\caption{Coverage properties of various confidence regions as a function of the amount of noise $\sigma^2$ in the initial estimates $\hat{\beta}$.}
 \label{fig:coverage}
\end{figure}
The pointwise, sup-t, and restricted coverage numbers, as indicated by the three solid lines, represent the empirical analogue to the usual notion of uniform coverage for $\beta$: It reflects the fraction of simulations in which the true treatment path $\beta$ falls entirely inside the pointwise confidence intervals (black), sup-t intervals (red), and restricted plausible bounds (yellow), respectively.

Across all four DGPs, the pointwise confidence intervals only cover the true treatment path in 15-20\% of simulations. On the other hand, the sup-t intervals achieve nominal coverage across DGPs. The restricted plausible bounds are not constructed to provide uniform coverage for the true treatment path. It is thus not surprising that these bounds do not cover the entire treatment path in 95\% of simulations across all DGPs. However, we do see that their coverage appears to converge towards 95\% as the amount of noise in the initial estimates decreases (cf. Remark \ref{rem:truth}).
Further, the restricted plausible bounds perform substantially better than the pointwise confidence intervals in covering the true treatment path when the true path is smooth. In the scenario of a wiggly treatment path, the restricted bounds exhibit poor coverage properties when the amount of noise in the initial estimates is large.

Proposition \ref{prop:surrogate} guarantees that the plausible bounds cover the selected surrogate to the truth in at least 95\% of realizations. In Figure \ref{fig:coverage}, the restricted plausible bounds' coverage of the surrogate is illustrated by the dashed purple line. We see that this coverage is above 95\% for all levels of $\sigma^2$ across all DGPs. We reemphasize that, in the scenario of a wiggly treatment path, a smooth approximation to the true path, for which we obtain valid coverage, may be a policy relevant object. 

Finally, we note that, in line with Proposition \ref{prop:wald}, the cumulative bounds (as indicated by the dashed green lines) indeed cover the cumulative effect of the policy in 95\% of simulations. That is, the true treatment path is on average within these bounds in 95\% of simulations.

In order to cover the true path in a higher fraction of simulations, the sup-t bands are generally wider than the pointwise bands (cf. Figure \ref{fig:illustration}). We illustrate this across our simulations in Figure \ref{fig:ratio_widths}, which depicts the average width of the sup-t bands and the restricted bounds, relative to the pointwise intervals.  
\begin{figure}[thb!]
\centering
\begin{subfigure}[h]{0.49\textwidth}
\includegraphics[width=\linewidth]{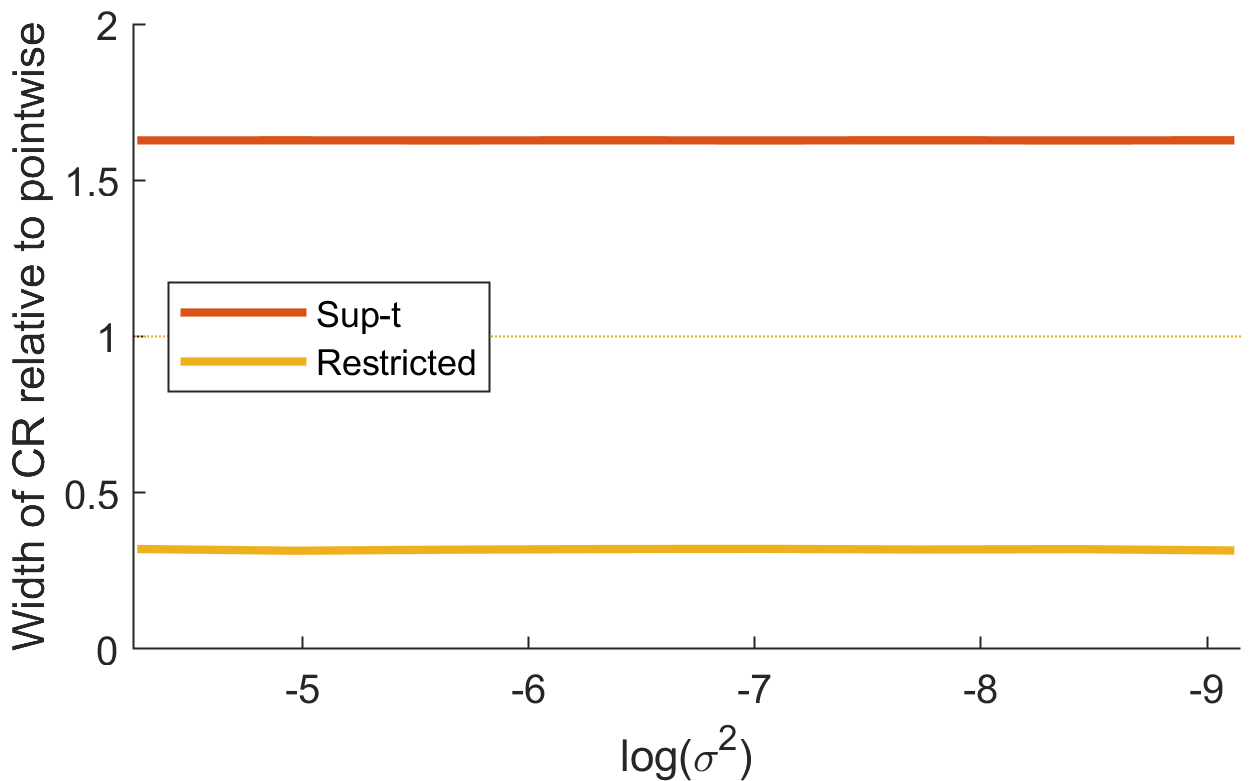}
\caption{Constant Treatment Effect}
 \label{fig:relwidth_constant}
\end{subfigure}
\hfill
\begin{subfigure}[h]{0.49\textwidth}
\includegraphics[width=\linewidth]{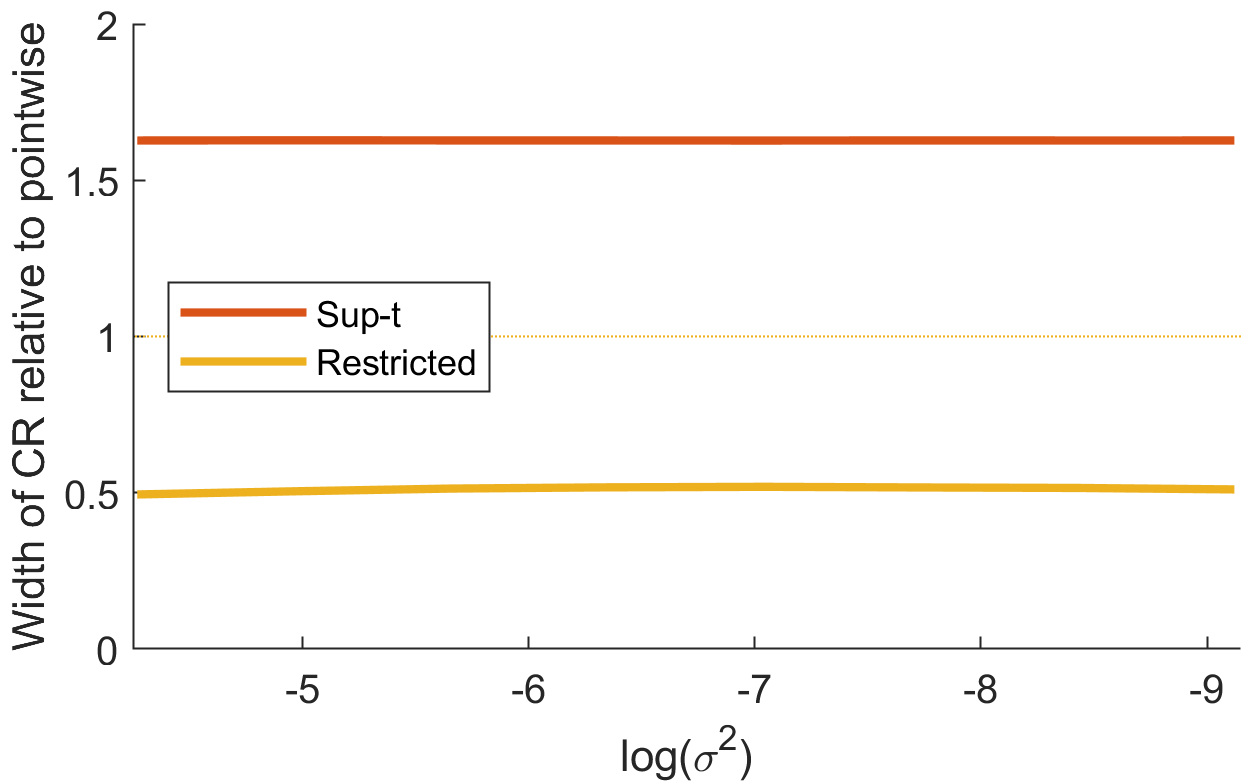}
\caption{Smooth, eventually flat}
 \label{fig:relwidth_quadratic}
\end{subfigure}

\begin{subfigure}[h]{0.49\textwidth}
\includegraphics[width=\linewidth]{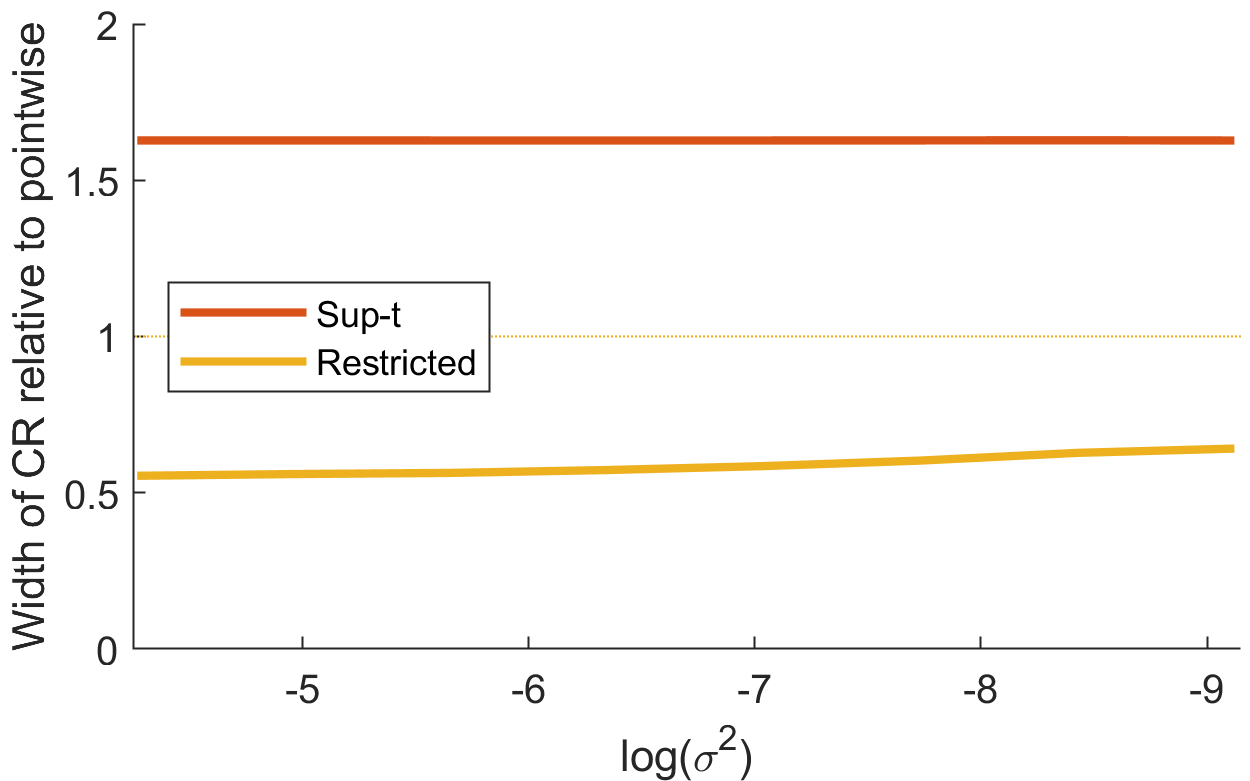}
\caption{Hump-shaped}
 \label{fig:relwidth_noflat}
\end{subfigure}
\hfill
\begin{subfigure}[h]{0.49\textwidth}
\includegraphics[width=\linewidth]{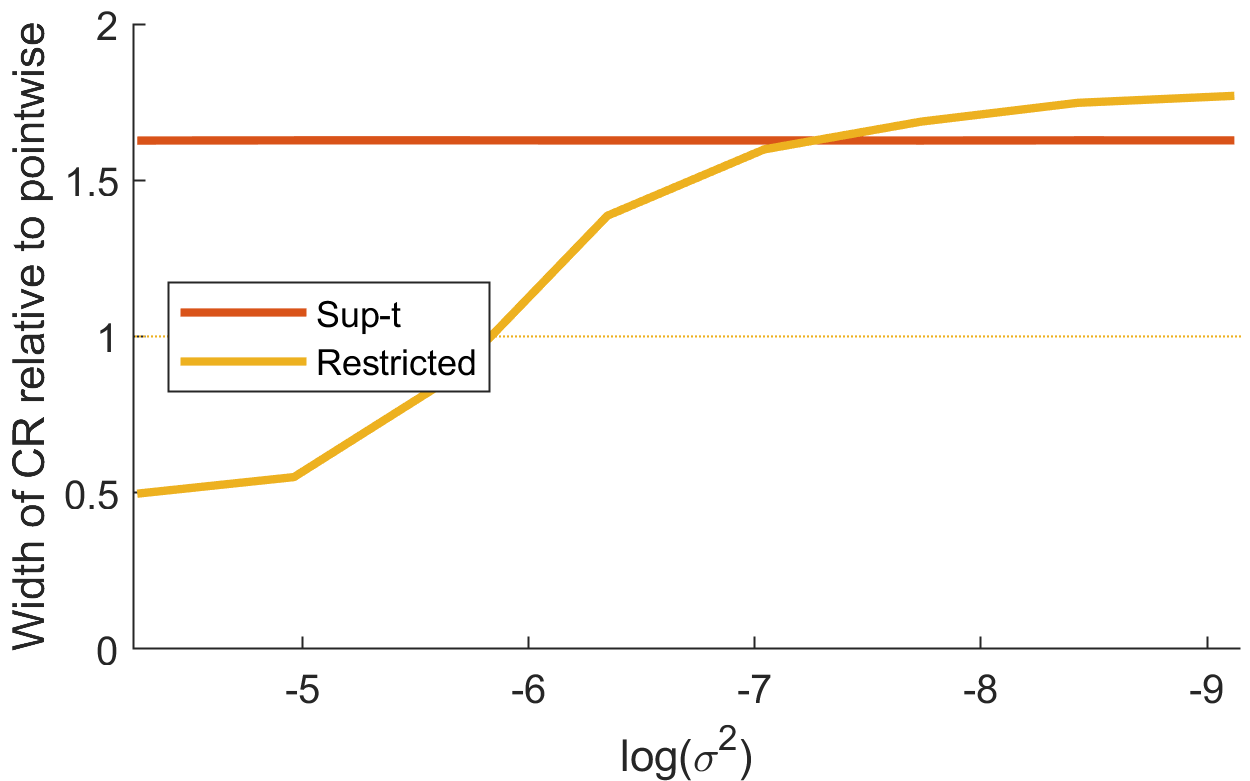}
\caption{Wiggly}
 \label{fig:relwidth_wiggly}
\end{subfigure}
\caption{Average width of confidence regions relative to pointwise confidence intervals as a function of the amount of noise $\sigma^2$ in the initial estimates $\hat{\beta}$.}
 \label{fig:ratio_widths}
\end{figure}
While the sup-t bands are wider than the pointwise bands across the parameter space, the restricted bounds are in fact narrower throughout much of the parameter space, despite their improved coverage properties. For example, in the ``Constant Treatment Effect" scenario, the restricted bounds are less than half as wide as the pointwise bands and less than a quarter as wide as the sup-t bands for all noise levels $\sigma^2$. Yet, they cover close to 95\% of all realizations, while the pointwise confidence intervals only cover the true treatment path in 15-20\% of simulations.\footnote{As we show in Online Appendix Figure \ref{app-fig:df}, we select surrogate paths $\beta_M$ with significantly reduced degrees of freedom, relative to the unrestricted estimates, in the first three DPGs. This effective dimension reduction explains both the improvement in point estimation properties of $\tilde{\beta}(\hat{M})$ relative to the unrestricted estimates $\hat\beta$ documented in Figure \ref{fig:MSEs}, and the decrease in width of the confidence intervals documented in Figure \ref{fig:ratio_widths}.}
The restricted bounds are only wider than the sup-t bands when (i) the treatment path is wiggly and (ii) there is very little noise. In this region of the parameter space, our model selection mechanism infers that the wiggles in the treatment path are large enough relative to the noise level that heavy smoothing will incur a large cost in terms of losing fit. With relatively little smoothing, our restricted plausible bounds are then slightly more conservative than the standard sup-t bands, which follows immediately from the results in \cite{berk2013valid}.

Finally, we shed some more light on the model selection of our procedure in Figure \ref{fig:surrogates}. 
\begin{figure}[tb!]
\centering
\begin{subfigure}[h]{0.49\textwidth}
\includegraphics[width=\linewidth]{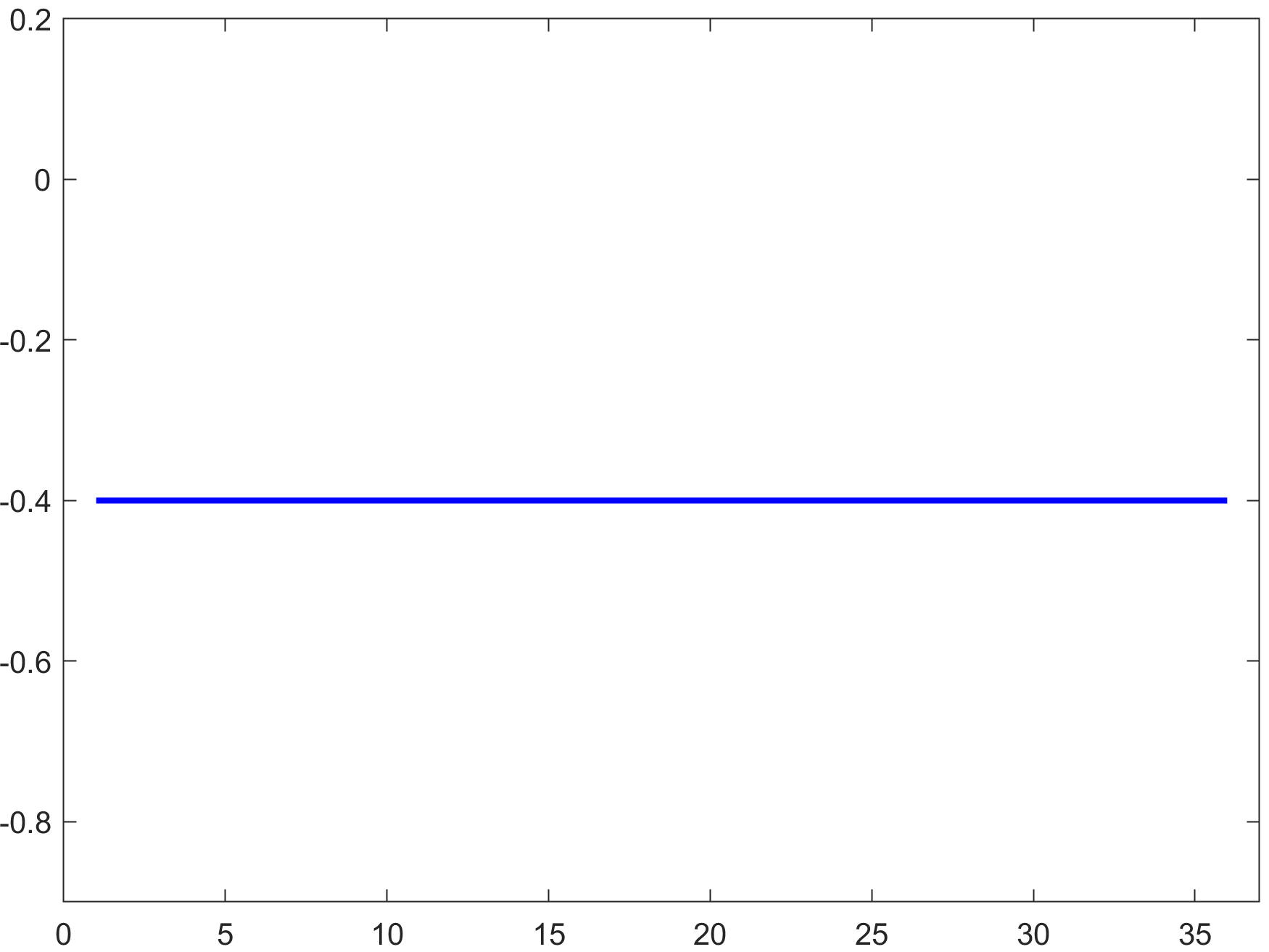}
\caption{Constant Treatment Effect}
 \label{fig:surrogates_constant}
\end{subfigure}
\hfill
\begin{subfigure}[h]{0.49\textwidth}
\includegraphics[width=\linewidth]{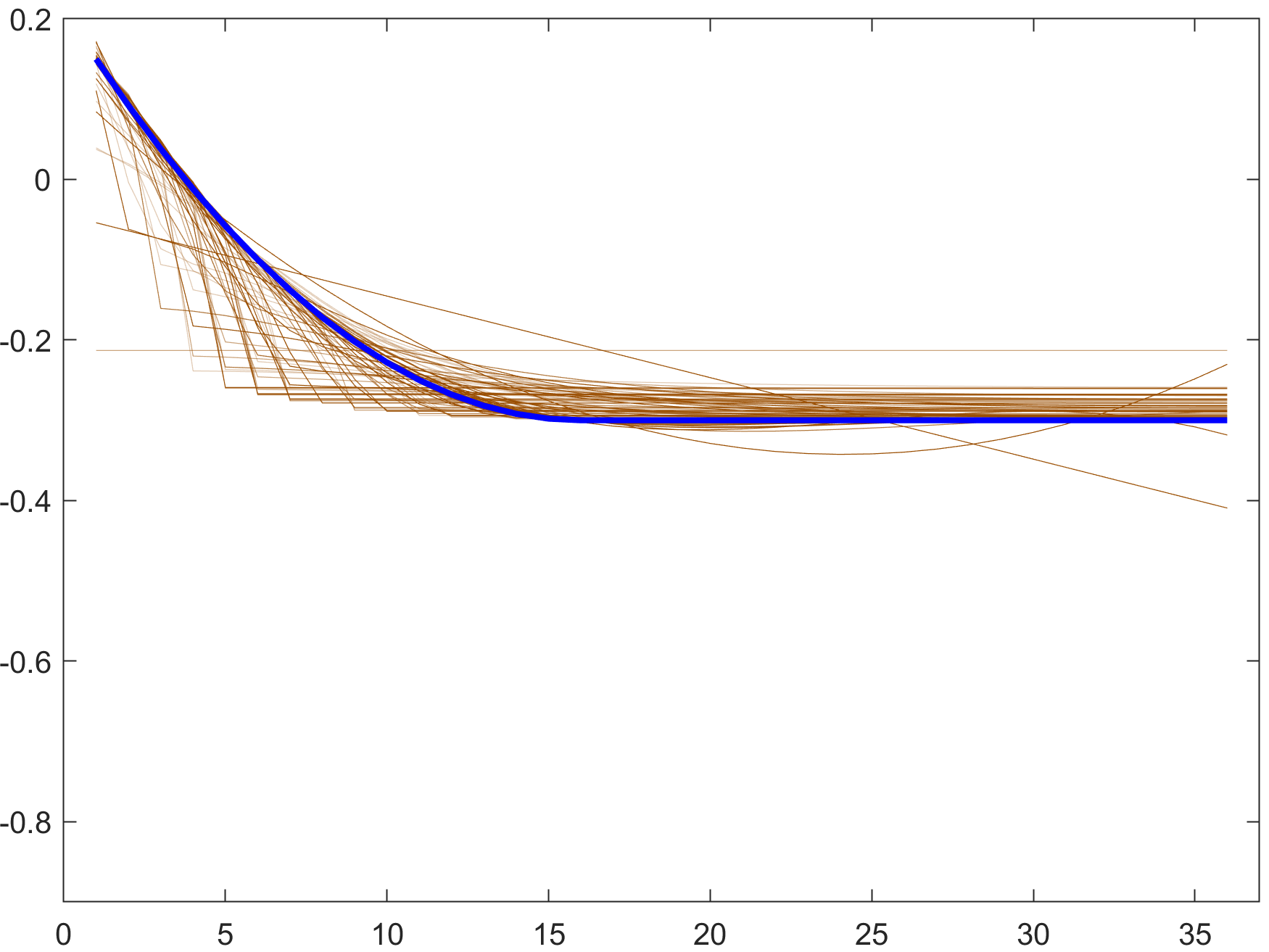}
\caption{Smooth, eventually flat}
 \label{fig:surrogates_quadratic}
\end{subfigure}

\begin{subfigure}[h]{0.49\textwidth}
\includegraphics[width=\linewidth]{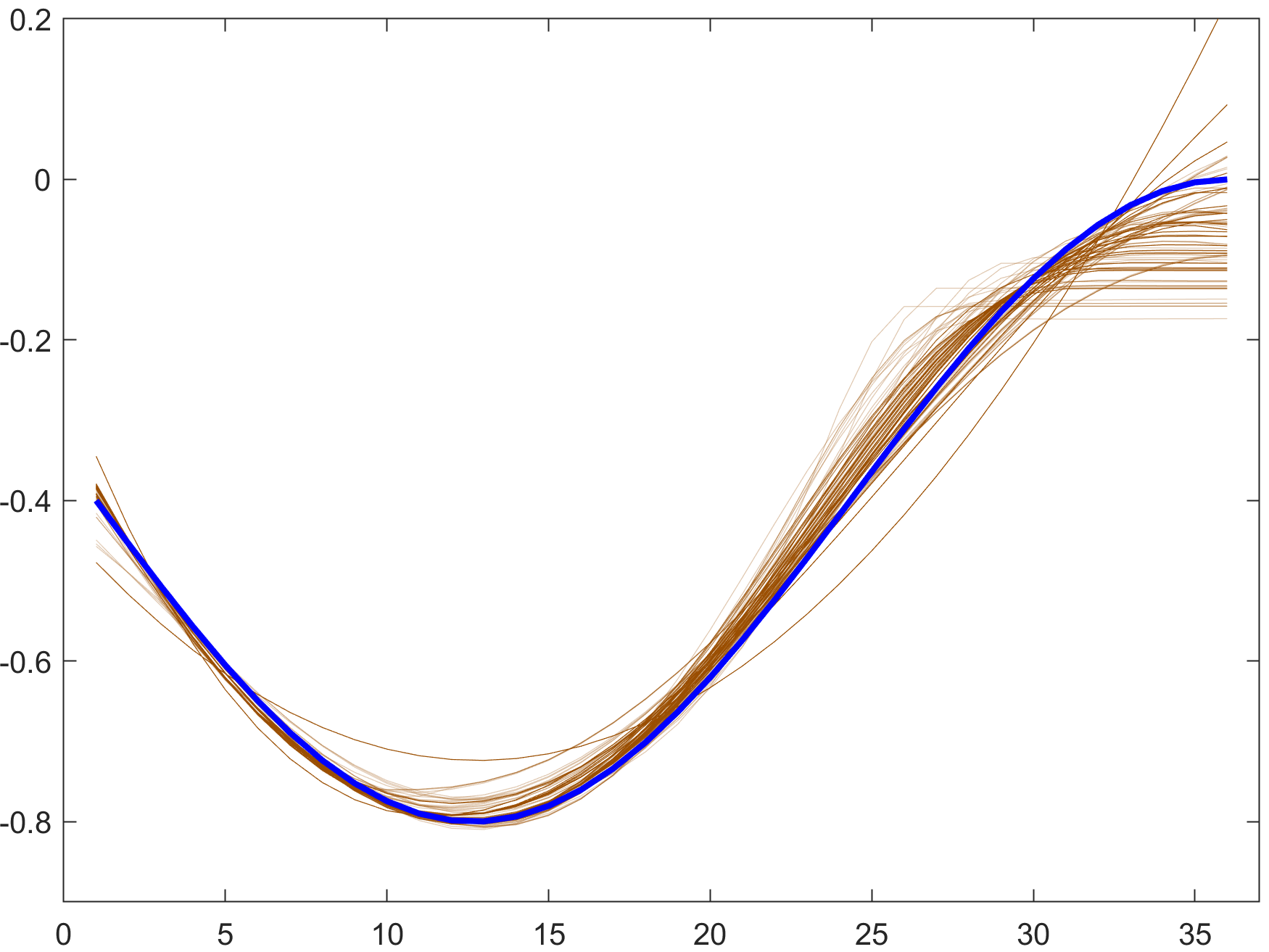}
\caption{Hump-shaped}
 \label{fig:surrogates_noflat}
\end{subfigure}
\hfill
\begin{subfigure}[h]{0.49\textwidth}
\includegraphics[width=\linewidth]{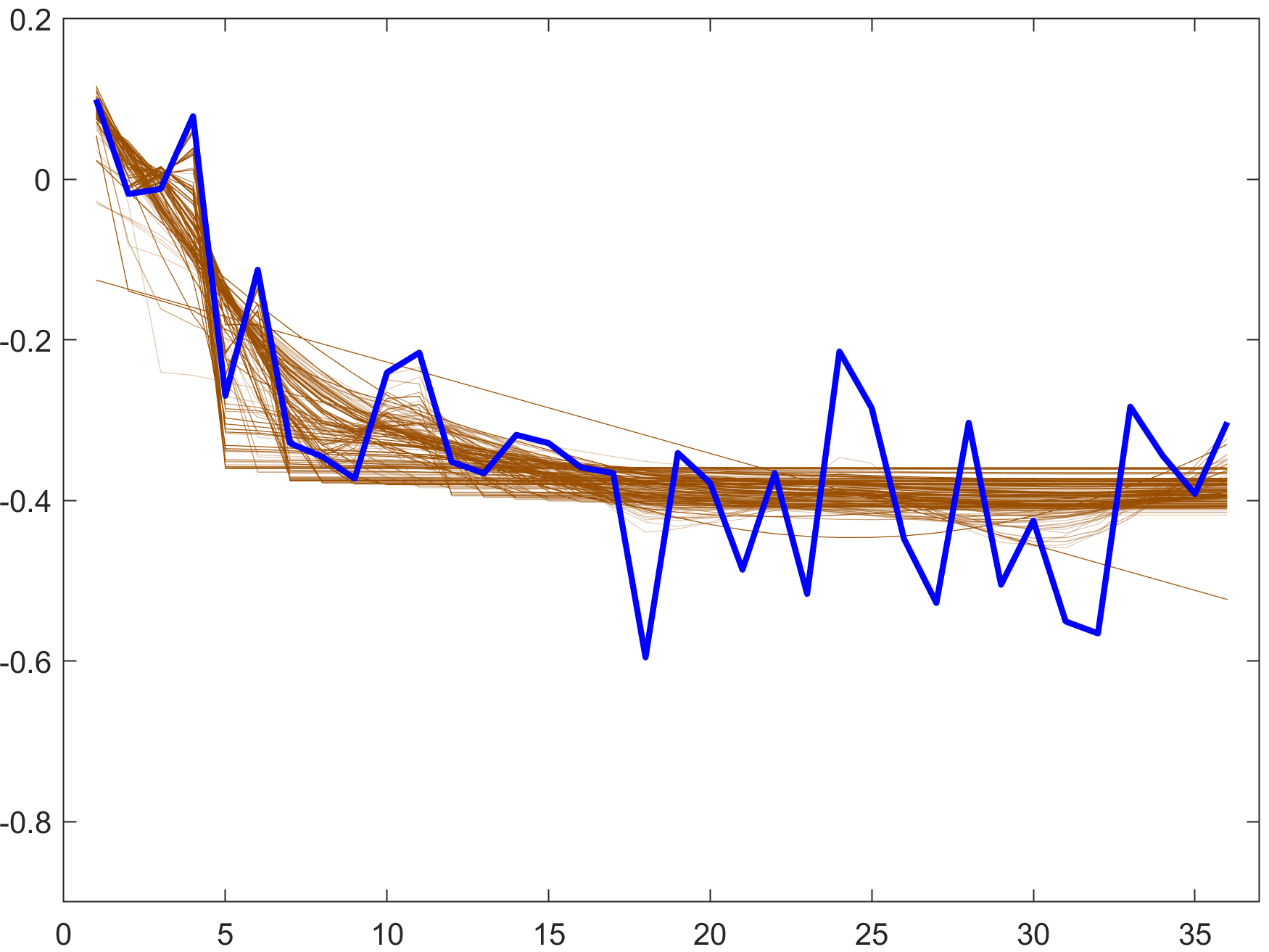}
\caption{Wiggly}
 \label{fig:surrogates_wiggly}
\end{subfigure}
\caption{Illustration of the chosen surrogates across our 1,000 simulations for $\sigma^2=0.014$ ($\log(\sigma^2)=-4.27$), corresponding to the left most point in Figures \ref{fig:MSEs}-\ref{fig:ratio_widths}, and thus a large amount of noise in $\hat{\beta}$. }
 \label{fig:surrogates}
\end{figure}
Each panel depicts the 1,000 selected surrogates across our simulations in light brown. Since the chosen surrogates will depend on the precision of the initial estimates $\hat{\beta}$, we fix $\sigma^2$ at $0.014$, corresponding to the amount of noise in the examples in Figure \ref{fig:new_viz}. When the true treatment path is smooth (Panels \ref{fig:surrogates_constant}-\ref{fig:surrogates_noflat}), the chosen surrogate tends to closely approximate the truth.  In contrast, the chosen surrogate is often simpler than $\beta$ while retaining its main features when the true treatment path is not smooth (Panel \ref{fig:surrogates_wiggly}).\footnote{In Online Appendix Figure \ref{app-fig:surrogates_wiggly_noise}, we further illustrate how the chosen surrogates depend on $\sigma^2$ for our Wiggly DGP.}

\section{Conclusion}

We are interested in (joint inference on) the dynamic treatment effect of a policy. We propose two visualization devices, which we term restricted and cumulative plausible bounds, to include in standard visualizations of estimated treatment effect paths. Because both bounds explicitly target objects other than uniformly covering the entire treatment path, they can be substantially tighter than standard pointwise and uniform confidence bands. Our bounds may thus provide useful insights about features of treatment effect paths when traditional confidence bands appear uninformative. 
Our bounds can also lead to markedly different conclusions relative to traditional visualizations when there is significant correlation in the estimates. 
As an auxiliary benefit, producing our restricted plausible bounds also provides additional restricted estimates that capture simplified representations of the dynamic effect path. Unsurprisingly, these restricted estimates have improved point estimation properties relative to the unrestricted estimates in settings where the treatment effect path is smooth.

It may be interesting to explore the notion of explicitly covering data-dependent surrogates 
in other economically relevant settings. While alternative model universes or model selection rules may be needed outside of the present setting, our proposal can easily be adapted to do so. For example, it would be interesting to explore related ideas in more structural settings.

\bibliographystyle{plainnat}
\bibliography{bounds}

\newpage

\noindent \textbf{\Huge Online Appendix}

\appendix

\section{A primer on confidence regions}\label{app-sec:crs}

If $\beta$ is a scalar, the standard approach in economics to quantify and visualize the uncertainty associated with an estimate for $\beta$ is to construct a confidence interval. For a chosen size $\alpha$, such a confidence interval covers the true value $\beta$ in $100*(1-\alpha)$\% of all realizations of the data: $ \mathbb{P}(\ell(X)< \beta < u(X)) = 1-\alpha$ where $\ell(X)$ and $u(X)$ denote the lower and upper bounds of the confidence interval and observed data are a realization from random variable $X$. Intuitively, these intervals visualize to the reader what values of $\beta$ are ``plausible'' based on the observed data. The idea being that values inside this confidence interval appear ``plausible,'' while values outside of the interval do not. More formally, values outside this interval are rejected by a standard t-test at level $\alpha$, while values inside the interval are not rejected. 

Since in this paper a dynamic treatment path is the object of interest, $\beta=\{\beta_h\}_{h={1}}^{H}$ is an ordered vector instead.
We start with a diagram in Online Appendix Figure \ref{app-fig:diagram} that illustrates standard methods in the case of a two dimensional parameter $\beta=(\beta_1, \beta_2)$ where estimates are $\hat\beta = (\hat\beta_1,\hat\beta_2) = (2,1)$ and $V_\beta = I_2$ is the 2 $\times$ 2 identity matrix.

\begin{figure}[tb]
	\centering
	\includegraphics[width=.7\linewidth]{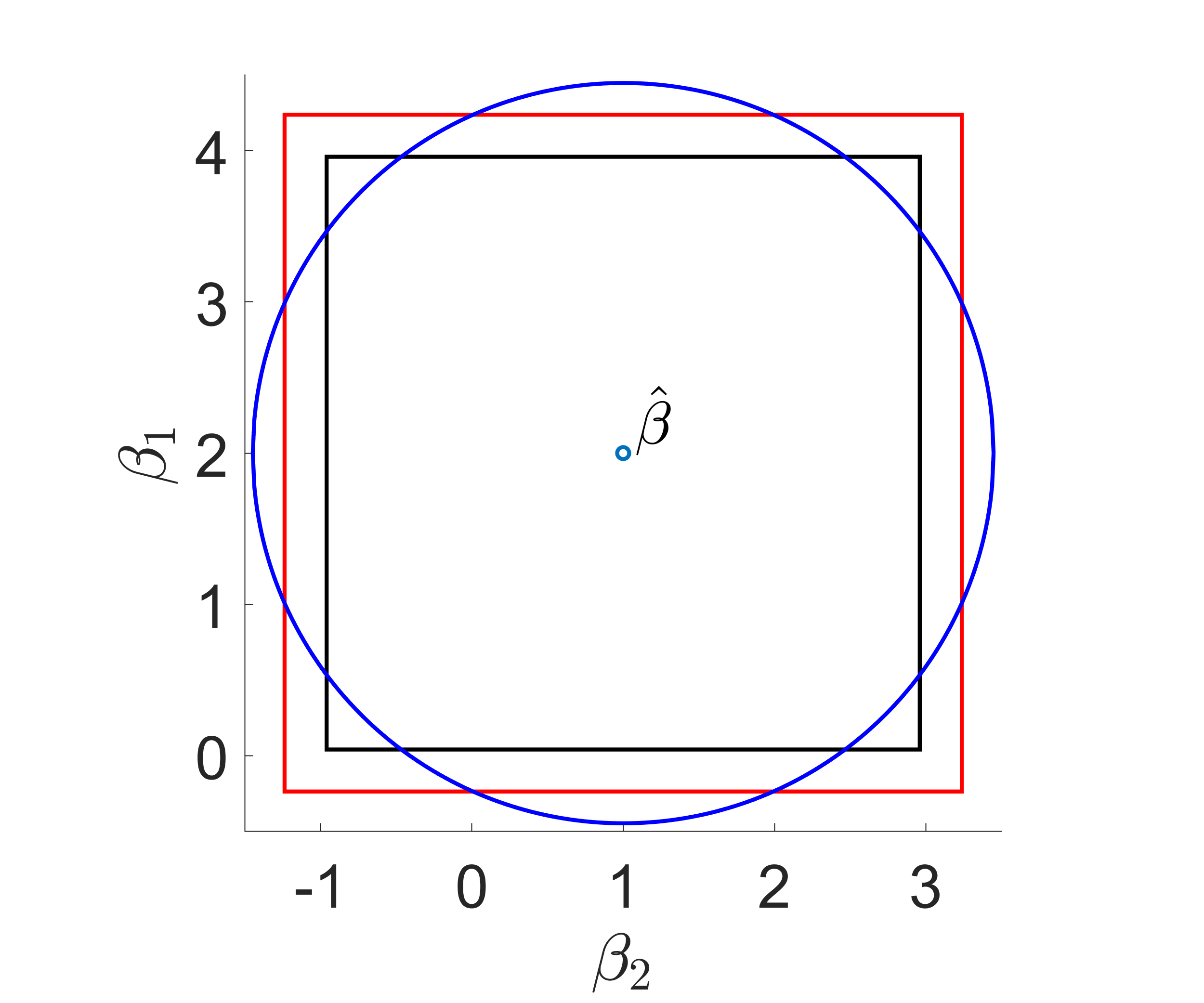}
	\caption[Illustration of different confidence regions.]{Illustration of different confidence regions. Pointwise (black), sup-t (red), and Wald (blue) 95\% confidence region in two dimensions. $\hat{\beta}=(2,1)$ with covariance matrix $V_\beta=I_2$.}
	\label{app-fig:diagram}
\end{figure}

The predominant practice today is to include pointwise confidence intervals in depictions of estimated treatment effect paths.  $100*(1-\alpha)$\% pointwise intervals for a specific $\beta_h$ simply correspond to choosing $(\ell_h(X),u_h(X)) = (\hat\beta_h - z_{1-\alpha/2}\sqrt{V_\beta[h,h]},\hat\beta_h + z_{1-\alpha/2}\sqrt{V_\beta[h,h]})$ where $V_\beta[h,h]$ is the variance of $\hat\beta_h$ and $z_{1-\alpha/2}$ is the $1-\alpha/2$ quantile of a standard normal distribution. For example, the pointwise 95\% confidence intervals in the case of the example in Online Appendix Figure \ref{app-fig:diagram} for $\beta_1$ and $\beta_2$ are respectively 2 $\pm$ 1.96 and 1 $\pm$ 1.96. Correspondingly, the black square depicts the Cartesian product of these pointwise confidence intervals for $\beta_1$ and $\beta_2$. Denote the region provided by the black square as $CR^{pw}$.
Treated as a confidence region for ($\beta_1, \beta_2$), $CR^{pw}$ \textit{a}) ignores any information in the off-diagonal entries in the covariance matrix of $\hat{\beta}$ and \textit{b}) is only valid for testing pre-specified hypotheses involving single coefficients. Thus, it does not achieve correct coverage for the vector $\beta=(\beta_1, \beta_2)$: For a chosen significance level $\alpha$, it will generally be true that $\mathbb{P}(\beta \in CR^{pw}) = \mathbb{P}(\ell_h(X)< \beta_h < u_h(X) \ \forall \ h) < (1-\alpha)$, such that the black square will generally cover the true parameter $\beta$ in less than $100(1-\alpha)$\% of realizations of the data. For example, if $Cov(\hat{\beta_1}, \hat{\beta_2})=0$, the probability that the pointwise confidence region covers the vector $\beta$ will be $P(\beta \in CR^{pw})= (1-\alpha)^2$.

One way to construct a uniformly valid confidence region is to take the Cartesian product of sup-t confidence intervals (depicted in red) instead. Denote this region $CR^{sup-t}$. Sup-t intervals are easy to construct, and simply use a slightly large critical value compared to pointwise confidence intervals. Specifically, $100(1 - \alpha)$\% sup-t intervals are constructed by choosing $\big(\ell_h(X),u_h(X)\big) = \big(\hat\beta_h - c_{\alpha}\sqrt{V_{\beta}[h,h]},\hat\beta_h + c_{\alpha}\sqrt{V_{\beta}[h,h]}\big)$ where $c_{\alpha}$ is set such that $\mathbb{P}(\ell_h(X)< \beta_h < u_h(X) \; \forall \ h) \ge (1-\alpha)$. For a chosen significance level $\alpha$, $CR^{sup-t}$ thus achieves valid coverage, since $\mathbb{P}(\beta \in CR^{sup-t}) \ge (1-\alpha)$ by construction. For example, the sup-t 95\% confidence intervals in the case of the example in Online Appendix Figure \ref{app-fig:diagram} for $\beta_1$ and $\beta_2$ are respectively 2 $\pm$ 2.24 and 1 $\pm$ 2.24. See, e.g., \cite{freyberger2018} and \cite{olea:pm} for details about sup-t interval construction as well as further discussion of different (rectangular) confidence regions. We focus on pointwise and sup-t confidence intervals, since these two are the predominant intervals used in practice (e.g. \cite{callaway2021}; \cite{jorda2023local}; \cite{boxell2024}).  

Finally, the blue circle in Online Appendix Figure \ref{app-fig:diagram} corresponds to an alternative confidence region for $\beta$, namely the Wald confidence region. Denote this region $CR^{Wald}$. This region simply collects all parameter values $(b_1,b_2)$ that are not rejected by a standard Wald test of the null hypothesis that $(\beta_1,\beta_2) = (b_1,b_2)$ at level $\alpha$. For a chosen significance level $\alpha$, this region also achieves valid coverage: $\mathbb{P}(\beta \in CR^{Wald}) = (1-\alpha)$ by construction.

\begin{figure}[tb]
	\begin{subfigure}[t]{.49\textwidth}
		\centering
		\includegraphics[width=\linewidth]{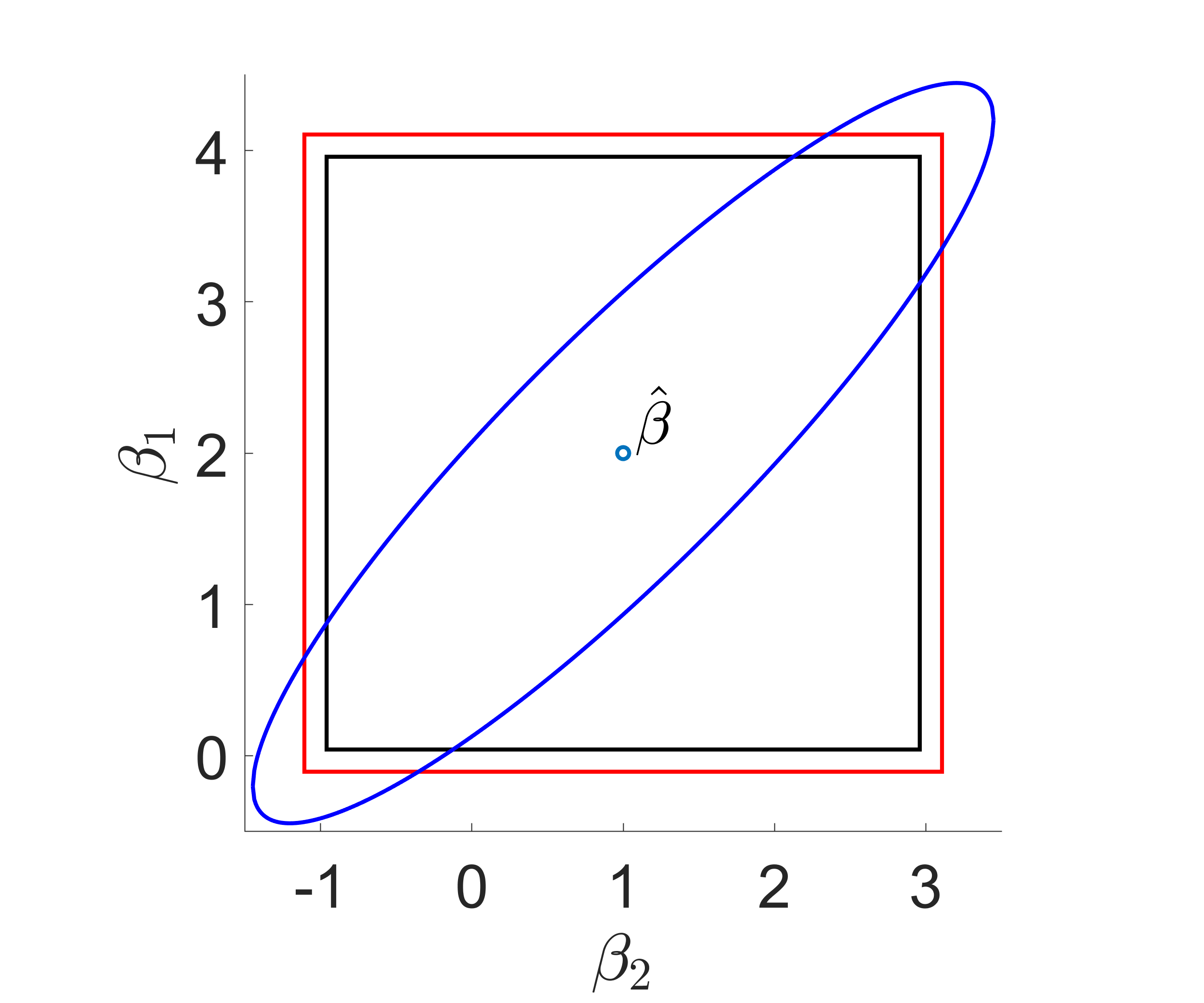}
		\caption{Positive correlation ($Cov(\beta_1, \beta_2)=0.9$).}
		\label{app-fig:diagram_corr_pos}
	\end{subfigure}\hfill
	\begin{subfigure}[t]{.49\linewidth}
		\centering
		\includegraphics[width=\linewidth]{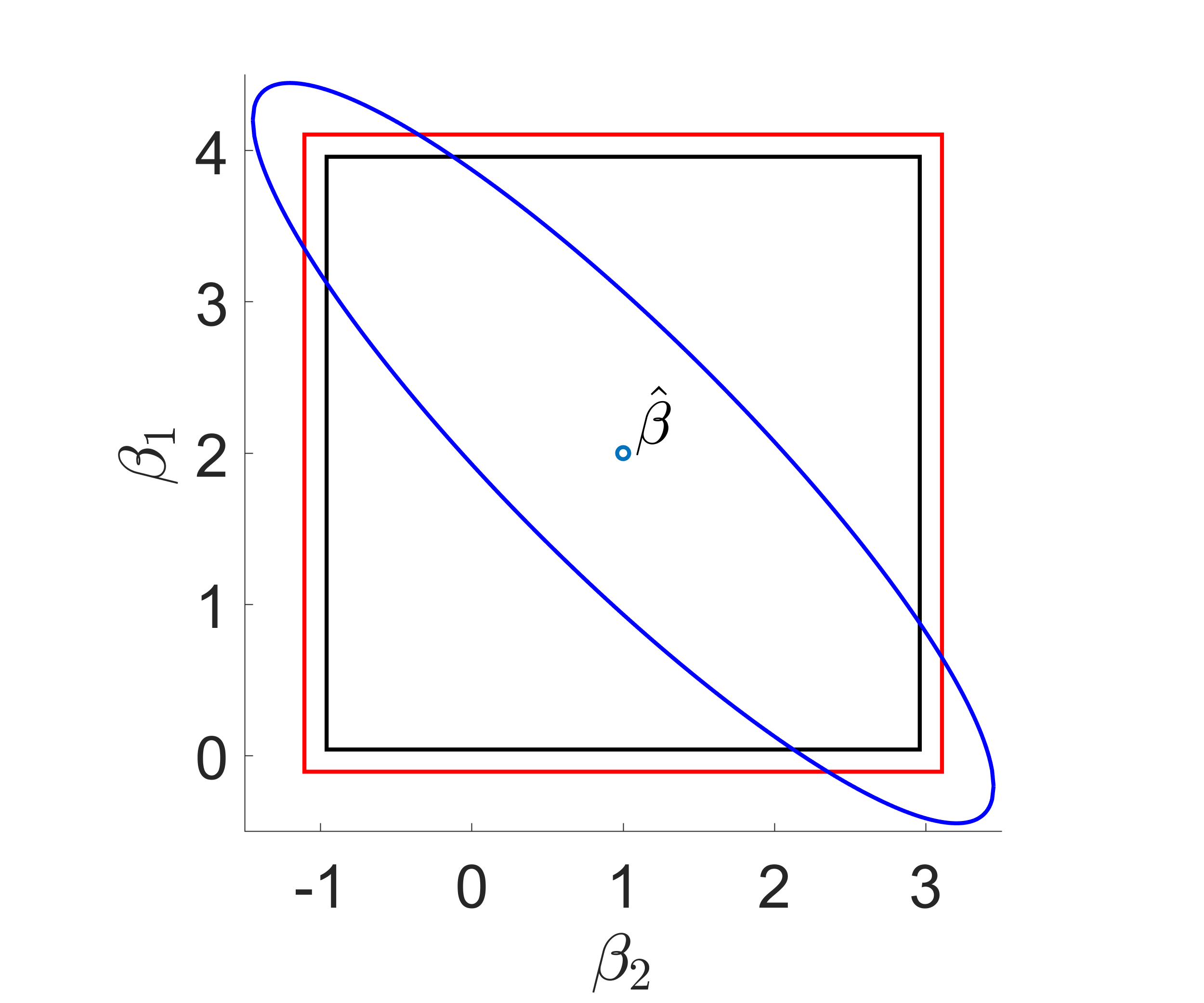}
		\caption{Negative correlation ($Cov(\beta_1, \beta_2)=-0.9$).}
		\label{app-fig:diagram_corr_neg}
	\end{subfigure}
	\caption[Illustration of different confidence regions with correlation.]{Illustration of different confidence regions with non-zero correlation. Pointwise (black), Sup-t (red), and Wald (blue) 95\% confidence region in two dimensions. $\hat{\beta}=(2,1)$. $Var(\beta_1)=Var(\beta_2)=1$.}
	\label{app-fig:diagram_corr}
\end{figure}

Online Appendix Figure \ref{app-fig:diagram_corr} illustrates the three considered confidence regions in settings with non-zero correlation between the two estimates.  
We first note that the pointwise confidence region is identical in Online Appendix Figures \ref{app-fig:diagram} and \ref{app-fig:diagram_corr}, reflecting the fact that this region does not depend on the off-diagonal entries in $Var(\hat{\beta})$. 
Second, while the sup-t region does depend on the off-diagonal entries in $Var(\hat{\beta})$, it only does so in a limited way through the critical value $c_\alpha$. In this example, the sup-t region is different between Online Appendix Figures \ref{app-fig:diagram} and \ref{app-fig:diagram_corr}, but identical in Online Appendix Figures \ref{app-fig:diagram_corr_pos} and \ref{app-fig:diagram_corr_neg}. 
In contrast, the Wald confidence region differs meaningfully in all three figures. This illustration suggests that the off-diagonal entries in $V_{\beta}$ can have important implications for the construction of confidence regions, but that traditional event study plots may be ineffective at visualizing these implications.

\newpage

\section{Algorithmic implementation}\label{app-sec:algorithm}

Recall that the algorithm takes as input jointly normal estimates of the treatment path $\hat{\beta} \sim N(\beta,V_{\beta})$, where $V_{\beta}= \sigma^2 V$, $\sigma^2=\frac{1}{H}\sum_{h=1}^H V_{\beta}(h,h)$, and $V$ is positive-definite. We define the following object:
\begin{equation*}\label{app-eq:obj}
	\begin{aligned}
		\tilde\beta(\lambda_1,\lambda_2,K) &= \arg\min_{b} \ Q(b,\lambda_1,\lambda_2,K) \\
		&= \arg\min_{b} \ \underbrace{(\hat{\beta} - b)'V^{-1}(\hat{\beta} - b)}_{\text{distance from }\hat{\beta}} + \lambda_1 \underbrace{b'D_1'W_1(K)D_1 b}_{\substack{\text{penalty on first difference} \\ \text{after horizon K}}} +
		\lambda_2 \underbrace{b'D_3'W_3 D_3 b,}_{\substack{\text{penalty on} \\ \text{third difference}}}
	\end{aligned}
\end{equation*}
where 
\begin{itemize}
	\item $\lambda_1$, $\lambda_2$, $K$ are tuning parameters that determine the surrogate $M$
	\item $D_1$ and $D_3$ are the $H \times (H-1)$ and $H \times (H-3)$ first and third difference operators 
	\item $V_1 = D_1 V D_1'$, $V_3 = D_3 V D_3'$ are (scaled) variance matrices for first and third differences
	\item $V_1(K)$ is the $(H-K) \times (H-K)$ matrix consisting of the lower right entries of $V_1$, $V_1(K:H-1, K:H-1)$ 
	\item $\bar{V}_3=\frac{1}{H-3}\sum_{h=1}^{H-3} V_3(h,h)$, $\bar{V}_1(K)=\frac{1}{H-K}\sum_{h=K}^{H-1} V_1(h,h)$
	\item 
	$\begin{aligned}[t]
		W_1(K) = \begin{pmatrix}
			\begin{array}{ll}
				\mathbf{0}_{(K-1) \times (K-1)} & \mathbf{0}_{(K-1) \times (H-K)} \\
				\mathbf{0}_{(H-K) \times (K-1)} & diag(V_1(K))/\bar{V}_1(K)
			\end{array}
		\end{pmatrix}
	\end{aligned}$
	\item $W_3 = diag(V_3)/\bar{V}_3$
\end{itemize}
Intuitively, $W_1(K)$ and $W_3$ are analogs to natural scaling in standard ridge with independent columns but different variances.

To select the surrogate $M$ from the data we choose $M = (\lambda_1,\lambda_2,K)$ that minimizes a BIC analog over $\mathcal{M}$: $\hat{M} = \argmin_{M \in \mathcal{M}} (\hat{\beta} - \tilde\beta(M))'V_{\beta}^{-1}(\hat{\beta} - \tilde\beta(M)) + \log(H) \text{df}(\lambda_1,\lambda_2,K)$. The universe of models considered, $\mathcal{M}$, includes 
\begin{enumerate}[label=(\alph*)]
	\item A constant, linear, quadratic, and cubic treatment effect model (with one, two, three, and four degrees of freedom respectively)
	\item Surrogates of the form $P(M)\beta=\beta(\lambda_1,\lambda_2,K)$ using a grid over $(\lambda_1,\lambda_2,K)$ \label{item:grid}
	\item the unrestricted estimates $\hat{\beta}$ ($df=H$) 
\end{enumerate}
We construct the grid for the surrogates under \ref{item:grid} as follows. First, we set lower and upper bounds for $\lambda_1$ and $\lambda_2$. Independent of $K$, these bounds are equal to $(\underline{\lambda}_1, \bar{\lambda}_1)=(e^{-10}, e^{10})$, and $(\underline{\lambda}_2, \bar{\lambda}_2)=(e^{-10},\bar{\lambda}_2)$, where $\bar{\lambda}_2$ is defined as the $\lambda_2$ such that $df(e^{-10}, \lambda_2, K) = 4$.\footnote{Recall that $\textrm{df}(\lambda_1,\lambda_2,K) = \textrm{trace}\left(\left(V^{-1} + \lambda_1 D_1'W_1(K)D_1 + \lambda_2 D_3'W_3 D_3\right)^{-1} V^{-1}\right)$.} Note that, with $\lambda_1=0$, $\bar{\lambda}_2$ also does not depend on $K$. We then consider the Cartesian product of an equal spaced grid of 20 points between $(\log(\underline{\lambda}_1), \log(\bar{\lambda}_1))$ and equal spaced grid of 20 points between $(\log(\underline{\lambda}_2), \log(\bar{\lambda}_2))$, and retain those grid points with $df\in [4,H-1]$.

\newpage

\section{Simulation design}\label{app-sec:add_details}

\begin{table}[hb!]
	\centering
	\begin{tabular}{lc} 
		\hline \hline \\[-0.9em] 
		Scenario & treatment path $\beta$ \\ \hline \hline \\[-0.9em]
		Constant treatment effect  & $\beta_h = -0.4 \; \forall \ h$\\[18pt]
		Smooth, eventually flat   & $ \beta_h = \left\{ \begin{array}{cc} -0.289 + \frac{(18-h)^2}{1000} & h \le 17 \\ -0.289 & h \ge 18 \end{array} \right. $ \\[15pt]
		Hump-shaped   & $\beta_h = -0.4 - 0.4 \sin\bigg(\frac{3}{70}  \pi (h-1)\bigg) \; \forall \ h$\\[10pt]  
		\multirow{3}{*}{Wiggly}   & $\beta_h= \breve \beta_h+ \xi_h$, where $\xi_h \sim N(0,0.1)$ iid across $h$ and \\
		& $\breve \beta_h = \left\{ \begin{array}{cc} -0.4 \sin\bigg(\frac{1}{35}\pi(h-1)\bigg) & h \le 19 \\ -0.4 & h \ge 20 \end{array} \right.$ 
	\end{tabular}
	\caption[Description of different treatment paths]{Detailed description of the four different treatment paths $\beta= \{\beta\}_{h=1}^{36}$ considered in the simulations. We draw a single realization of the ``Wiggly" scenario (which is depicted in Figure \ref{fig:DGPs_wiggly}) to use throughout our simulations.}
	\label{app-tab:DGPs}
\end{table}

We generate the covariance matrix of $\hat{\beta}$ as $V_{\beta} = \sigma^2*diag(S) R \ diag(S)$, where $S_h=(100+h)/100$, and $R$ is a $H \times H$ Toeplitz matrix with $R_{ij}=\rho^{|i-j|}$. 
For all results in the main text, we set $\rho=0$ (such that $R$ becomes the identity matrix). 

\newpage

\section{Additional simulation results}\label{app-sec:add_results}
\begin{figure}[tbhp!]
	\begin{subfigure}[t]{.48\textwidth}
		\centering
		\includegraphics[width=\linewidth]{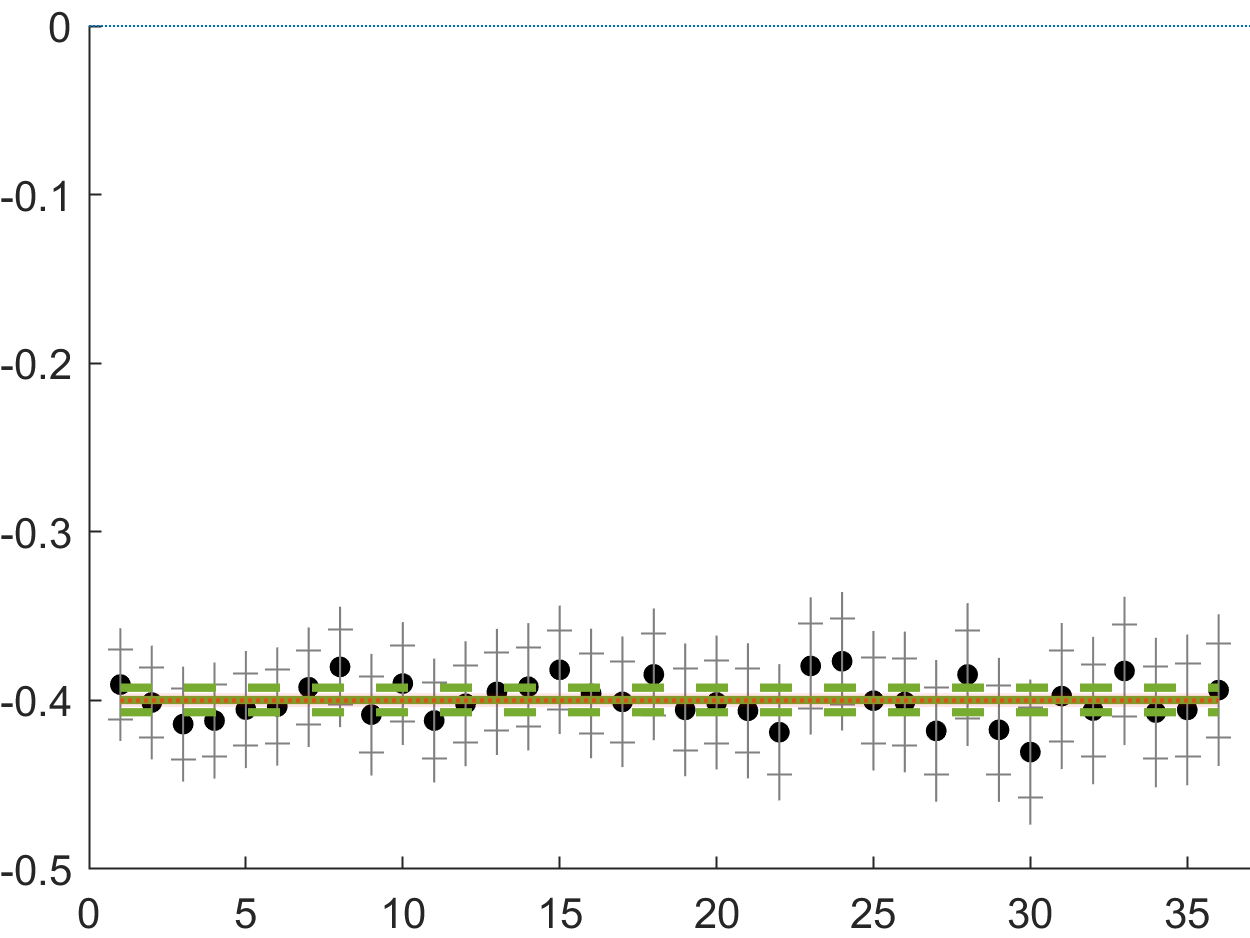}
		\caption{treatment path constant}
		\label{app-fig:new_viz_constant}
	\end{subfigure}\hfill
	\begin{subfigure}[t]{.48\linewidth}
		\centering
		\includegraphics[width=\linewidth]{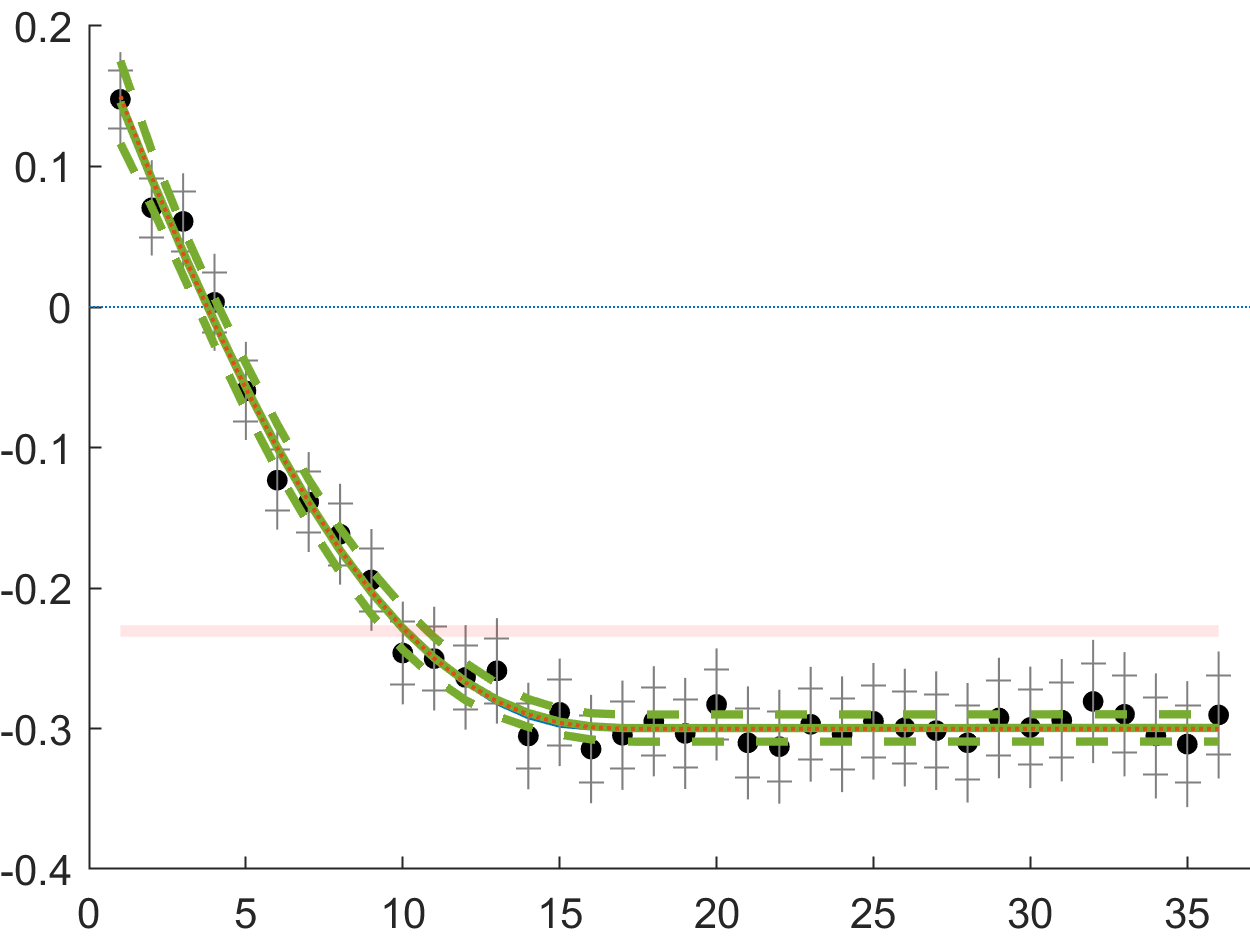}
		\caption{treatment path smooth, eventually flat}
		\label{app-fig:new_viz_quadratic}
	\end{subfigure}
	
	\begin{subfigure}[t]{.48\textwidth}
		\centering
		\includegraphics[width=\linewidth]{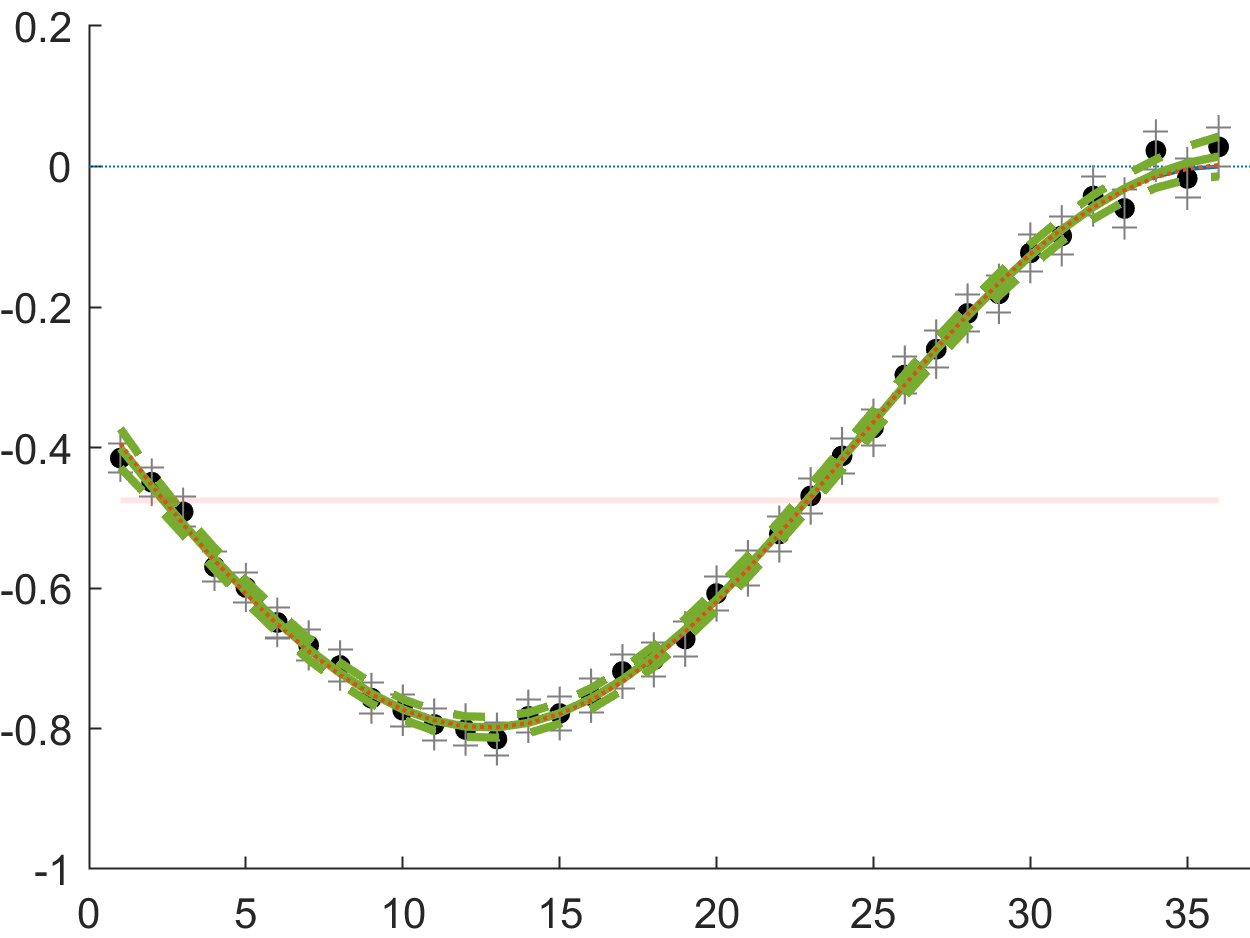}
		\caption{treatment path hump-shaped}
		\label{app-fig:new_viz_no_flat}
	\end{subfigure}\hfill
	\begin{subfigure}[t]{.48\linewidth}
		\centering
		\includegraphics[width=\linewidth]{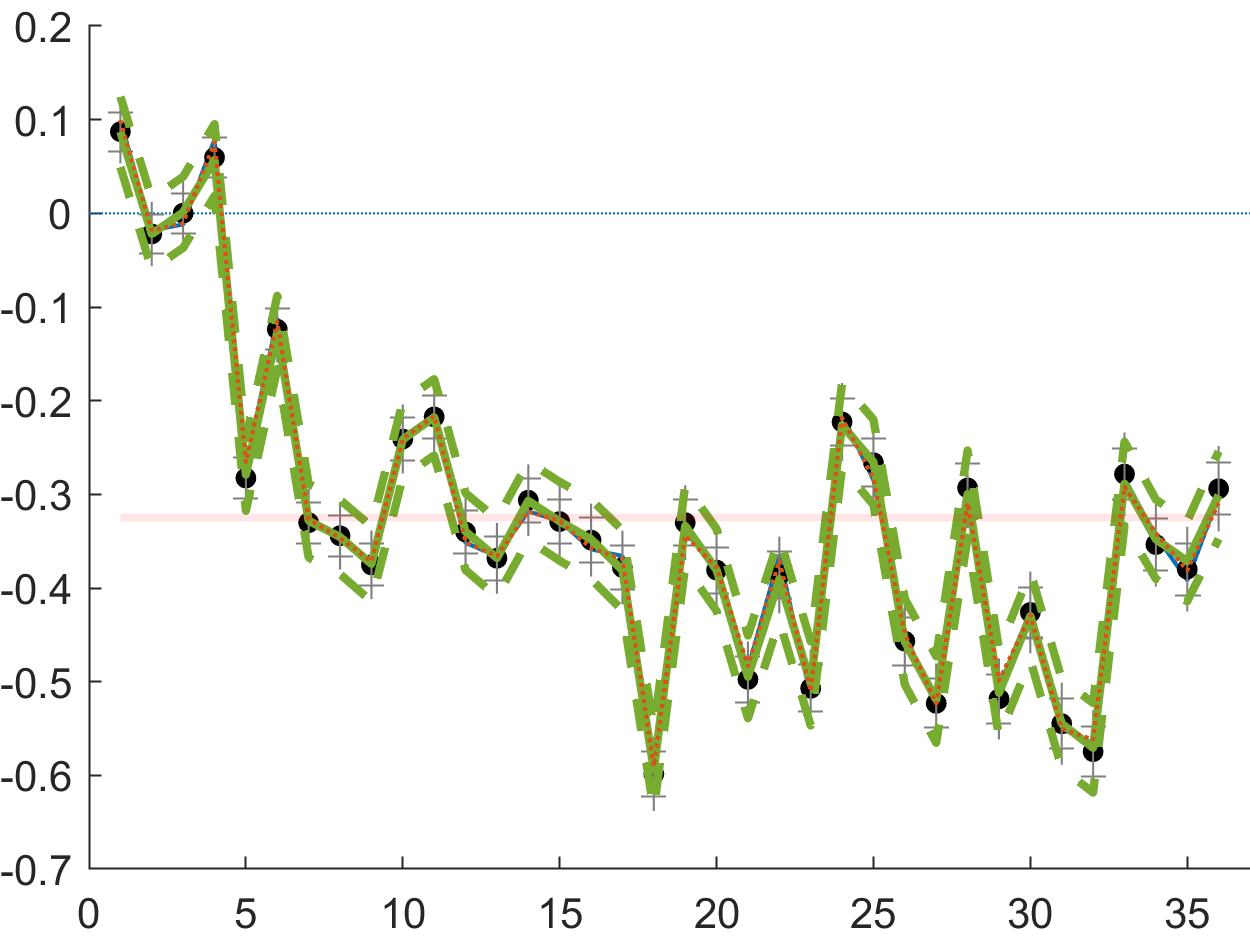}
		\caption{treatment path wiggly}
		\label{app-fig:new_viz_wiggly}
	\end{subfigure}
	\caption[Exemplary event-study plots including our proposals with smaller noise.]{Exemplary event-study plots including our proposals with smaller noise than in Figure \ref{fig:new_viz}.}
	\label{app-fig:new_viz_largen}
\end{figure}

\begin{figure}[tb!]
	\centering
	\begin{subfigure}[h]{0.49\textwidth}
		\includegraphics[width=\linewidth]{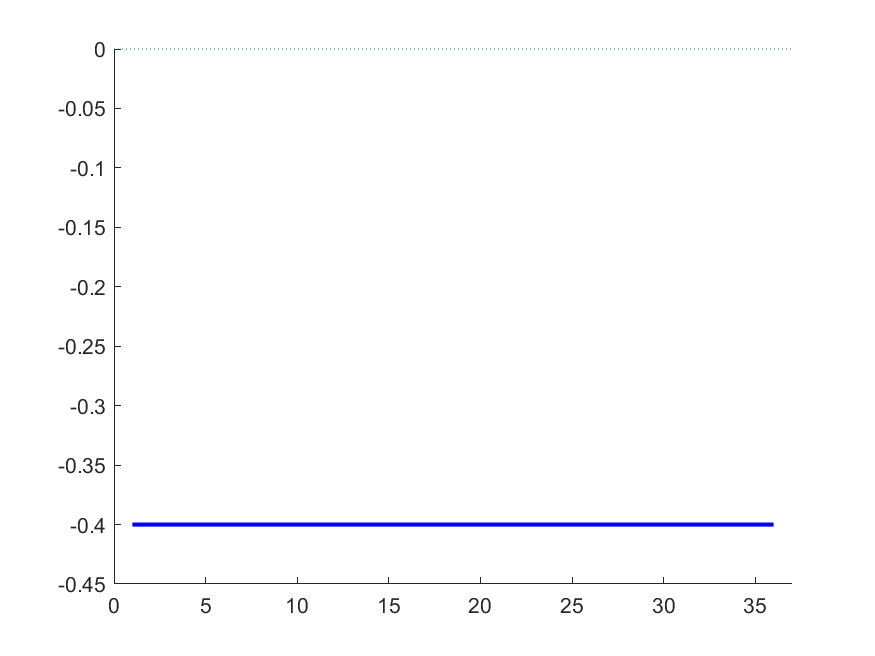}
		\caption{Constant Treatment Effect}
		\label{app-fig:model_universe_constant}
	\end{subfigure}
	\hfill
	\begin{subfigure}[h]{0.49\textwidth}
		\includegraphics[width=\linewidth]{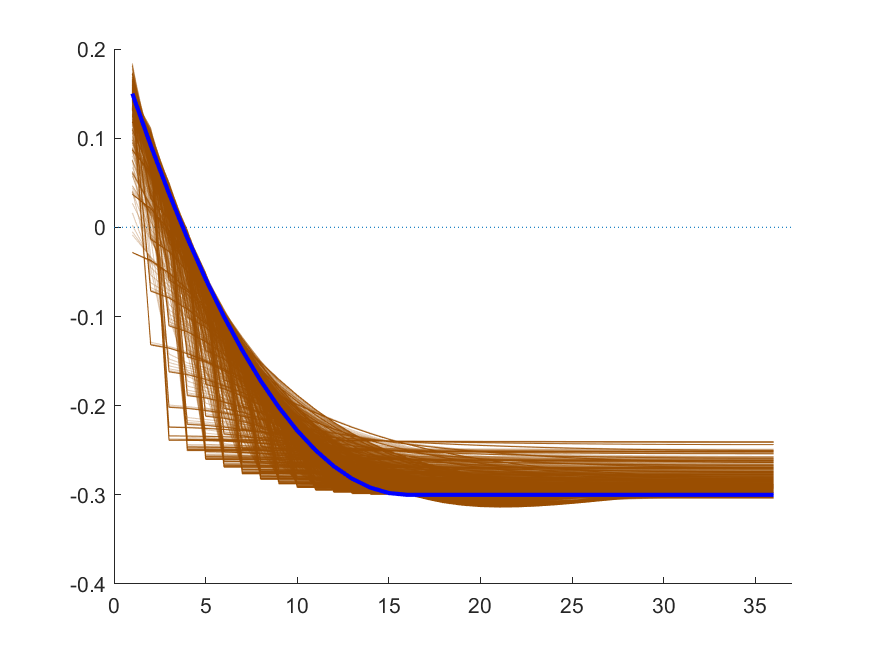}
		\caption{Smooth, eventually flat}
		\label{app-fig:model_universe_quadratic}
	\end{subfigure}
	
	\begin{subfigure}[h]{0.49\textwidth}
		\includegraphics[width=\linewidth]{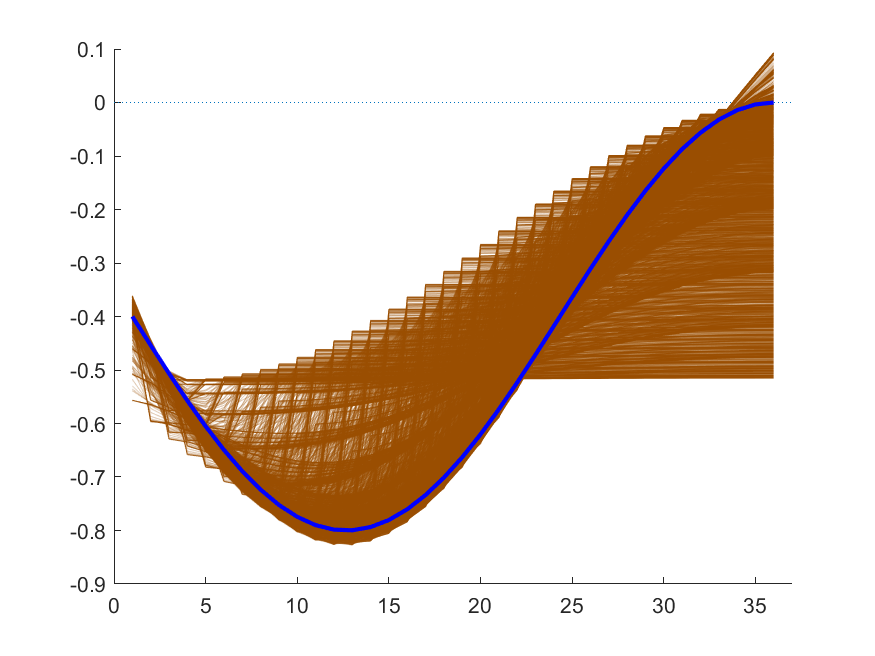}
		\caption{Hump-shaped}
		\label{app-fig:model_universe_noflat}
	\end{subfigure}
	\hfill
	\begin{subfigure}[h]{0.49\textwidth}
		\includegraphics[width=\linewidth]{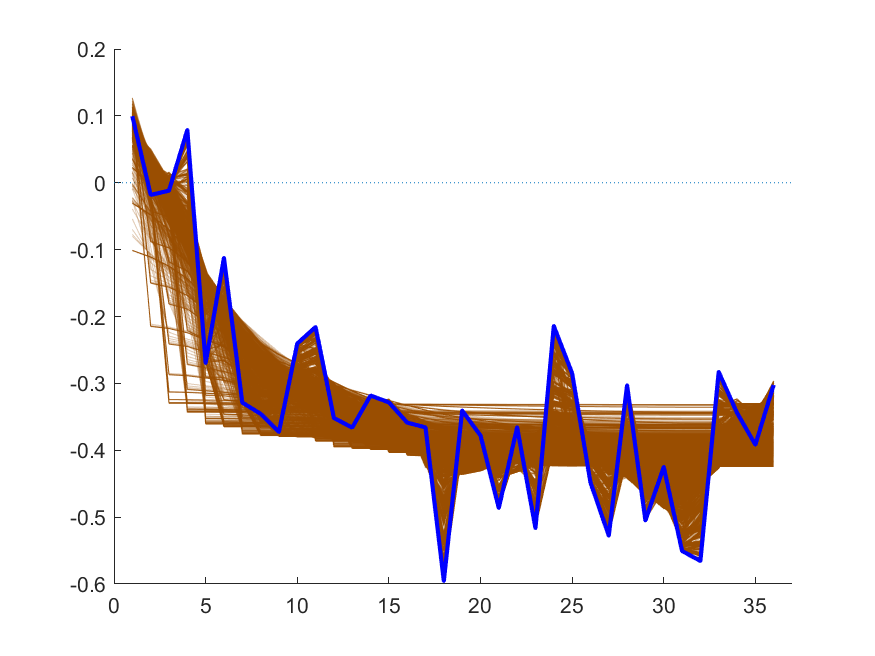}
		\caption{Wiggly}
		\label{app-fig:model_universe_wiggly}
	\end{subfigure}
	\caption[Illustration of Model universe $\mathcal{M}$]{Illustration of Model universe $\mathcal{M}$. Brown lines correspond to all considered models $M$ of the form $P(M)\beta=\beta(\lambda_1,\lambda_2,K)$ with $df \in [4, H-1]$. Blue line corresponds to true treatment effect $\beta$.}
	\label{app-fig:model_universe}
\end{figure}

\begin{figure}[tb!]
	\centering
	\begin{subfigure}[h]{0.49\textwidth}
		\includegraphics[width=\linewidth]{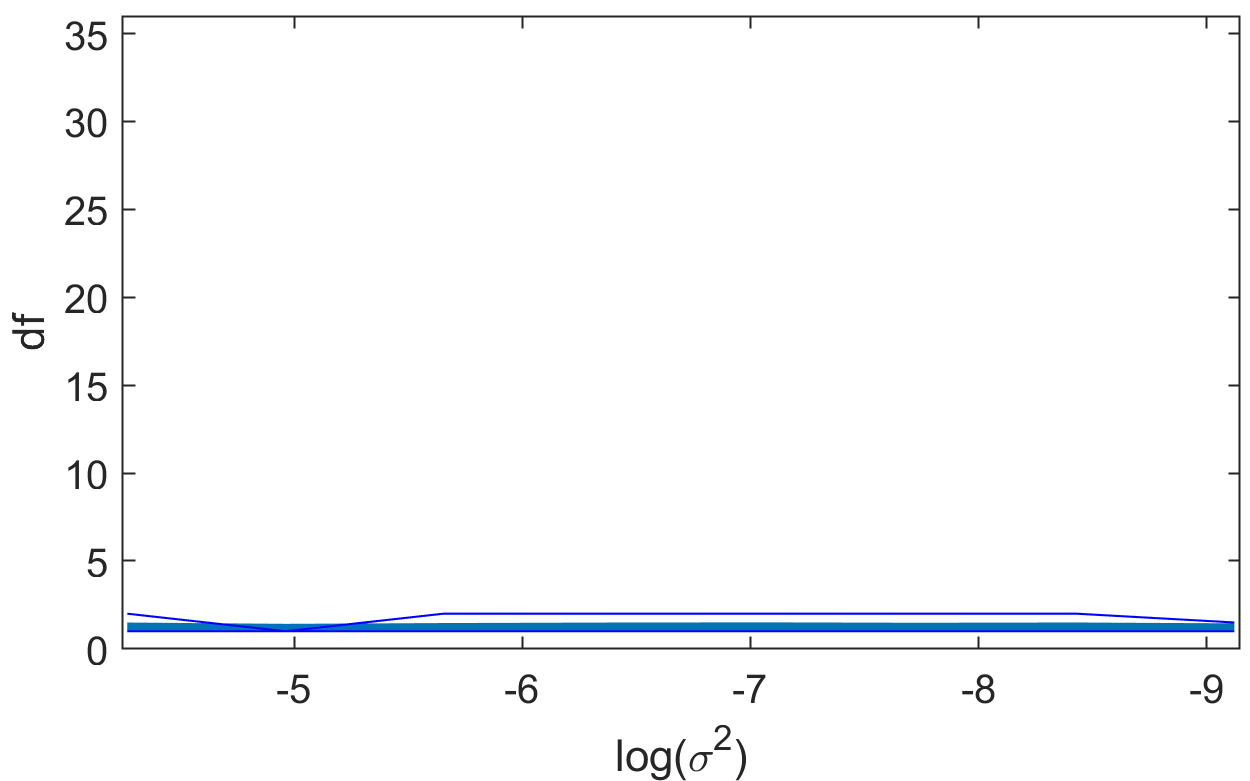}
		\caption{Constant Treatment Effect}
		\label{app-fig:df_constant}
	\end{subfigure}
	\hfill
	\begin{subfigure}[h]{0.49\textwidth}
		\includegraphics[width=\linewidth]{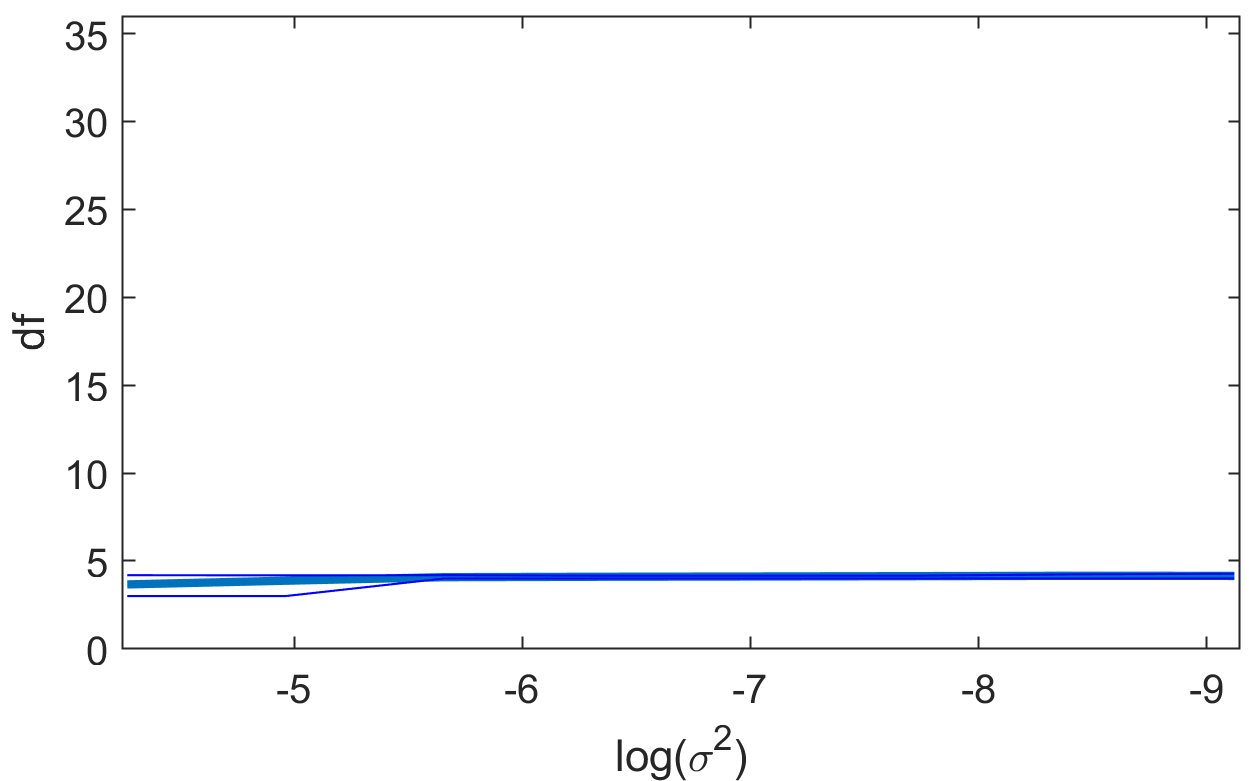}
		\caption{Smooth, eventually flat}
		\label{app-fig:df_quadratic}
	\end{subfigure}
	
	\begin{subfigure}[h]{0.49\textwidth}
		\includegraphics[width=\linewidth]{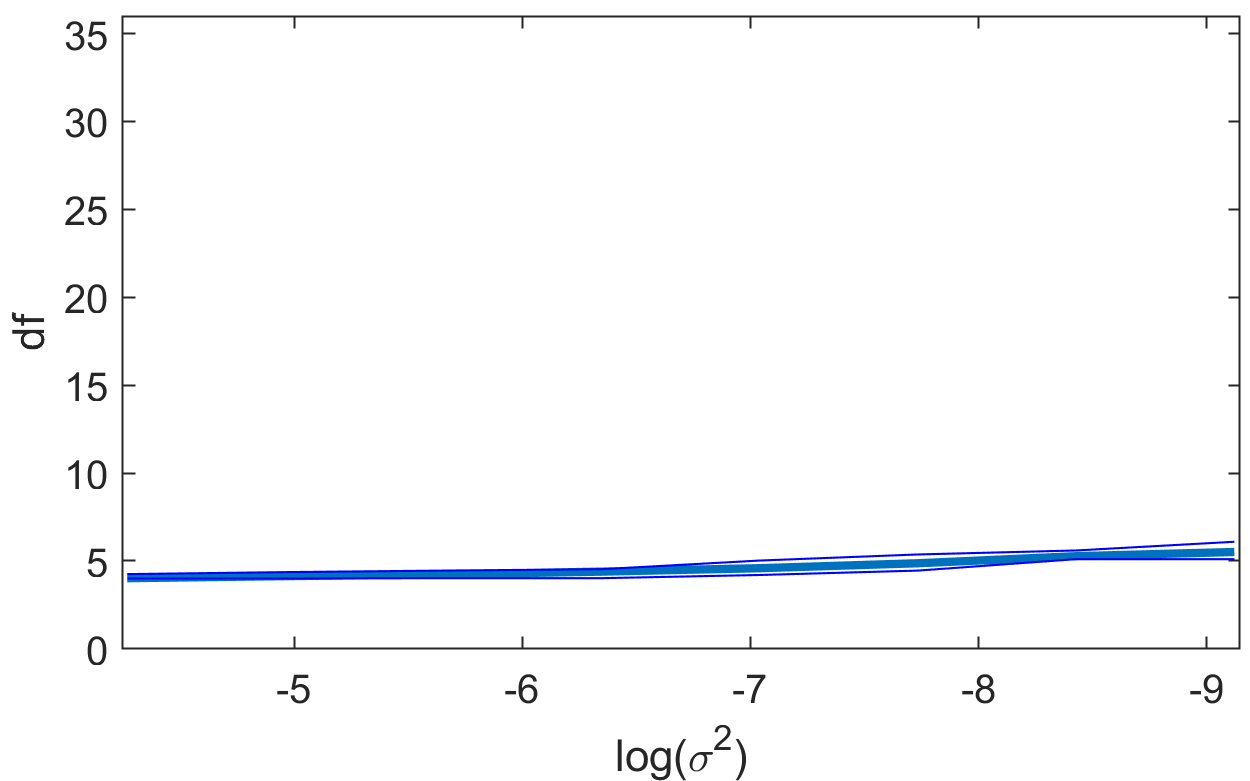}
		\caption{Hump-shaped}
		\label{app-fig:df_noflat}
	\end{subfigure}
	\hfill
	\begin{subfigure}[h]{0.49\textwidth}
		\includegraphics[width=\linewidth]{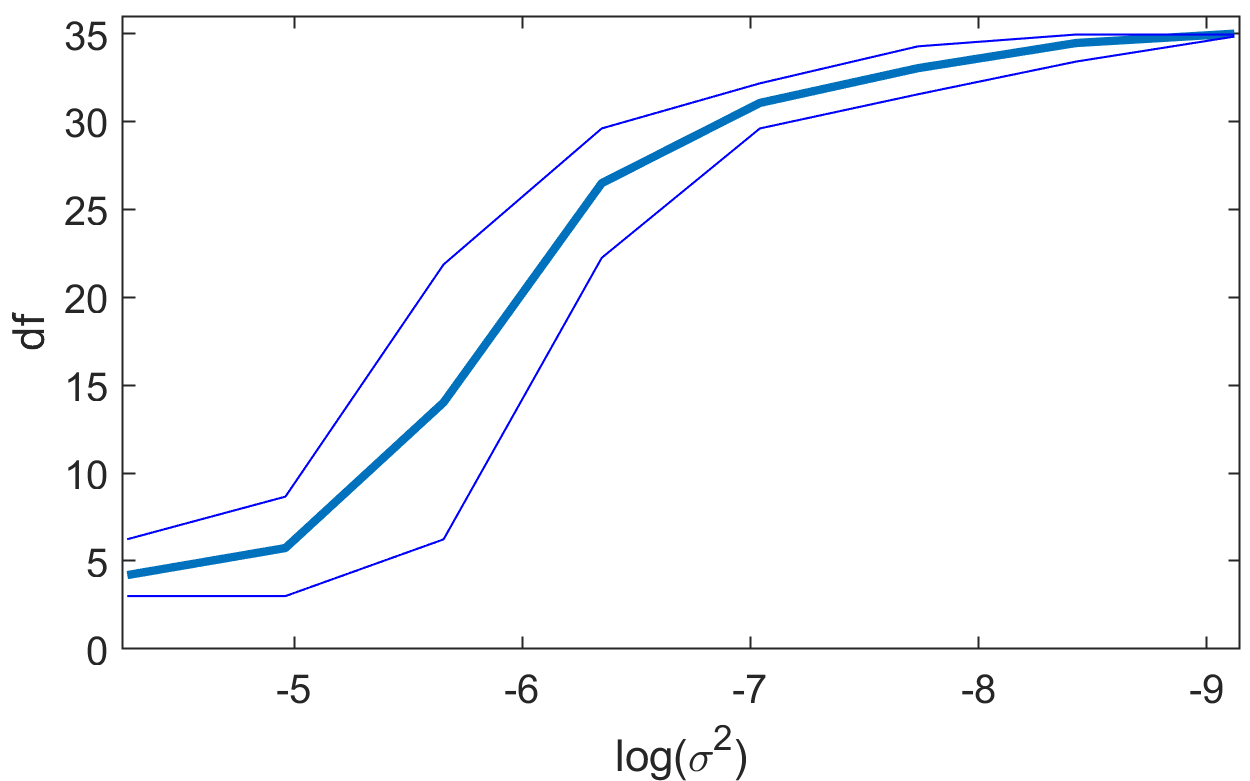}
		\caption{Wiggly}
		\label{app-fig:df_wiggly}
	\end{subfigure}
	\caption[Chosen $df$ for restricted estimates]{Chosen $df$ across realizations for restricted estimates as a function of the amount of noise $\sigma^2$ in the initial estimates $\hat{\beta}$.} \label{app-fig:df}
\end{figure}

\begin{figure}[tb!]
	\centering
	\begin{subfigure}[h]{0.4\textwidth}
		\includegraphics[width=\linewidth]{simulation/wiggly/no_cor/surrogates_iid.png}
		\caption{$log(\sigma^2)=-4.27$}
	\end{subfigure}
	\hfill
	\begin{subfigure}[h]{0.4\textwidth}
		\includegraphics[width=\linewidth]{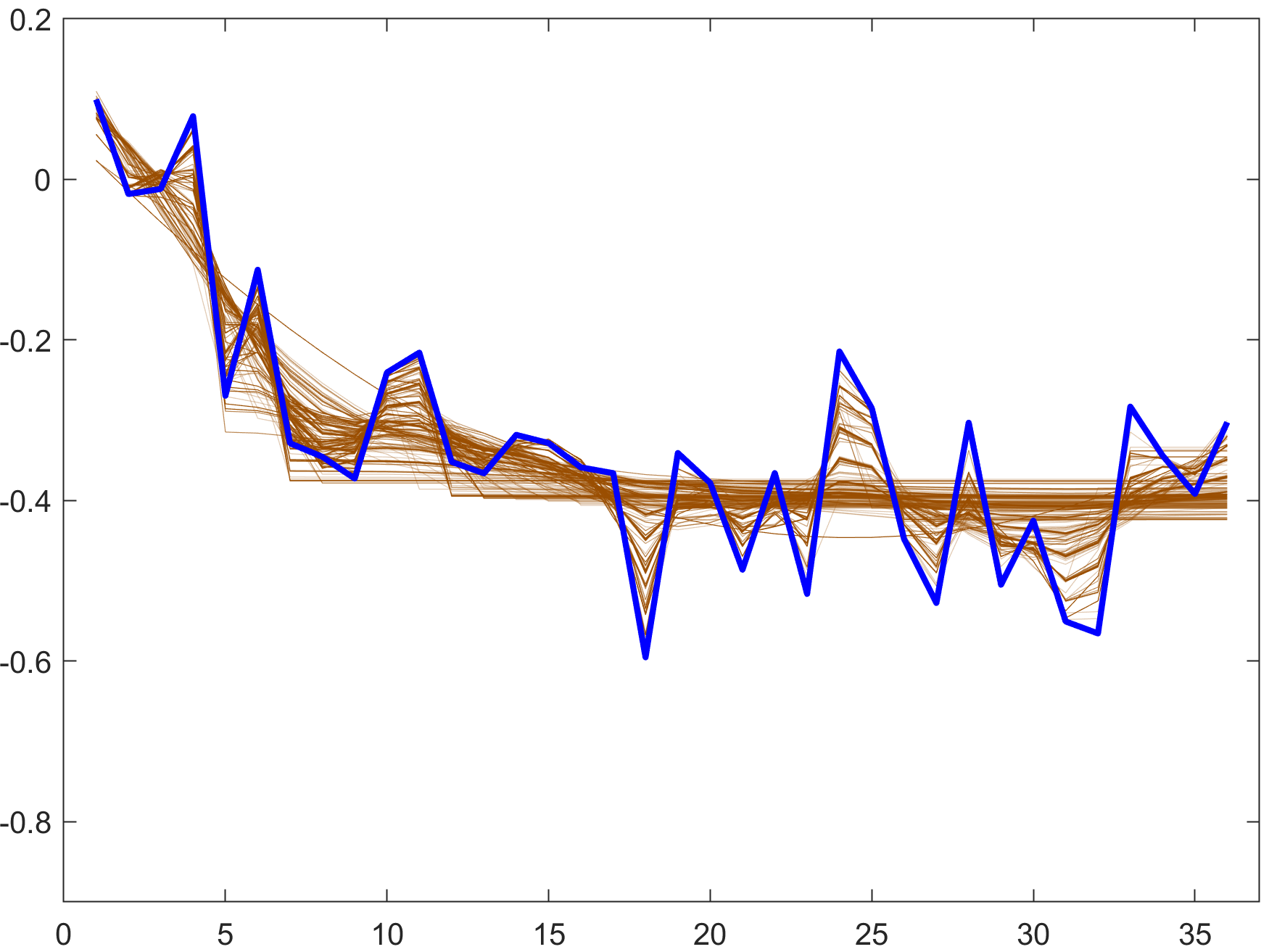}
		\caption{$log(\sigma^2)=-5.66$}
	\end{subfigure}
	
	\begin{subfigure}[h]{0.4\textwidth}
		\includegraphics[width=\linewidth]{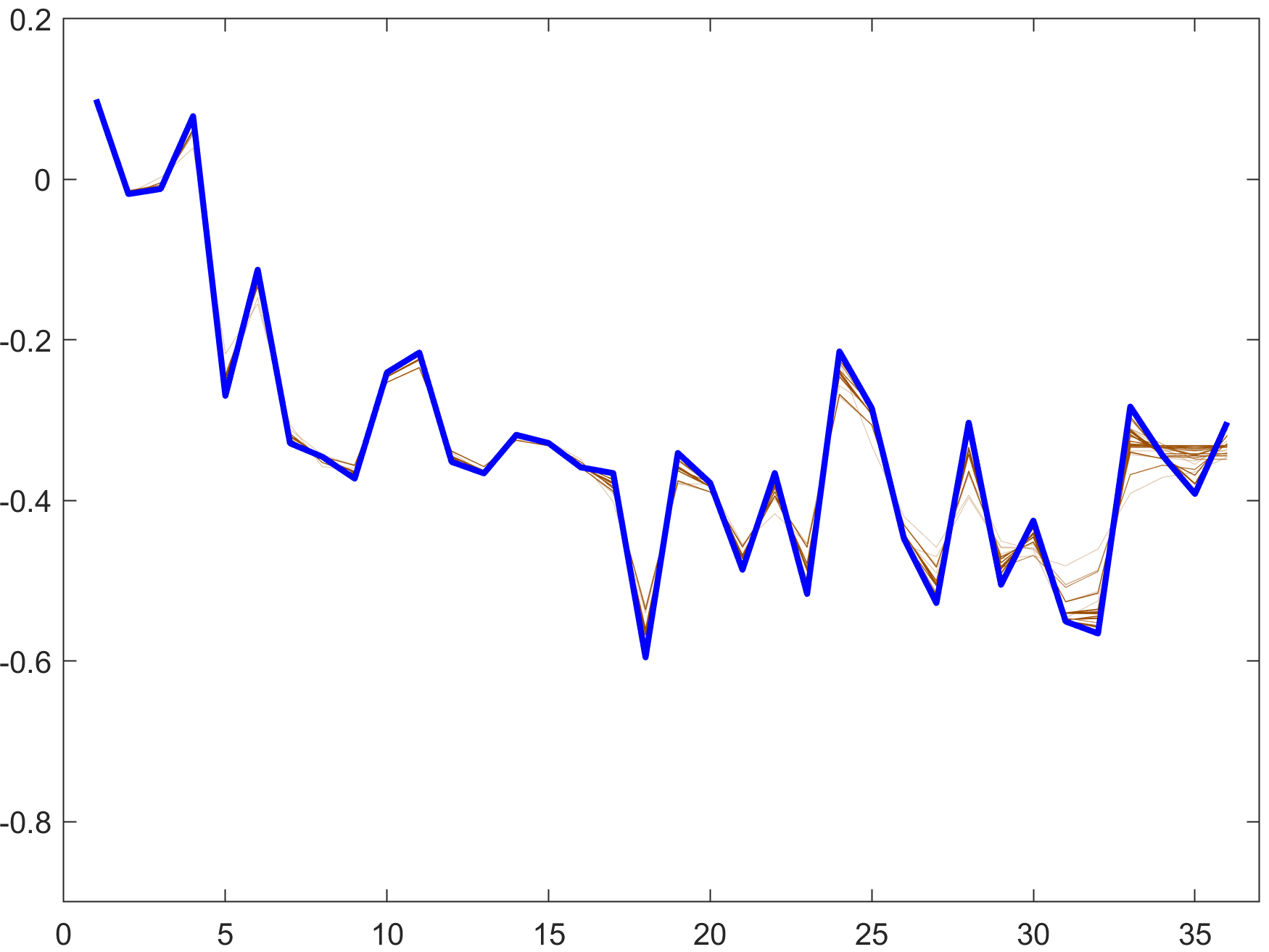}
		\caption{$log(\sigma^2)=-7.04$}
	\end{subfigure}
	\hfill
	\begin{subfigure}[h]{0.4\textwidth}
		\includegraphics[width=\linewidth]{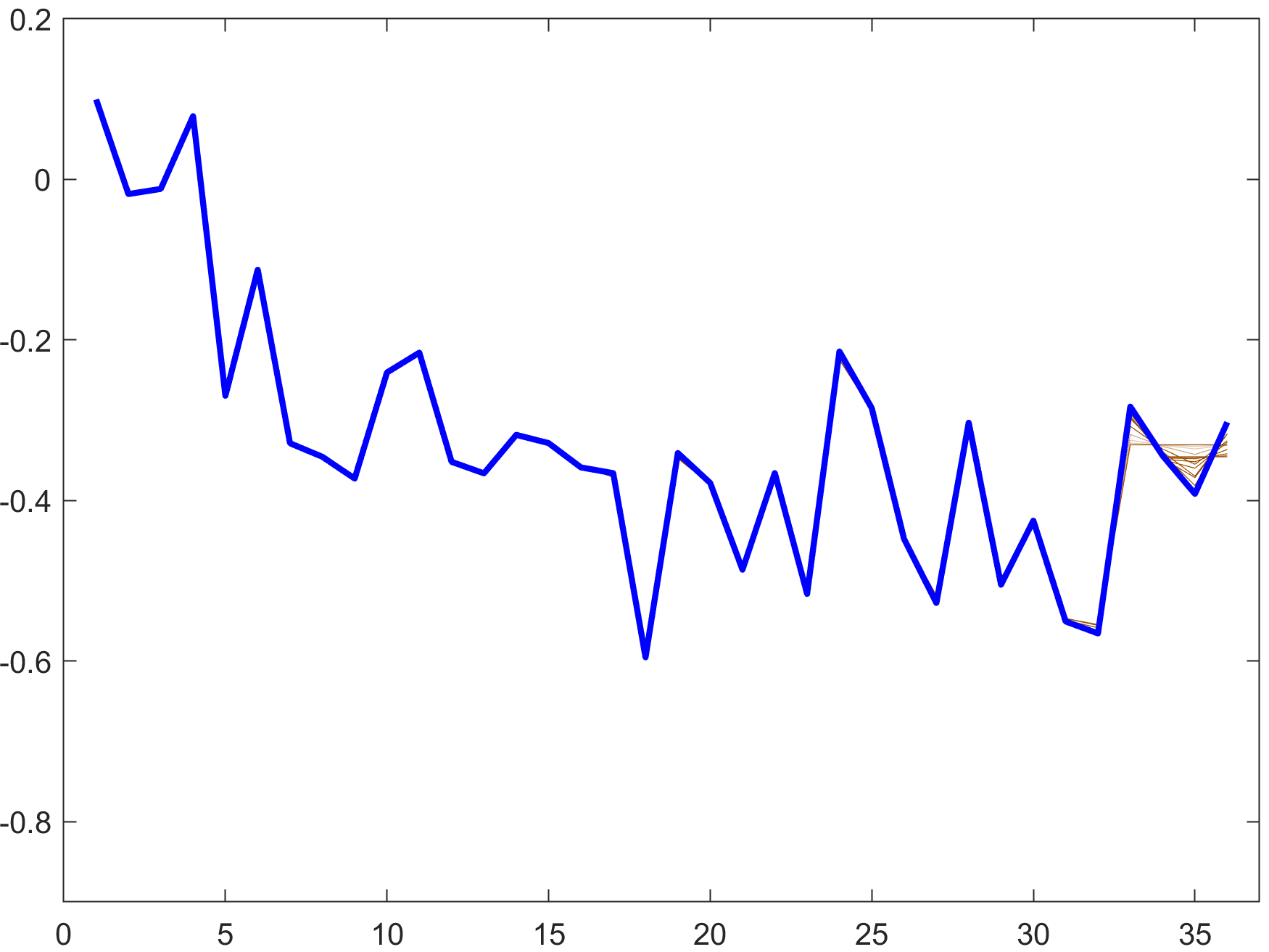}
		\caption{$log(\sigma^2)=-8.43$}
	\end{subfigure}
	\caption[Illustration of chosen surrogates for Wiggly DGP.]{Illustration of 1,000 chosen surrogates for Wiggly DGP for various levels of $\sigma^2$, the amount of noise in the initial estimates $\hat{\beta}$.}
	\label{app-fig:surrogates_wiggly_noise}
\end{figure}

\clearpage

\newpage

\section{Simulation results with positively correlated estimates}\label{app-sec:general_cov}

In this appendix, we repeat the simulation experiment reported in the main test using a covariance matrix $V_{\beta}$ capturing positively correlated estimates. In particular, recall that $V_{\beta} = \sigma^2*diag(S) R \ diag(S)$, where $S_h=(100+h)/100$, and $R$ is a $H \times H$ Toeplitz matrix with $R_{ij}=\rho^{|i-j|}$. In the results that follow, we set $\rho=0.8$.

\begin{figure}[hb!]
	\centering
	\begin{subfigure}[h]{0.49\textwidth}
		\includegraphics[width=\linewidth]{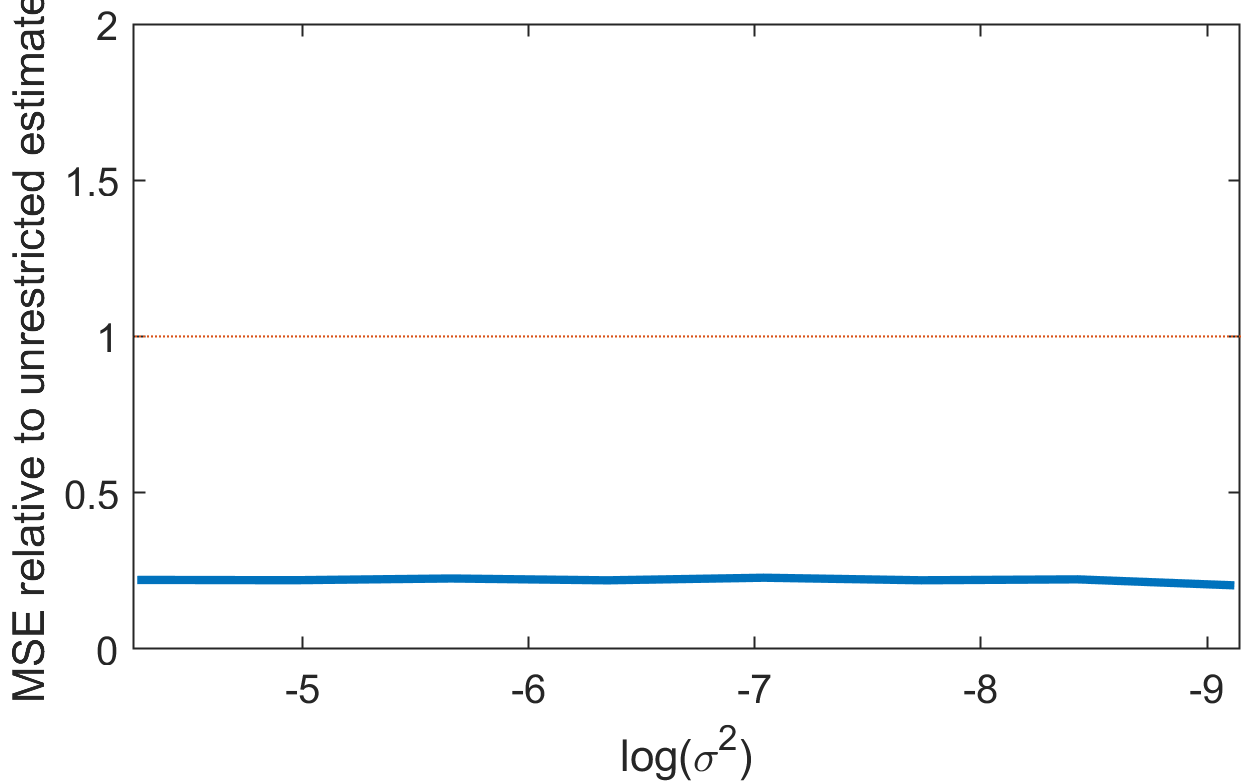}
		\caption{Constant Treatment Effect}
		\label{app-fig:mse_constant}
	\end{subfigure}
	\hfill
	\begin{subfigure}[h]{0.49\textwidth}
		\includegraphics[width=\linewidth]{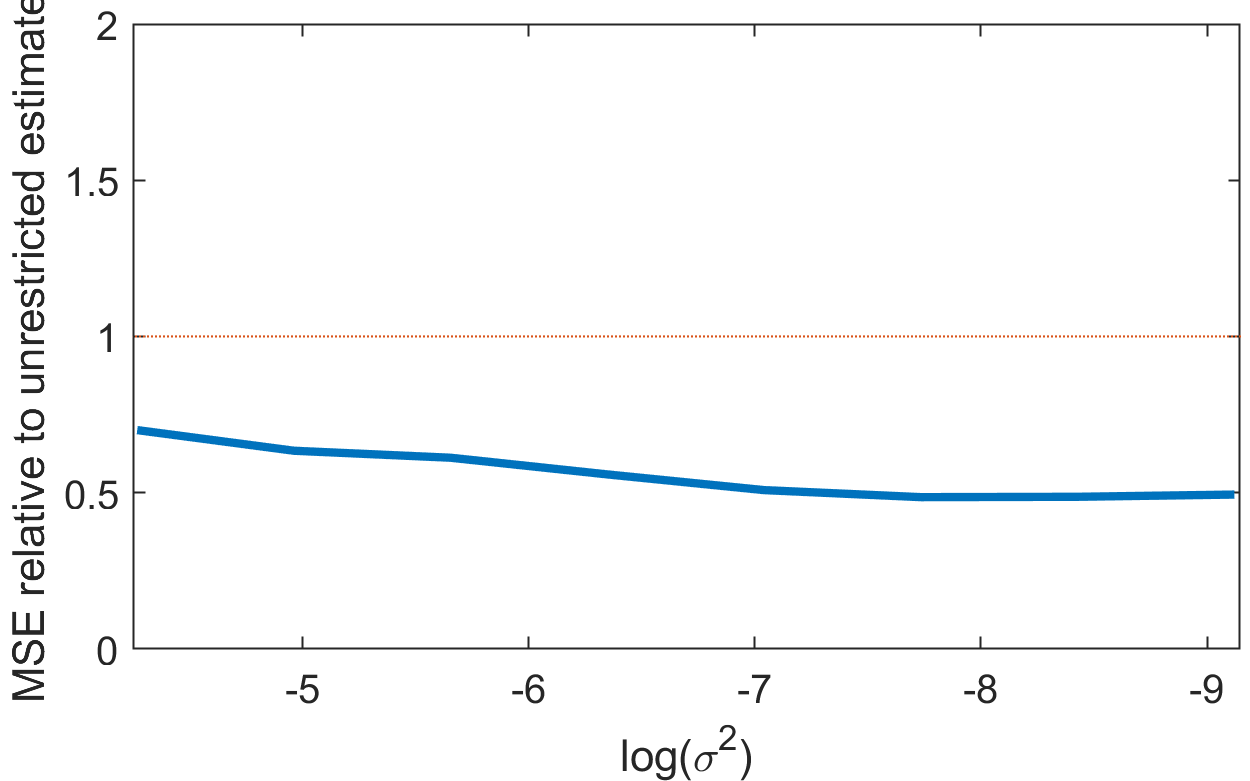}
		\caption{Smooth, eventually flat}
		\label{app-fig:mse_quadratic}
	\end{subfigure}
	\begin{subfigure}[h]{0.49\textwidth}
		\includegraphics[width=\linewidth]{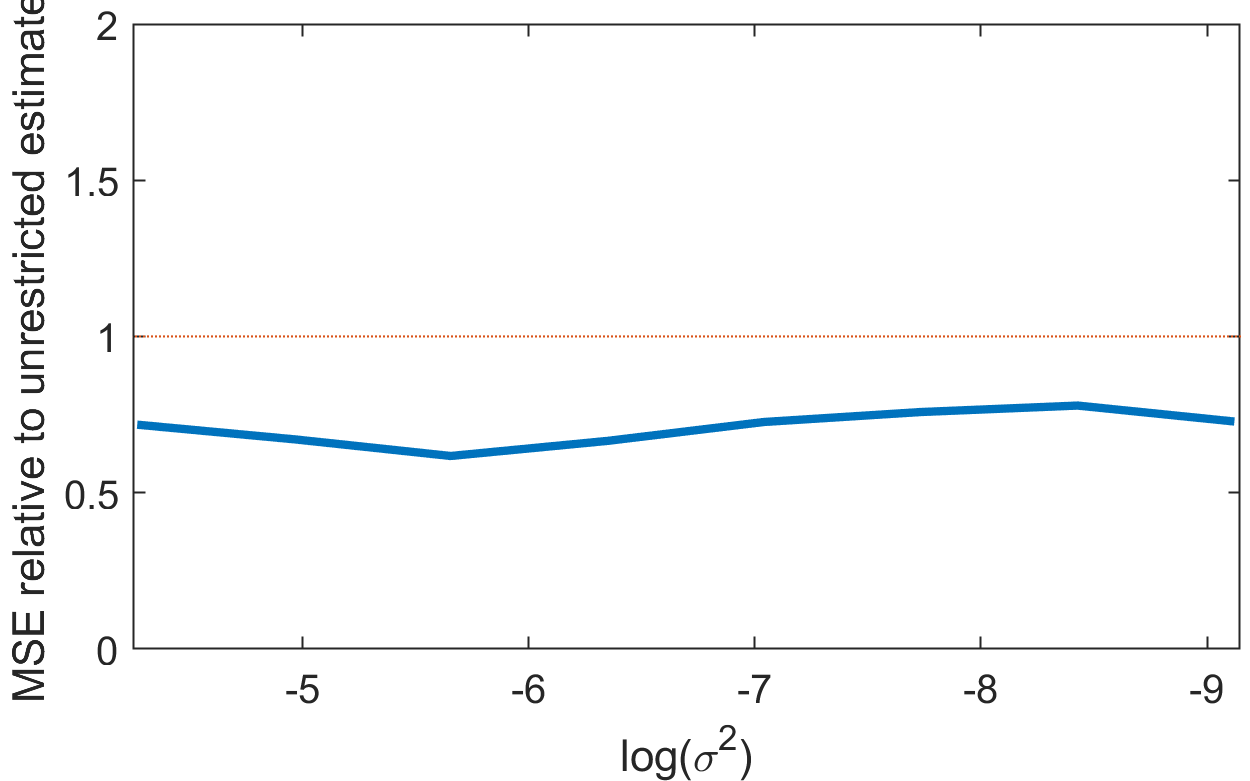}
		\caption{Hump-shaped}
		\label{app-fig:mse_noflat}
	\end{subfigure}
	\hfill
	\begin{subfigure}[h]{0.49\textwidth}
		\includegraphics[width=\linewidth]{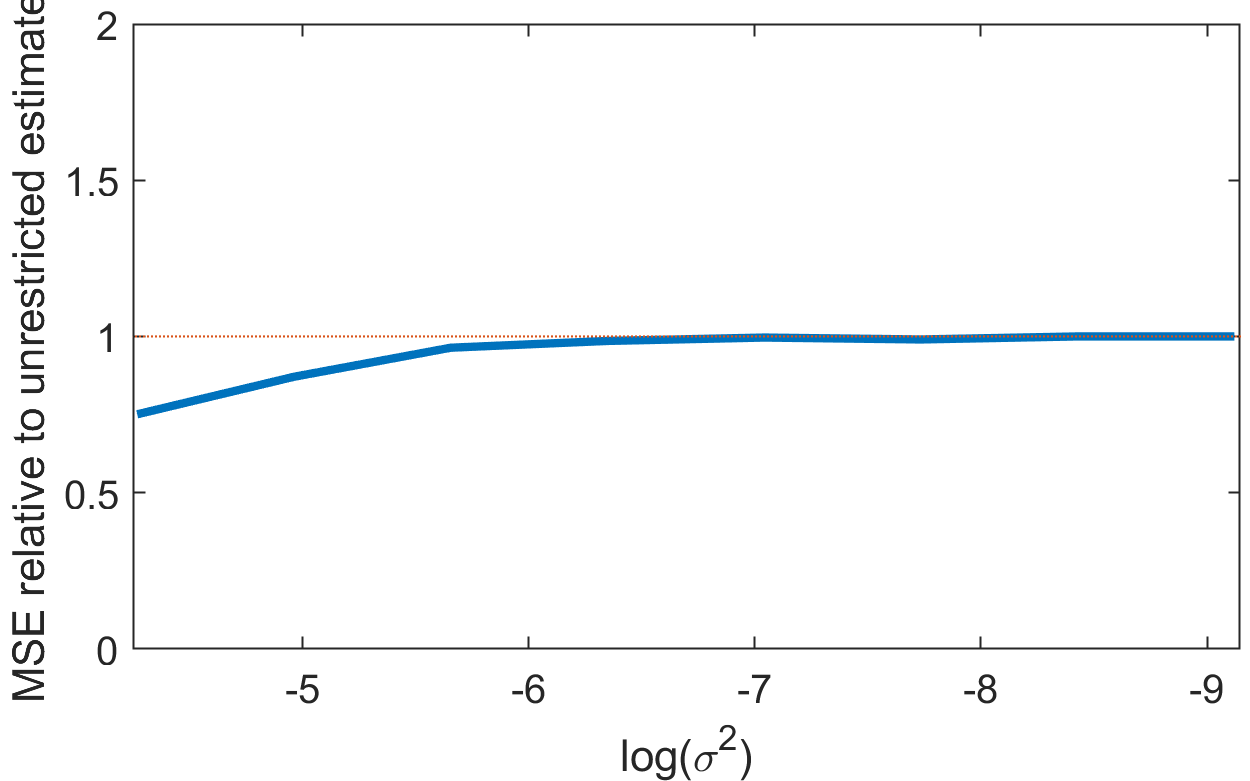}
		\caption{Wiggly}
		\label{app-fig:mse_wiggly}
	\end{subfigure}
	\caption[Relative performance of restricted and unrestricted estimators under correlation.]{Relative performance of restricted and unrestricted estimators. Depicted is the ratio $\frac{MSE(\tilde{\beta}(\hat{M}))}{MSE(\hat{\beta})}$ as a function of the amount of noise in the initial estimates $\hat{\beta}$.}
	\label{app-fig:MSEs}
\end{figure}

\begin{figure}[tb!]
	\centering
	\begin{subfigure}[h]{0.49\textwidth}
		\includegraphics[width=\linewidth]{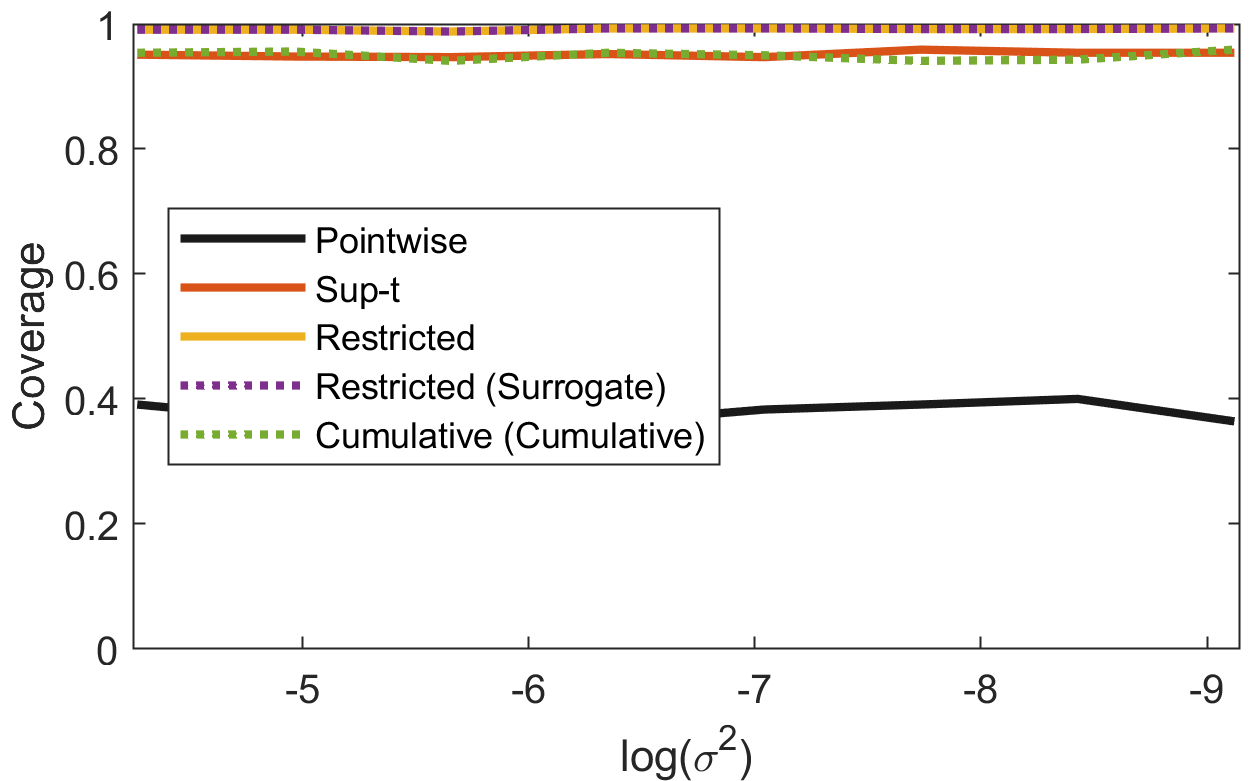}
		\caption{Constant Treatment Effect}
		\label{app-fig:coverage_constant}
	\end{subfigure}
	\hfill
	\begin{subfigure}[h]{0.49\textwidth}
		\includegraphics[width=\linewidth]{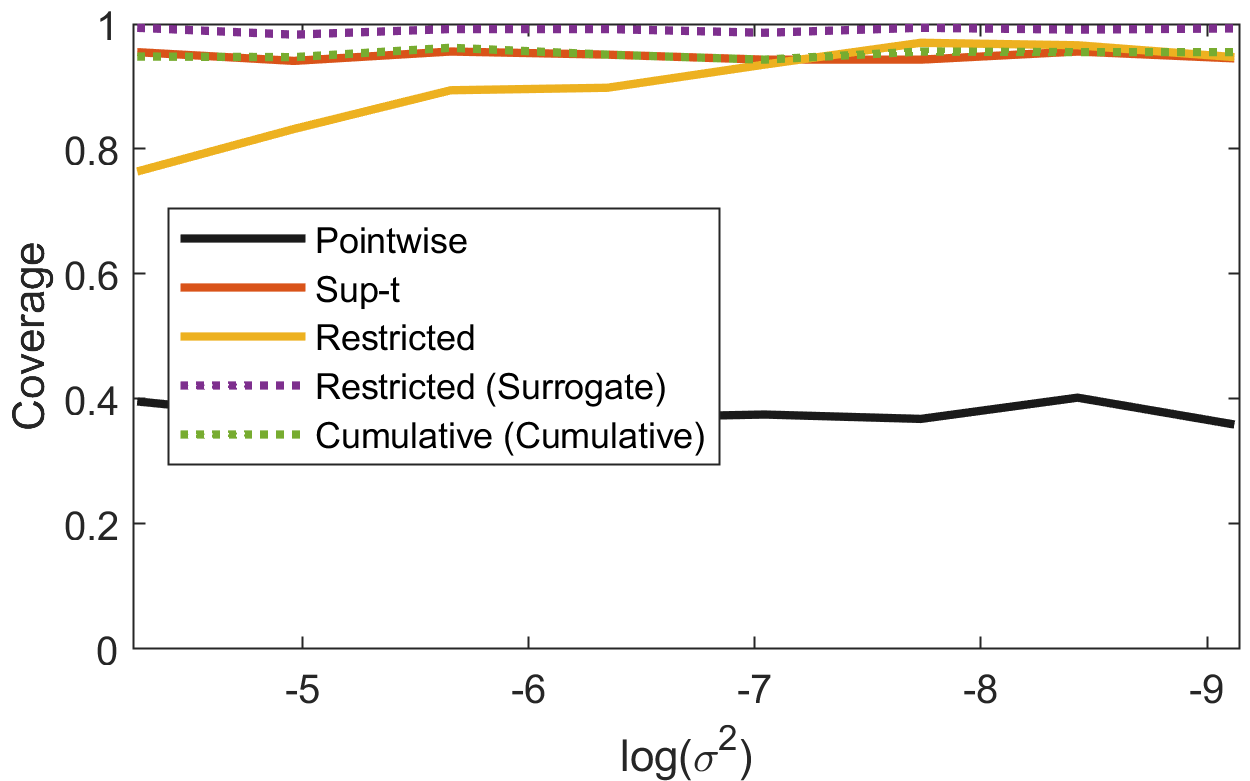}
		\caption{Smooth, eventually flat}
		\label{app-fig:coverage_quadratic}
	\end{subfigure}
	\begin{subfigure}[h]{0.49\textwidth}
		\includegraphics[width=\linewidth]{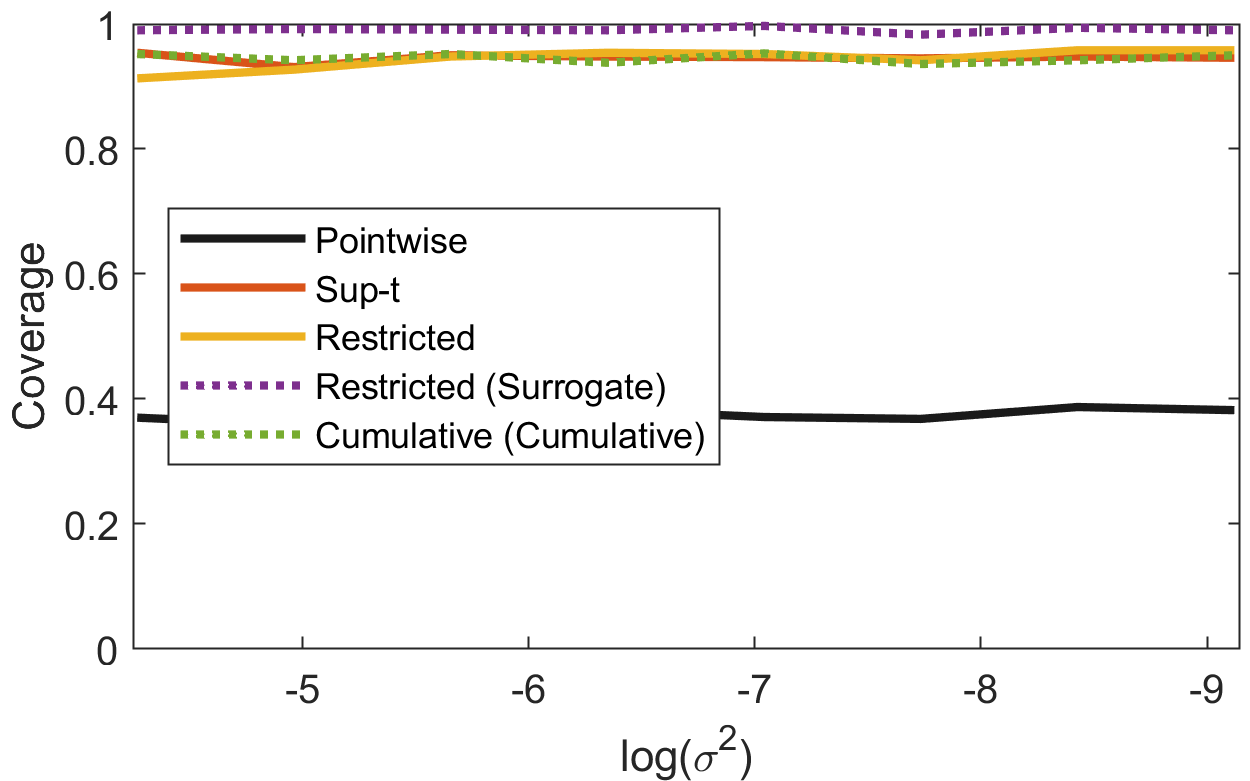}
		\caption{Hump-shaped}
		\label{app-fig:coverage_noflat}
	\end{subfigure}
	\hfill
	\begin{subfigure}[h]{0.49\textwidth}
		\includegraphics[width=\linewidth]{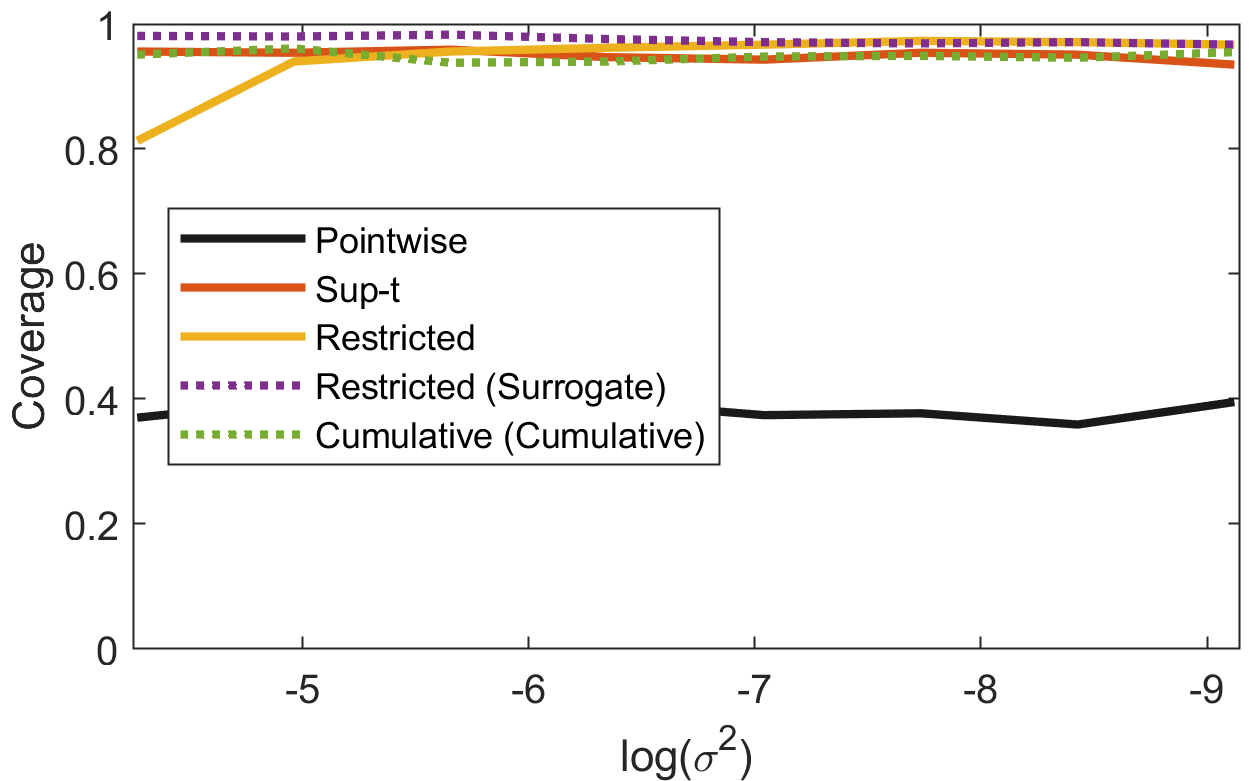}
		\caption{Wiggly}
		\label{app-fig:coverage_wiggly}
	\end{subfigure}
	\caption[Coverage properties of various confidence regions under correlation]{Coverage properties of various confidence regions as a function of the amount of noise in the initial estimates $\hat{\beta}$.}
	\label{app-fig:coverage}
\end{figure}

\begin{figure}[thb!]
	\centering
	\begin{subfigure}[h]{0.49\textwidth}
		\includegraphics[width=\linewidth]{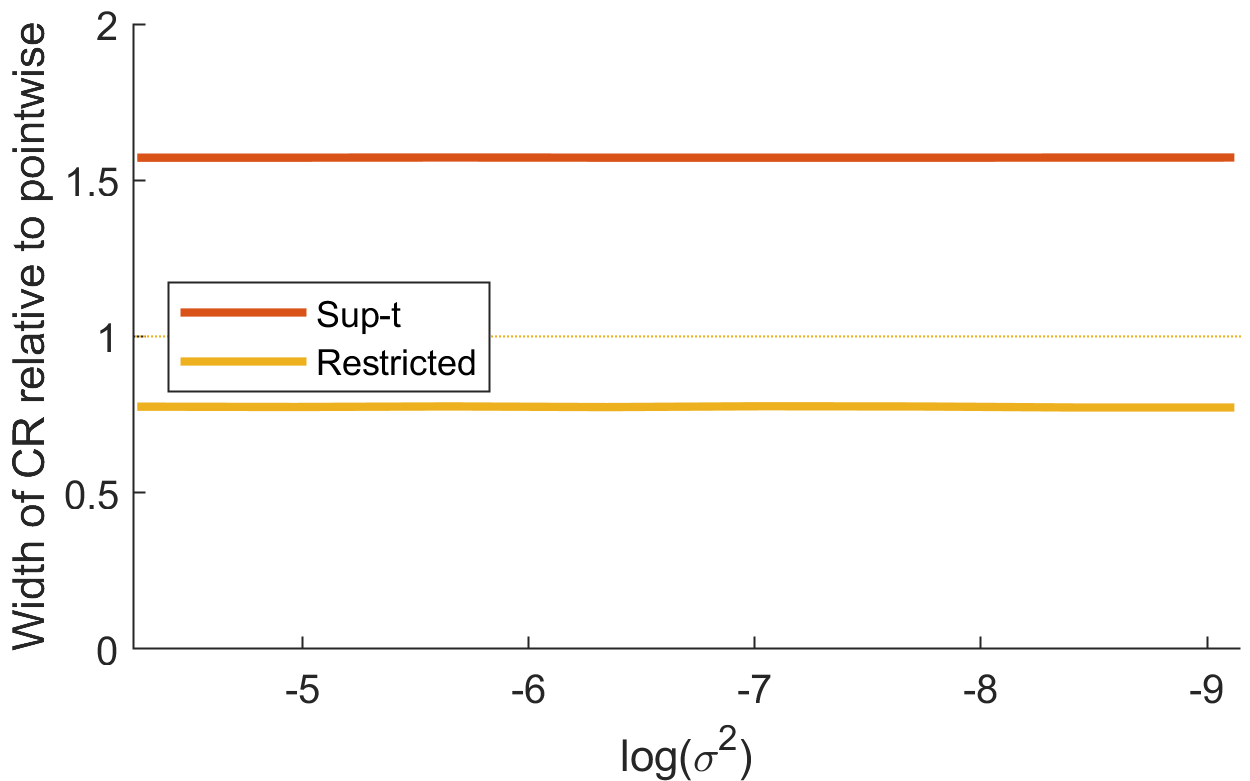}
		\caption{Constant Treatment Effect}
		\label{app-fig:relwidth_constant}
	\end{subfigure}
	\hfill
	\begin{subfigure}[h]{0.49\textwidth}
		\includegraphics[width=\linewidth]{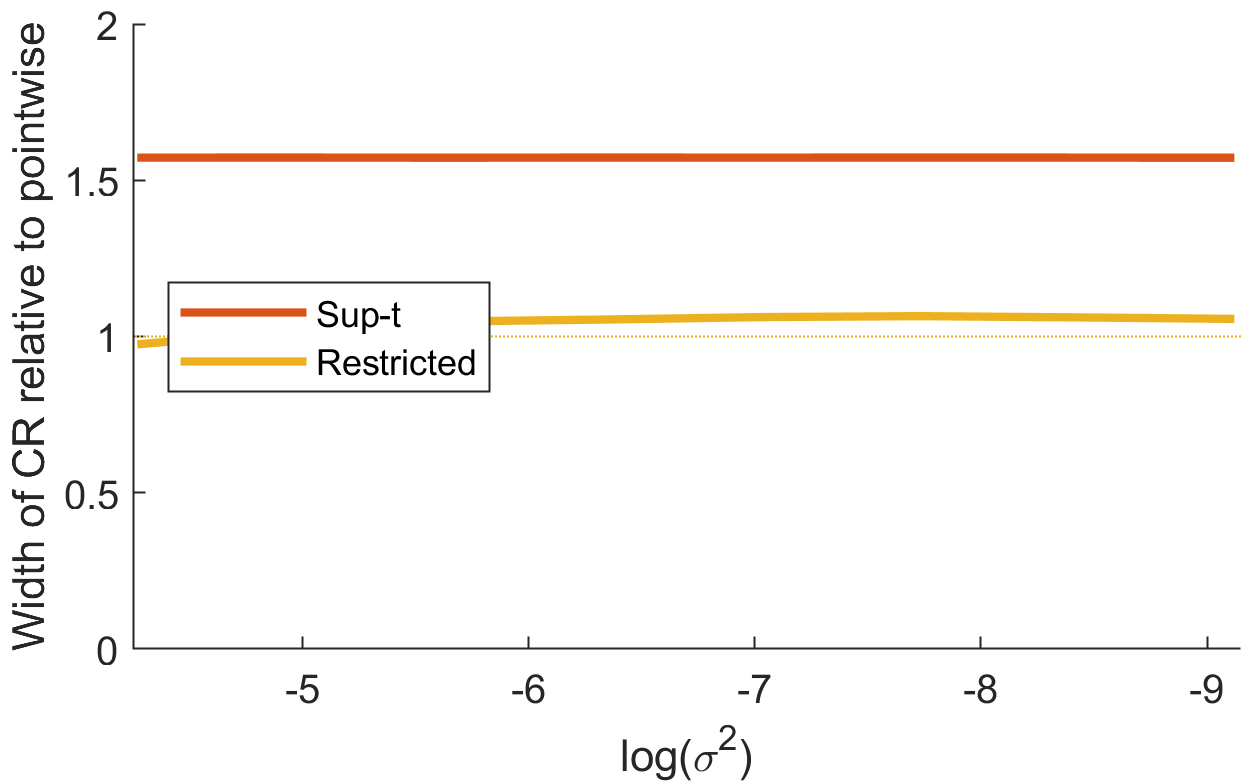}
		\caption{Smooth, eventually flat}
		\label{app-fig:relwidth_quadratic}
	\end{subfigure}
	
	\begin{subfigure}[h]{0.49\textwidth}
		\includegraphics[width=\linewidth]{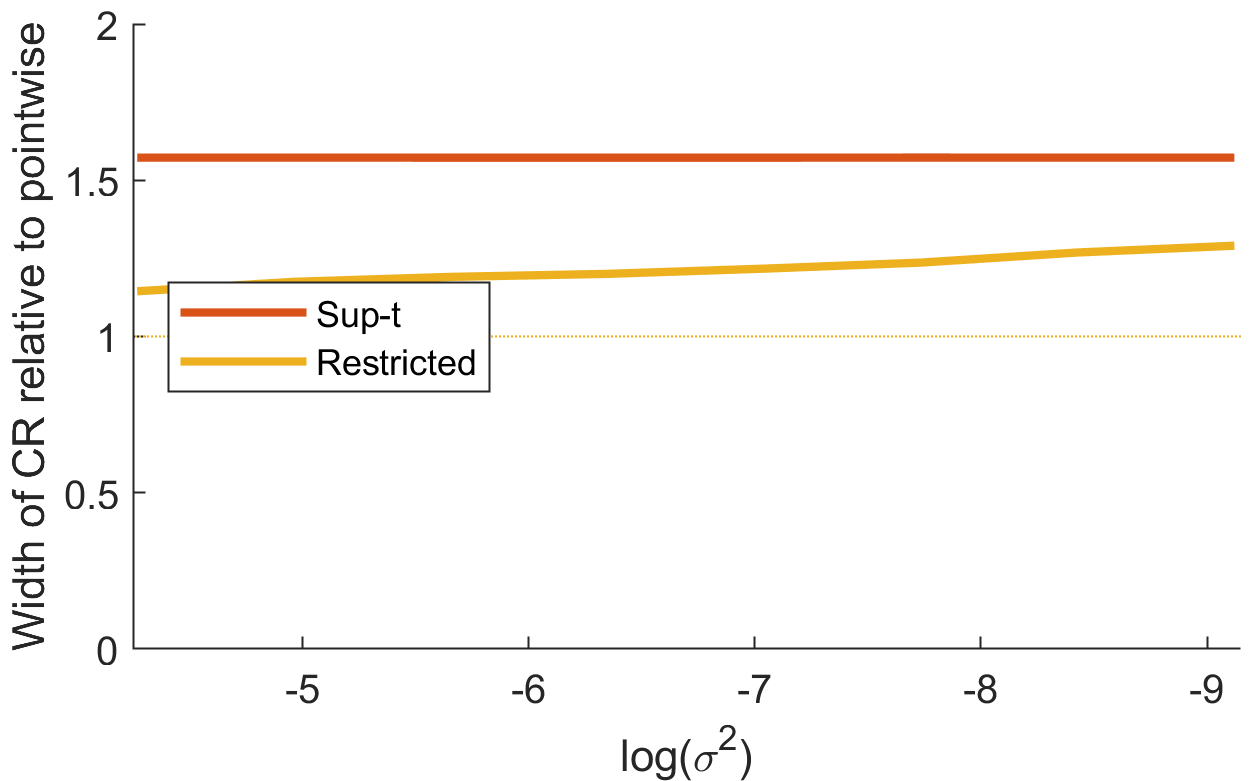}
		\caption{Hump-shaped}
		\label{app-fig:relwidth_noflat}
	\end{subfigure}
	\hfill
	\begin{subfigure}[h]{0.49\textwidth}
		\includegraphics[width=\linewidth]{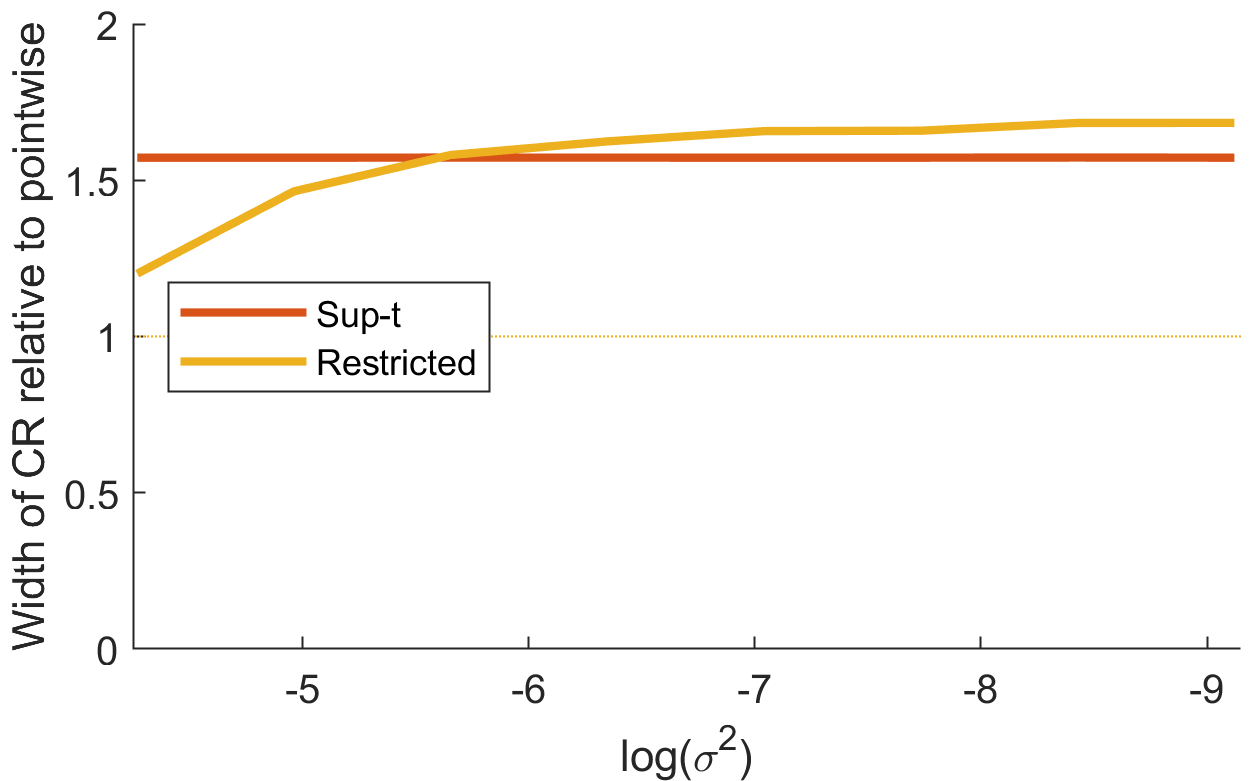}
		\caption{Wiggly}
		\label{app-fig:relwidth_wiggly}
	\end{subfigure}
	\caption[Width of confidence regions under correlation]{Average width of confidence regions relative to pointwise confidence intervals as a function of the amount of noise in the initial estimates $\hat{\beta}$.}
	\label{app-fig:ratio_widths}
\end{figure}

\begin{figure}[tb!]
	\centering
	\begin{subfigure}[h]{0.49\textwidth}
		\includegraphics[width=\linewidth]{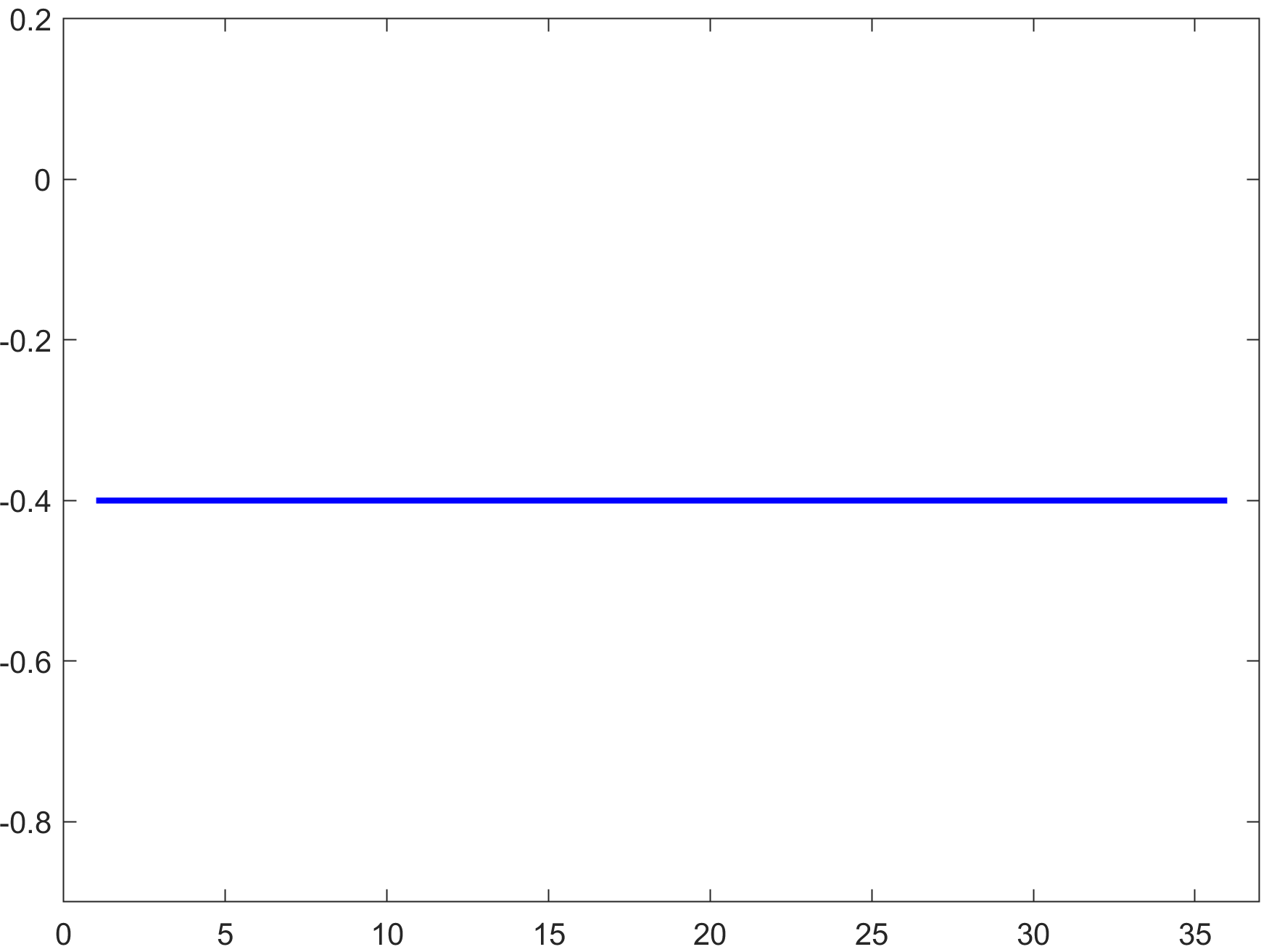}
		\caption{Constant Treatment Effect}
	\end{subfigure}
	\hfill
	\begin{subfigure}[h]{0.49\textwidth}
		\includegraphics[width=\linewidth]{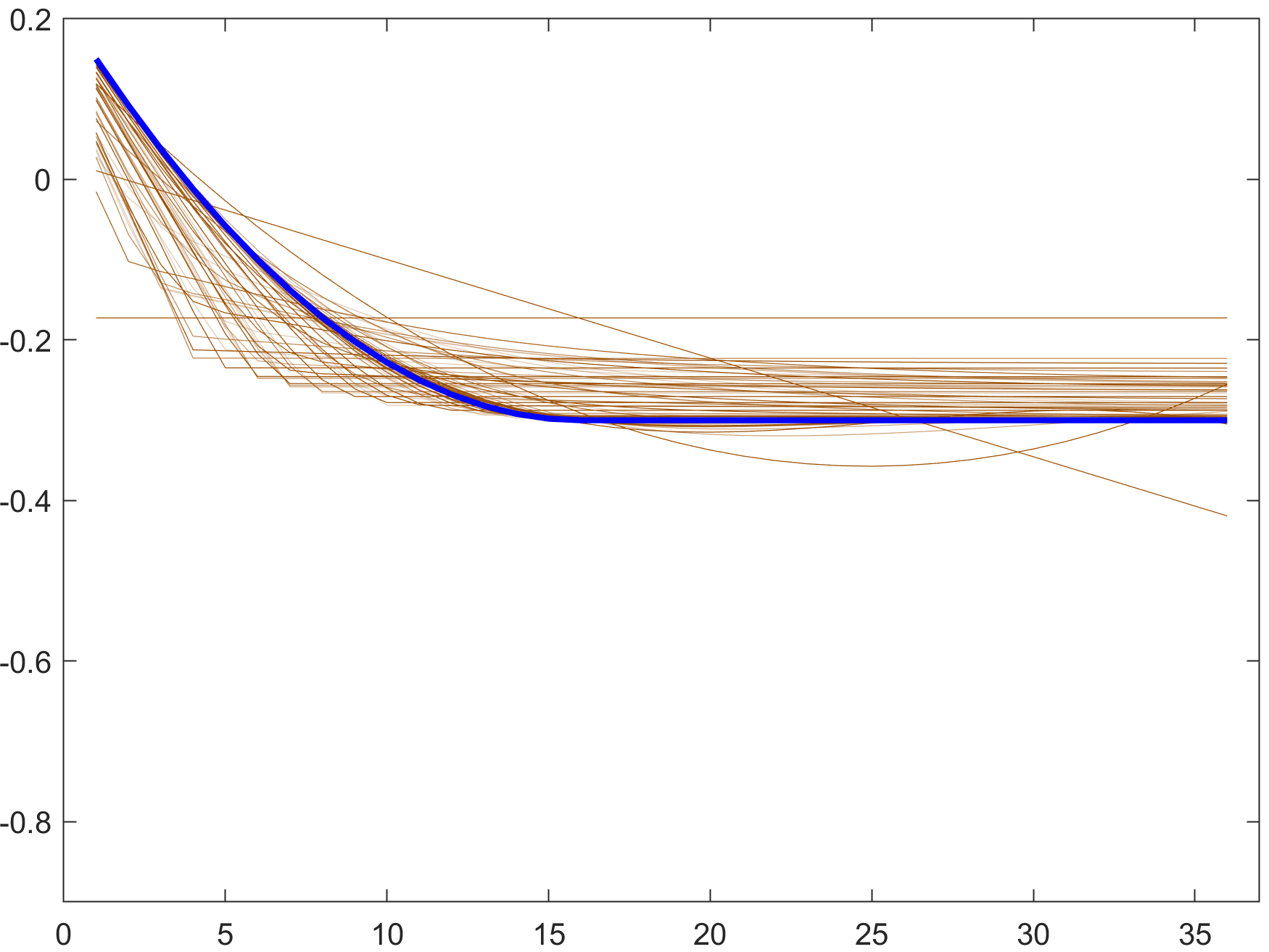}
		\caption{Smooth, eventually flat}
	\end{subfigure}
	
	\begin{subfigure}[h]{0.49\textwidth}
		\includegraphics[width=\linewidth]{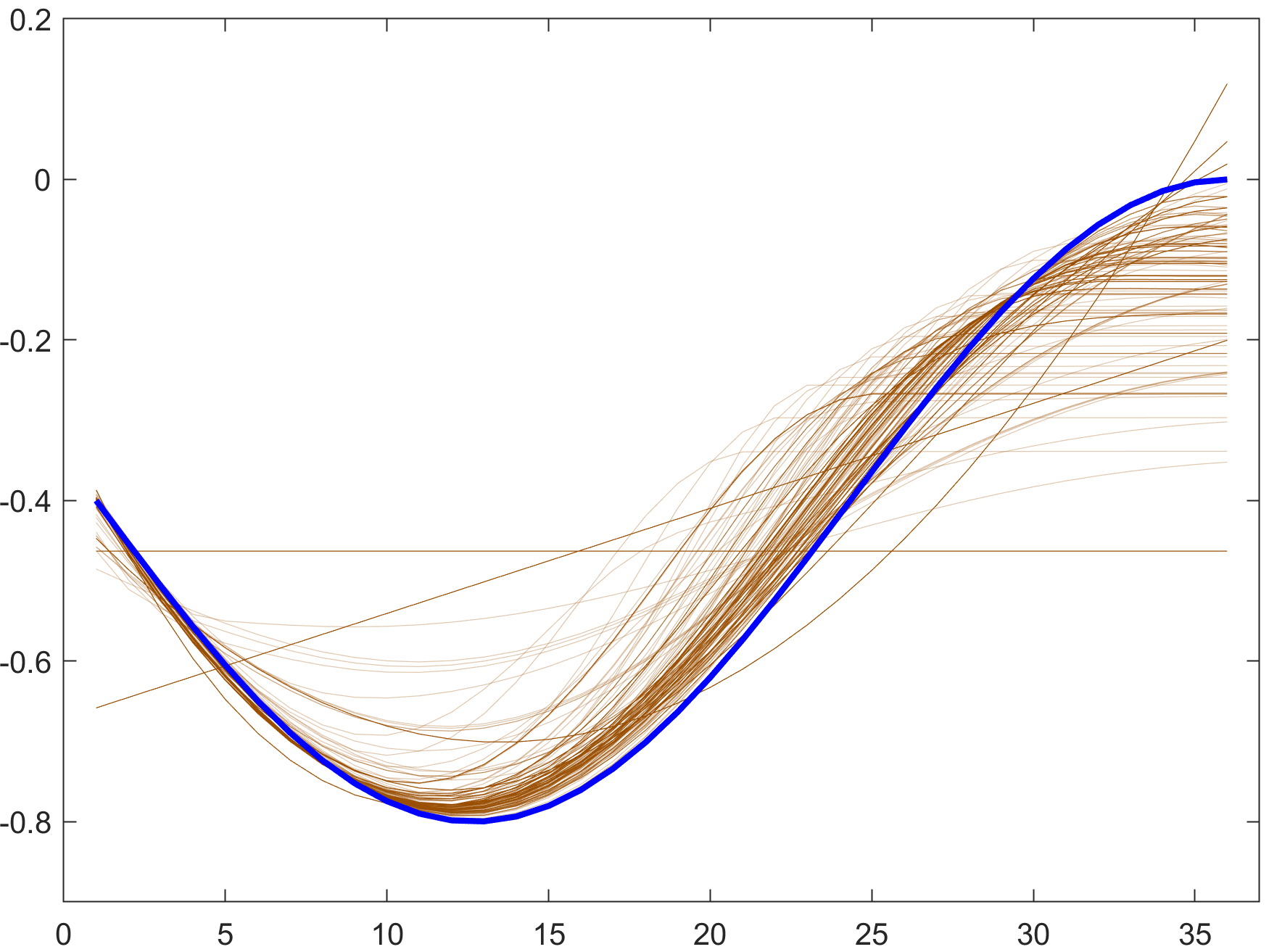}
		\caption{Hump-shaped}
	\end{subfigure}
	\hfill
	\begin{subfigure}[h]{0.49\textwidth}
		\includegraphics[width=\linewidth]{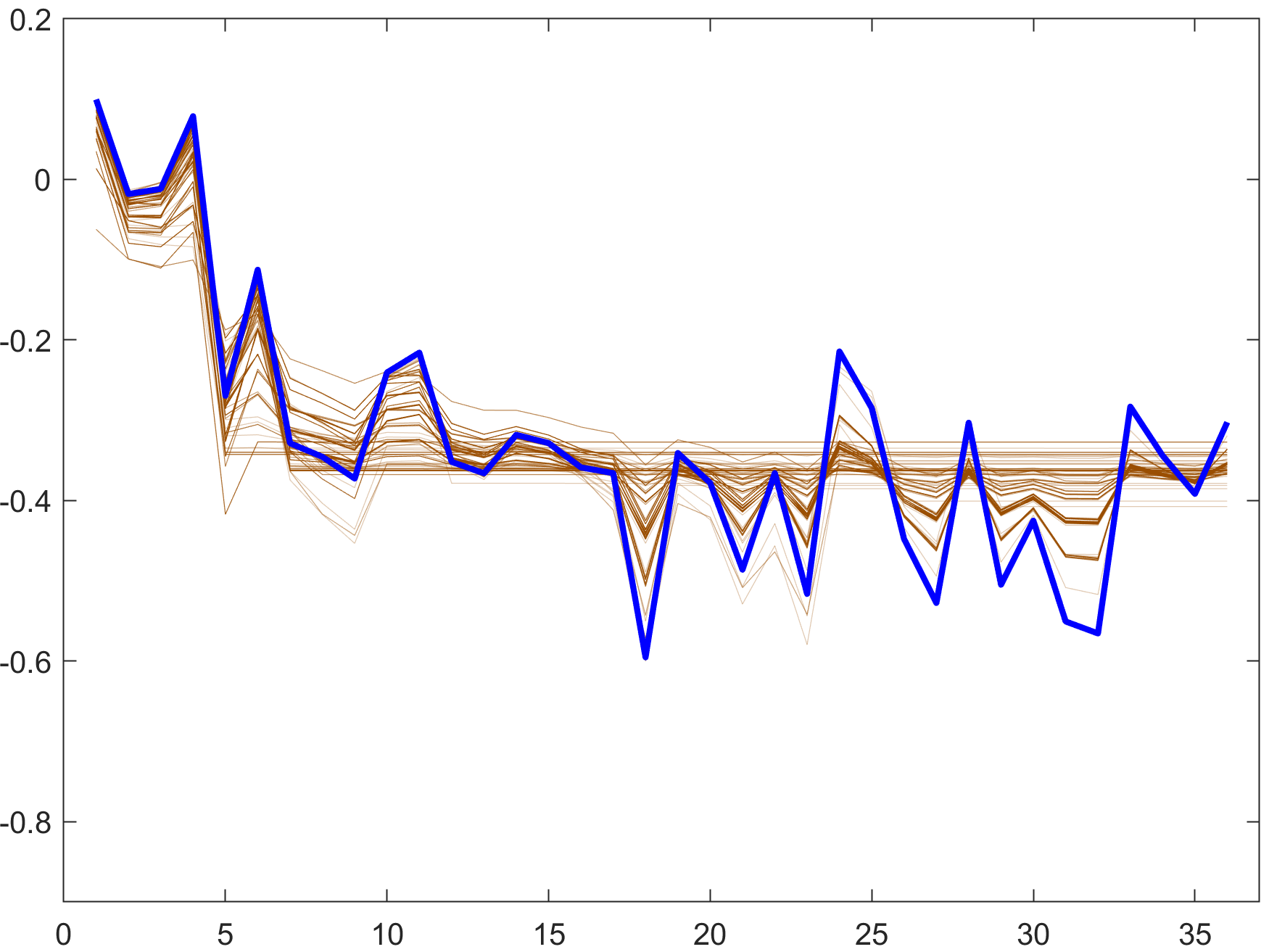}
		\caption{Wiggly}
	\end{subfigure}
	\caption[Illustration of chosen surrogates under correlation.]{Illustration of the 1,000 chosen surrogates for $\sigma^2=0.014$ ($\log(\sigma^2)=-4.27$) under positive correlation in the point estimates.}
	\label{app-fig:surrogates_cor}
\end{figure}

\end{document}